\documentclass[a4paper, 12pt]{article}
\usepackage{jheppub}

\usepackage{bm}        
\usepackage{amssymb,amsthm}   
\usepackage{amsmath} 
\usepackage{mathrsfs}
\usepackage[british]{babel}

\usepackage{empheq}



\newcommand{\be}{\begin{equation}}
\newcommand{\ee}{\end{equation}}
\newcommand{\bea}{\begin{eqnarray}}
\newcommand{\eea}{\end{eqnarray}}

\newcommand{\pd}{\partial}
\newcommand{\ft}[2]{{\textstyle\frac{#1}{#2}}}

 \def\cB{{\cal B}}  
    \def\cH{{\cal H}}
 \def\cI{{\cal I}} \def\cJ{{\cal J}}  \def\cL{{\cal L}}
 \def\cM{{\cal M}}   
  \def\cR{{\cal R}}

 \def\cS{{\cal S}} \def\cT{{\cal T}}

\newcommand{\bn}{\ensuremath{\boldsymbol n}}

\def \uhat {\hat{u}}
\def \Vhat {\hat{V}}

\renewcommand\({\left(}
\renewcommand\){\right)}

\renewcommand{\Re}{\operatorname{Re}}
\renewcommand{\Im}{\operatorname{Im}}

\newcommand{\dd}{{\rm d}}

\newcommand{\rmi}{\mathrm{i}}
\newcommand{\rme}{\mathrm{e}}
\newcommand{\eqscr}{\;\hat{=}\;}


\newtheorem{theorem}{Theorem}[section]
\newtheorem{proposition}{Proposition}[section]
\newtheorem{lemma}{Lemma}[section]
\theoremstyle{definition}

\theoremstyle{remark}

\numberwithin{equation}{section}
\theoremstyle{plain}


\def\finn{\hfill \rule{2.5mm}{2.5mm}} 

\setlength\fboxsep{.3cm}

\newcommand*\widefboxb[1]{\fbox{\hspace{.5em}#1\hspace{.5em}}}


\title{Supertranslations: redundancies of horizon data, and global symmetries at null infinity.}

\author[1]{K. Sousa,}
\author[1]{G. Mil\'ans del Bosch}
\author[2,3]{and B. Reina}
\affiliation[1]{Instituto de Fisica Teorica UAM-CSIC
Universidad Autonoma de Madrid, Cantoblanco, 28049 Madrid, Spain}
\affiliation[2]{School of Mathematical Sciences, Dublin City University, Glasnevin, Dublin 9, Ireland}
\affiliation[3]{Departamento de F'sica Te—rica e Historia de la Ciencia, University of the Basque Country UPV/EHU, Apartado 644, 48080 Bilbao, Spain}
\emailAdd{kepa.sousa@csic.es}
\emailAdd{guillermo.milans@csic.es}
\emailAdd{borja.reina@dcu.ie}


\abstract{
We characterise the geometrical nature of smooth supertranslations defined on a generic non-expanding horizon (NEH) embedded in vacuum.  To this end we consider the constraints imposed by the vacuum Einstein's equations on the NEH structure, and discuss the transformation properties of their solutions under supertranslations.
We present a freely specifiable data set  which is both necessary and sufficient to reconstruct  the full horizon geometry, and is  composed of objects   
 which are invariant under supertranslations.   We conclude that smooth supertranslations do not transform the geometry of the NEH,  and that they should be regarded as pure gauge. Our results apply both to stationary, and  non-stationary states of a NEH, the later ones being able to describe radiative processes taking place on the horizon.  
  As a consistency check we repeat the analysis for BMS supertranslations defined on null infinity, $\cI$.
Using the same framework as for  the NEH we recover the well known result that BMS supertranslations act non-trivially on  the free data on $\cI$. The full analysis is done in exact, non-linear, general relativity.
}  

\date{\today}

\begin{document}
\maketitle

\section{Introduction}

The subject of asymptotic symmetries in gravitational theories has been an active field of research over the last years.  
One of the main motivations to study    the asymptotic structure of the spacetime  boundary is the 
characterisation of candidate theories of quantum gravity. A prominent example is the result in   \cite{Brown:1986nw,Henneaux:1985tv},  that any consistent theory of quantum gravity on a spacetime which is asymptotically $\mathrm{AdS}_3$  should be a conformal field theory. This result has led to major developments in the understanding of the microscopic origin of black hole entropy, as in the case of the BTZ black hole 
\cite{Banados:1992wn,Banados:1998wy,Strominger:1997eq}, and for extremal Kerr black holes in four dimensions \cite{Guica:2008mu}.  The success of this approach has inspired many  attempts to extend these results to the case of astrophysical --non-extremal-- black holes in asymptotically flat spacetimes  (see  \cite{Barnich:2010eb,Barnich:2011ct,Barnich:2011mi,Barnich:2011ty,Koga:2001vq,Chung:2010ge,Majhi:2012tf,Donnay:2015abr,Donnay:2016ejv} and references therein).

The symmetry group  of four dimensional asymptotically flat spacetimes at null infinity $\mathcal{I}$ is the so called Bondi-Metzner-Sachs (BMS) group \cite{Bondi:1962px,Sachs:1962wk,Sachs:1962zza}. This group consists of the semidirect product of the Lorentz group times an infinite dimensional abelian normal subgroup which generalises translations, the so called \emph{supertranslations}.  Supertranslations act on future (past) null infinity by shifting the advanced (retarded) time independently for each point of the sphere at infinity.  These diffeomorphisms are particularly interesting because they act non-trivially on the geometric data at null infinity, which encodes the gravitational degrees of freedom of gravitational radiation. More specifically, the radiative vacua of asymptotically flat spacetimes is infinitely degenerate,  and supertranslations act transitively on it, i.e. all  radiative vacua are connected to each other by supertranslations. The implications of this symmetry group on the gravitational S-matrix 
were first studied in the framework of \emph{asymptotic quantization} \cite{Ashtekar:1981sf,Ashtekar:1981bq,Ashtekar:1987tt} (see also \cite{PROP:PROP19780260902,Frolov1979}), and the relation between supertranslations and Weinberg's \emph{soft graviton theorem}  \cite{weinbergSFT} has been explored in \cite{Strominger:2013jfa,He:2014laa,Pasterski:2015tva,Hawking:2015qqa,Hawking:2016sgy}.

Recently it has been argued that in  black hole spacetimes, if the  event horizon is regarded as an inner  boundary,  it is appropriate to enhance the asymptotic symmetry group (ASG) with those diffeomorphisms which leave invariant the near horizon geometry\footnote{A more complete list of related works can be found in \cite{Hawking:2016sgy}.} \cite{Koga:2001vq,Chung:2010ge,Majhi:2012tf,Hawking:2015qqa,Donnay:2015abr,Donnay:2016ejv,Hawking:2016sgy,Averin:2016hhm,Averin:2016ybl,Hawking:2016msc}. For  stationary black holes the corresponding set of diffeomorphisms has been shown to be a reminiscence of the BMS group on $\mathcal{I}$. As in the case of null infinity, the ASG on the horizon  is enhanced with respect to the isometry group of the background with the addition of supertranslations, which in this case shift the advance (retarded) time of the future (past) horizon.  Following  the analogy with the BMS group on null infinity, it has been conjectured that the horizon supertranslations may act non-trivially on the black hole geometry, transforming the black hole to a physically inequivalent one. If this was the case, the  ``supertranslation hair'' could provide some  insight on the microscopic degrees of freedom associated to the black hole entropy.   
 Although these ideas are certainly appealing, the physical nature of the asymptotic symmetry group defined on non-extremal horizons  is still unclear, and the proposal remains controversial  \cite{Mirbabayi:2016axw,Bousso:2017dny,Bousso:2017rsx}. 

On the one hand there are still some discrepancies on the structure  of the asymptotic symmetry group found in different analyses \cite{Koga:2001vq,Chung:2010ge,Majhi:2012tf,Donnay:2015abr,Eling:2016xlx,Setare:2016msj,Donnay:2016ejv}.  These differences could be attributed to alternative choices of boundary conditions for  the metric tensor near the horizon, but the geometric interpretation of the discrepancies is not well understood.   The main difficulty to compare different analyses is that they are based on a coordinate dependent approach. In practice, the boundary conditions  are defined as restrictions on a explicit coordinate expression of the metric tensor in the neighbourhood of the horizon. 
This complicates the comparison between the various works as, in general, a given set of boundary conditions does not retain the same form in different coordinate systems.

On the other hand it remains an open question to determine the effect of the horizon  ASG on the geometry, and in particular  whether if these diffeomorphisms act non-trivially on the physical state of the black hole. This problem was recently addressed in \cite{Hawking:2016sgy}, where the authors performed  a hamiltonian analysis to characterise  the phase space of a  Schwarzschild spacetime.  According to this study  the phase space of the Schwarzschild spacetime is infinite dimensional, and supertranslations  act non-trivially on it. This  conclusion contrasts with the classical result that there is only a three-parameter family of stationary black hole solutions in Einstein-Maxwell theory. 
Actually, the  hamiltonian analysis of  spacetimes containing isolated horizons, such as the  Schwarzschild spacetime, had  been considered before in \cite{Ashtekar:1998sp,Ashtekar:1999wa,Ashtekar:2001is,Ashtekar:2001jb}. In those works  it was shown that the corresponding phase space could be   infinite dimensional in the presence of radiation, but in the stationary case it was argued that the physical state is completely determined by the standard quantities: ADM mass, angular momentum and electric charge.

In the present paper we will study   horizon supertranslations  defined on a  generic non-expanding  horizon (NEH) which is embedded in  vacuum \cite{Ashtekar:1998sp,Ashtekar:1999wa,Ashtekar:2000hw,Ashtekar:2001is,Ashtekar:2001jb}. A non-expanding horizon is a generalisation of a killing horizon which admits  gravitational radiation propagating arbitrarily close to it, and even \emph{on} the NEH itself (but not crossing it).
Our main objective is to provide a coordinate invariant definition of horizon supertranslations, and then to characterise their effect on the NEH geometry.   For this purpose we have taken as a guide the geometric method used by R. Geroch  \cite{Geroch1977} and A.  Ashtekar \cite{Ashtekar:1981hw,Ashtekar:1981bq,Ashtekar:1981sf,Ashtekar:1987tt,Ashtekar:2014zsa} to study the structure and dynamics of null infinity. Following these works,  we describe the  horizon in terms of an  abstract  three-dimensional manifold separated from the spacetime, and which  is   diffeomorphically identified with the horizon. In this framework, the information about  the intrinsic and extrinsic geometry of the  horizon is encoded in tensor fields living on the abstract manifold. The advantage of this method is that the geometric data of the horizon is isolated from the rest of the spacetime, and  moreover, all the gauge redundancies are well characterised. 

 To clarify the geometrical nature of supertranslations we have studied the set  spacetime diffeomorphisms which preserve the horizon as a set of points, and leave invariant the metric tensor on it.  Note that these diffeomorphisms, which we call for short \emph{hypersurface symmetries}, are defined in a coordinate invariant way, i.e. without involving an explicit coordinate expression for the spacetime metric tensor. After checking that horizon supertranslations belong to this class of diffeomorphisms,  we have studied the behaviour of  the complete horizon geometry (including the extrinsic geometry) under an arbitrary hypersurface symmetry.   Our analysis shows that the    effect of these diffeomorphisms on the  horizon can be identified with a gauge redundancy  of the description. In other words, hypersurface symmetries, and in particular supertranslations, leave  invariant both  \emph{the intrinsic  and  the extrinsic} geometry of the horizon up to a gauge redundancy of the description.   Note, however, that  this result is not sufficient to claim that supertranslations act trivially on the  geometry of the horizon. Indeed, supertranslations could be large gauge transformations, i.e.   \emph{global symmetries}, which      can change the dynamical state of a system \cite{Benguria:1976in,Henneaux:1985tv,Brown:1986nw}. Thus, we still need to identify the dynamical degrees of freedom of the horizon, or equivalently, a data set which is both \emph{necessary and sufficient} to reconstruct the full NEH geometry, and then we have  to determine how it transforms   under supertranslations.

In order to identify the dynamical degrees of freedom of the NEH we follow the geometric method of \cite{10230751794} and \cite{Ashtekar:2001jb} (see also \cite{Mars:2013qaa,Gourgoulhon:2005ng}), which consists on   studying the constraints imposed by the vacuum Einstein's equations on its geometric data. By solving these constraint equations it is possible extract a set of  freely specifiable quantities   which contain all the information necessary to reconstruct the NEH geometry \cite{Ashtekar:2001jb}. Thus, the resulting  \emph{free horizon data set} encodes the dynamical degrees of freedom of the horizon, but in general it also involves some gauge redundancies.  In the present work we have reconsidered the analysis in \cite{Ashtekar:2001jb}  for non-expanding  horizons,  discussing in detail the  treatment of the  gauge redundancies, and specifically of supertranslations. 

The main result of this work is the identification of a free data set which does not involve any unfixed gauge degree of freedom and, in particular, which is composed of objects which are \emph{invariant under supertranslations}. 
 An immediate consequence of our result is that  supertranslations do not affect the NEH geometry, as it can be encoded entirely  in quantities 
   which are  invariant under these diffeomorphisms.    In particular     the stationary state of the horizon    is completely determined by its intrinsic geometry  and its angular momentum aspect,  and neither of the two  transform  under horizon supertranslations.  The supertranslation invariant data set is also sufficiently general to represent non-stationary states of the NEH, and thus, it can describe radiative processes occurring at the horizon.     It is important to stress that,  to avoid excluding physically allowed configurations of the horizon, we have not eliminated the freedom to perform supertranslations using gauge fixing conditions.  Instead, guided by the treatment of null infinity  \cite{Ashtekar:1981hw}, we have dealt with the redundancy describing the NEH geometry in terms of variables which are invariant under supertranslations.
    
As a consistency  check we have repeated  the analysis of null infinity done in \cite{Ashtekar:1981hw} using the same framework as for non-expanding horizons. In particular, following \cite{Ashtekar:1981hw}, we have characterised the solutions to the constraint equations of null infinity in terms of variables invariant under \emph{BMS supertranslations}, and we have reproduced the proof of the  degeneracy of the radiative vacuum of asymptotically flat spacetimes. In other words, we find that the  solution space of the constraint equations of $\cI$  in the absence of  radiation is infinite dimensional.  Since the analysis is done in terms of variables which are free of any gauge redundancies, these degenerate vacua must be regarded as physically distinct, and yet they can be shown to be connected to each other   by supertranslations. Therefore, we recover the well known result that BMS supertranslations --contrary to the case of horizons-- act non-trivially on the free data of null infinity, i.e. they represent  a \emph{global symmetry} of the constraint equations.\\

This article is organised as follows. In section \ref{sec:Hypersurfaces}   we  review the formalism to describe the  geometry of null hypersurfaces, together with the constraint equations that restrict the  corresponding geometric data. In section \ref{sec:redundancies} we characterise in detail the gauge redundancies inherent to our description, and we discuss the effect of supertranslations on the horizon geometry.  In section \ref{sec:NEH} we analyse the constraint equations of a non-expanding horizon, and present  a free data set to describe its geometry which is composed of quantities   invariant under supertranslations. In section \ref{sec:Scri} we consider the constraint equations  for null infinity,  and we  reproduce the proof of the  degeneracy of the radiative vacuum of asymptotically flat spacetimes using our framework.  Finally in section \ref{sec:discussion}  we  discuss our results.

\section{Dynamics of null hypersurfaces}
\label{sec:Hypersurfaces}

In this section we will review the geometry and dynamics of null hypersurfaces, what will also serve to present the relevant formulae. A more detailed overview of this subject can be found in \cite{BlauGR,Gourgoulhon:2005ng}, while the specific framework used here is based on  \cite{Mars:2013qaa}. 

We begin setting our notation and general conventions. We will work with  $(3+1)-$dimensional spacetimes  $(\cM,g)$, described by a manifold $\cM$ equipped with a metric tensor $g$ with signature signature $(-,+,+,+)$. We will denote the spacetime coordinates by $\{x^\mu\}$, with the index running over $\mu=0,1,2,3$. 
The  Riemann tensor is defined in terms of the Ricci identity as follows
\be
\nabla_{[\mu} \nabla_{\nu]} V^\sigma \equiv  \nabla_\mu  \nabla_\nu V^\sigma - \nabla_\nu  \nabla_\mu V^\sigma = R^\sigma_{\phantom{\sigma}\rho\mu \nu} V^\rho,
\ee
where $V^\mu$ is an arbitrary  vector field\footnote{We will use the shorthands  $W_{[\mu\nu]} = W_{\mu\nu} -W_{\nu\mu}$ and $W_{(\mu\nu)}= W_{\mu\nu} +W_{\nu\mu}$ to denote the symmetrisation and anti-symmetrisation of indices.}, the Ricci tensor is given by  $R_{\mu\nu} = R^\sigma_{\phantom{\sigma}\mu \sigma\nu}$, and the scalar curvature by $R= R^\mu_\mu$. The Riemann curvature can be split in its trace part, characterised by the Schouten tensor $S_{\mu\nu} \equiv R_{\mu \nu} - \ft16 Rg_{\mu\nu}$, and its  traceless part, encoded in the  Weyl  tensor $C_{\mu\nu\rho\sigma}$   
\be
R_{\sigma\rho\mu \nu } = C_{\sigma\rho\mu\nu} + \ft12(g_{\sigma[\mu} S_{\nu] \rho} - g_{\rho[\mu} S_{\nu]\sigma}).
\ee
We will use  geometrized units $c=G=1$  so that the Einstein's equations read
\be
G_{\mu\nu} \equiv R_{\mu\nu} - \ft12 R g_{\mu\nu} = 8 \pi T_{\mu\nu}.
\label{eq:Einstein}
\ee
In regions where the spacetime geometry is consistent with the vacuum Einstein's equations, $R_{\mu\nu}=0$,  the Schouten tensor must be zero and thus the Riemann curvature is completely determined by the Weyl tensor   $R_{\mu\nu\rho\sigma} = C_{\mu\nu\rho\sigma}$.

\subsection{Geometric data of null hypersurfaces}
\label{sec:nullGeometry}

In this section we will review the geometry and dynamics of a null hypersurface $\cH$.  Bearing in mind the case of  black hole horizons and null infinity, we will assume the hypersurface to have the topology $\cH \cong\mathbb{R}\times \mathbb{S}^2$.
We will describe the hypersurface   as the embedding of an abstract three dimensional manifold $\Sigma\cong\mathbb{R}\times \mathbb{S}^2$ on the spacetime via the   diffeomorphism $\Phi: \Sigma \to \cM$,   so that $\Phi(\Sigma) = \cH$  (see \cite{Mars:2013qaa}). The manifold $\Sigma$ acts as a diffeomorphic copy of $\cH$ detached from the spacetime, and it is introduced for convenience in order to isolate the dynamical degrees of freedom (i.e. the free geometric data) of the hypersurface.  

To characterise the intrinsic and extrinsic geometry of the hypersurface it is convenient to introduce a basis of the spacetime tangent space adapted to $\cH$.   For this purpose  let us first   define a coordinate system for the abstract manifold   $\{\xi^a\}$, with the index running over $a=1,2,3$.  The corresponding coordinate  basis of  the tangent space $T_p\Sigma$ is then  given by $\hat \cB \equiv \{\hat e_a =\pd_{\xi^a}\}$, with $p\in \Sigma$. The elements of the basis $\hat \cB$ are identified with a set of linearly independent spacetime vectors  tangent to the hypersurface $e_a\equiv \dd\Phi(\hat e_a)$, via the pushforward map $\dd \Phi$   associated to $\Phi$. Then, we can form a  basis $\cB=\{e_a,\ell\}$ of the spacetime tangent space completing the  set of vectors $\{e_a\}$ with any vector $\ell$ transverse to $\cH$, the so called \emph{rigging}.

The spacetime metric over the hypersurface can be characterised in terms of  
 a  set of tensor fields over $\Sigma$ which, by definition, have the following components on the basis $\hat \cB$  
 \be
\gamma_{ab}\equiv g(e_a,e_b)|_{\Phi(p)}, \qquad  \ell_a \equiv g(\ell,e_a)|_{\Phi(p)}, \qquad \ell^{(2)} \equiv g(\ell,\ell)|_{\Phi(p)}.
\label{eq:metricData0}
\ee
These fields  encode the scalar products of the elements in the  spacetime basis $\cB=\{e_a,\ell\}$, and in particular $\gamma_{ab}$ represents the induced metric  on $\cH$. This set of fields are known as the \emph{hypersurface metric data}.

In order to reduce the  large degree of gauge freedom in this description, namely  the choice of coordinates on $\Sigma$ and the specification of the rigging vector $\ell$, it is useful to introduce some simplifying conventions.  The normal one-form $\bn$ and the normal vector $n$  to the hypersurface are determined by the conditions  $\bn(e_a)=0$ and $n= g^{-1}(\bn,\cdot)$ respectively, and thus they are defined up to $\bn \to \lambda \bn$ and $n \to \lambda n$, where $\lambda$ is a scalar field on $\cH$.
Since the normal vector to a null hypersurface is null, i.e. $g(n,n) = \bn(n)=0$,  $n$ is also tangent to $\cH$,  and we will choose it to be future directed.  Therefore we can partially fix the coordinate system on the abstract manifold  $\Sigma$  defining $\xi^1$ so that $e_1=n$, and parametrising the $\mathbb{S}^2$ component of the hypersurface with the coordinates $\xi^M$, where $M=\{2,3\}$. Moreover we will require the rigging vector $\ell$ to be null $g(\ell,\ell)=0$, and we will fix its normalisation and  direction with respect to the vectors $\{e_a\}$ imposing the conditions $\bn(\ell)=1$ and $g(\ell,e_M)=0$ everywhere on $\cH$, what can be expressed equivalently as $g(\ell,e_a)=\delta_a^1$. 
With these choices the explicit coordinate expressions for the hypersurface metric data in the basis $\hat \cB=\{\hat e_1,\hat e_M\}$ read
\be
\gamma_{ab} = \begin{pmatrix}
0& 0 \\
0 & q_{MN}
\end{pmatrix}, \qquad \ell_a  = (1,0,0), \qquad \ell^{(2)} =0. 
\label{eq:metricData}
\ee
Here $q_{MN} \equiv g(e_M,e_N)|_{\Phi(p)}$ represents the induced metric on the spatial sections $\cS_{\xi^1} \equiv \Sigma|_{\xi^1}$ of the horizon, which are defined by the level sets of the null parameter $\xi^1$.   In the following we will denote the elements of the basis of the spacetime tangent space 
by  $\cB=\{n, \ell, e_M\}$. For later convenience, we also write here the following  identity satisfied by the elements of $\cB$
\be
g^{\mu \nu} = \ell^{(\mu} n^{\nu)} + q^{MN} e_N^\mu  e_M^\nu.
\label{eq:inverseG}
\ee
The Levi-Civita connection at points of the hypersurface can be characterised specifying its action on the elements of   
 the basis $\cB$ 
\begin{flalign}
\nabla_{n} n = \kappa n ,  \qquad  &\nabla_{M} n = \Omega_M n + \Theta_M^{\phantom{N}N} e_N,\nonumber\\
\nabla_{n} e_M =  \Omega_M  n + \Theta_M^{\phantom{N}N} e_N, \qquad &\nabla_{M} e_N =- \Theta_{MN} \ell - \Xi_{MN} n +\overline  \Gamma_{MN}^L e_L, \nonumber \\
\nabla_{n} \ell =-\kappa \ell  -\Omega^M e_M,  \qquad &\nabla_{M} \ell = -\Omega_M \ell  +\Xi_{M}^{\phantom{M}N} e_N,
\label{eq:connectionCoeff}
\end{flalign}
where the indices $M,N$ are raised and lowered with $q_{MN}$ and its inverse $q^{MN}$, $\nabla_n \equiv n^\mu\nabla_\mu$ and $\nabla_M\equiv e^\mu_M \nabla_\mu$. This is the most general form of the connection coefficients consistent with our conventions \eqref{eq:metricData}, as it can be easily  derived in the framework of \cite{Mars:2013qaa}. In particular it can be seen that the integral curves of the normal vector $n$ are  null geodesics parallel to the hypersurface, and the inaffinity parameter $\kappa$ is referred as the  \emph{surface gravity}  in the case of horizons. The surface gravity, together with the \emph{Hajicek one-form} $\Omega_M$ and $\Xi_{MN}$ can be conveniently encoded in the tensor field  $Y_{ab}$ defined  in terms of the basis $\hat \cB$ of $T_p\Sigma$ as follows
\be
Y_{ab}\equiv\ft12 \cL_\ell g (e_a,e_b)|_{\Phi(p)}=  \begin{pmatrix}
-\kappa & -\Omega_M\\
-\Omega_N & \Xi_{MN}
\end{pmatrix}.
\label{eq:tensorY}
\ee
The set of coefficients $\Xi_{MN}= \ft12e^\mu_M e^\nu_N \nabla_{(\mu} \ell_{\nu)}$ characterises the components of the Levi-Civita connection associated to directions  which are all transverse  to the normal vector $n$, and 
thus we will refer to it as the \emph{transverse connection}. For later convenience we will also introduce the rotation one-form $\omega_a$ which is defined by
\be
\omega_a \equiv - Y_{ab}\, \hat n^b = (\kappa,\Omega_M),
\label{eq:defRot}
\ee
where $\hat n$ is the vector on $T_p \Sigma$ which is identified with the null normal via the embedding $\dd \Phi(\hat n) \equiv n$.  
The remaining connection coefficients, $\Theta_{MN}$ and 
$\overline{\Gamma}_{BA}^C$, are fully determined by the intrinsic geometry via the equations   
\be
\ft{1}{2}\pd_n q _{MN}= \Theta_{MN}, \qquad \overline{\Gamma}_{MN}^L = \ft12 q^{LP} (\pd_M q_{NP} + \pd_N q_{MP} - \pd_P q_{MN}),
\ee
where  $\pd_n q \equiv \pd_{\xi^1} q$. Thus,  $\overline{\Gamma}_{BA}^C$ represents the Levi-Civita connection compatible with $q_{MN}$, and the quantity $\Theta_{MN}$ is   known as the \emph{second fundamental form}.\\  

Summarising, the intrinsic and extrinsic  geometry of a null hypersurface can be fully  encoded in the following set of fields defined on $\Sigma$
\be
\boxed{
\text{Hypersurface data:} \qquad  \mathscr{D}\equiv (q_{MN}, \quad \kappa, \quad \Omega_M, \quad \Xi_{MN}).}
\label{eq:dataDef}
\ee
Our choice of coordinates for $\Sigma\cong\mathbb{R}\times\mathbb{S}^2$, with $\xi^1$ running along the null direction $n$ of the hypersurface, and $\xi^M$ parametrising the sections with constant $\xi^1$, $\cS_{\xi^1}\cong\mathbb{S}^2$, allows to picture the data $\kappa$, $\Omega_M$ and $ \Xi_{MN}$ as tensor fields living on a manifold with the topology of a sphere and a Riemannian metric $q_{MN}$. In this picture, the dependence of these fields on the null coordinate $\xi^1$ is interpreted as a ``temporal'' evolution \cite{damour,Price:1986yy,Thorne:1986iy}. As we shall see in the next section,  the evolution of  these fields  along the null direction is not completely free, as it is restricted by the geometry of the ambient space.  Moreover, as we shall see in sections \ref{sec:redundancies} and \ref{sec:NEH}, this description still involves some residual gauge redundancies, which lead to further constraints on \eqref{eq:dataDef} after gauge fixing.\\

In the following we will often identify the abstract manifold $\Sigma$ with  the hypersurface $\cH$, and we will leave implicit the pull-back $\Phi^*$ operation in the formulae  to simplify the notation.

\subsection{Constraint equations of null hypersurfaces}

The  spacetime connection on the hypersurface, given by \eqref{eq:connectionCoeff},  must be consistent with the geometry of the ambient space 
where $\cH$ is embedded. This requirement leads to the \emph{hypersurface constraint  equations} which relate the connection coefficients in \eqref{eq:connectionCoeff} with certain projections of the Ricci  tensor $R_{\mu\nu}$ at points of the hypersurface $\cH$. When expressed in terms of the hypersurface data $\mathscr{D}$ \eqref{eq:dataDef} these mathematical identities take the form of a set of equations of motion, which we will now review. A formal analysis of these equations can be found in \cite{Mars:1993mj,Mars:2013qaa}, and their application to null hypersurfaces is reviewed in detail in \cite{Gourgoulhon:2005ng}. Since our  conventions do not match the ones in  these references, for completeness we have included a derivation of these formulae in appendix \ref{app:constraints}. The relevant equations are:
\begin{itemize}
\item \emph{Raychaudhuri equation}:
\be
\hspace{-1cm}\boxed{
\pd_n \theta - \theta \kappa +  \Theta_{MN} \Theta^{MN} =J_{nn}.
\label{eq:Raychad}
}
\ee
\item \emph{Damour-Navier-Stokes equations}:
\be
\centering
\hspace{-1cm}\boxed{
\pd_n \Omega_M  - \pd_M \kappa + \theta \Omega_M  + D_N \Theta^N_{M} - D_M \theta =-J_{nM}. 
\label{eq:NS}}
\ee
\item Equation for the \emph{transverse connection}:
\begin{empheq}[box=\widefboxb]{align}
\begin{split}
\pd_n \Xi_{MN} =& - \ft12 D_{(M} \Omega_{N)} -\Omega_M \Omega_N - (\kappa + \ft12 \theta) \Xi_{MN}  + \Xi_{P(M} \Theta^P_{N)}\\
& -\ft12 \Theta_{MN} \theta^{\ell} 
+ \ft14 \cR q_{MN} +\ft12 J_{MN}.
\label{eq:Xi}
\end{split}
\end{empheq}
\end{itemize}
Here  $\theta\equiv \Theta_M^M$ is the \emph{expansion} of the null hypersurface,  and $\theta^{\ell} \equiv \Xi^M_M$.  The symbols $D_M$ and $\cR$ denote the Levi-Civita connection of $q_{MN}$ and the associated Ricci scalar respectively.  The equations also involve  the tensor $J_{ab}$ defined on $\Sigma$ in terms of its components in the basis $\hat \cB=\{\hat e_a\}$  
 \be
J_{nn} \equiv - \mathbf{R}(n, n)|_{\Phi(p)}, \quad J_{nM} \equiv - \mathbf{R}(n,e_M)|_{\Phi(p)}, \quad  
J_{MN} \equiv -\mathbf{R}(e_M, e_N)|_{\Phi(p)},
\label{eq:sourceTerms}
 \ee
where  $\mathbf{R}(e_a,e_b) = R_{\mu\nu} e^\mu_a e^\nu_b$ represent projections of the spacetime  Ricci tensor.   
The constraint equations (\ref{eq:Raychad}-\ref{eq:Xi}) simplify considerably in the particular case of \emph{non-expanding horizons} which are embedded in vacuum.  If  the  Ricci tensor $R_{\mu\nu}$ is consistent with the Einstein's field equations 
\eqref{eq:Einstein}, the quantities \eqref{eq:sourceTerms} can be associated to projections of the energy momentum tensor on the basis $\cB$. Therefore they are all vanishing $J_{nn}=J_{nM} = J_{MN}=0$ in vacuum $T_{\mu\nu}=0$.  By definition, a non-expanding horizon is a null hypersurface which has a vanishing expansion $\theta$ \cite{Ashtekar:1998sp,Ashtekar:2001jb} (see also \cite{Gourgoulhon:2005ng}), and then, 
due to the vacuum 
Raychaudhuri equation  \eqref{eq:Raychad}, we must have  \begin{empheq}[box=\widefboxb]{align}
\text{Non-expanding horizon:} \quad  \theta=0 \quad \Longrightarrow\quad  \ft12 \pd_n q_{MN} =  \Theta_{MN}=0.
\label{eq:nonExp}
\end{empheq}
As a consequence,  the spatial  metric $q_{MN}$ induced on the  sections of a non-expanding horizon $\cS_{\xi^1}$ is independent on the null coordinate $\xi^1$. Moreover, for a NEH the equations \eqref{eq:NS} and \eqref{eq:Xi} reduce to 
 \bea
\pd_n \Omega_M   &=& \pd_M \kappa,  \label{eq:DNS2}\\
\pd_n \Xi_{MN} &=& - \ft12 D_{(M} \Omega_{N)}  - \kappa \Xi_{MN} -\Omega_M \Omega_N + \ft14 q_{MN} \cR,
\label{eq:transverseCon2}
\eea
where we have already imposed the vacuum Einstein's equations.

As we shall review in section  \ref{sec:Scri}, null infinity $\cI$ can be described as an non-expanding hypersurface with $\Theta_{MN}=0$  using  Penrose's conformal framework, and its structure is also constrained by (\ref{eq:Raychad}-\ref{eq:Xi}), which are mathematical identities  satisfied by any null hypersurface.   However, non-expanding horizons and null infinity have very different dynamical behaviour, and in particular  the constraints (\ref{eq:DNS2}) and (\ref{eq:transverseCon2}) are not valid for $\cI$. One of the reasons is  that the geometric data of $\cI$ is only defined up to conformal transformations, what requires introducing appropriate 
 equivalence classes of data sets  \cite{Ashtekar:1981hw}.  The other important difference with NEHs is that the  Ricci tensor defined on the conformal completion of spacetime does not satisfy the ordinary Einstein's equations,  and thus a specific treatment is required for $\cI$. 

\subsection{Newman-Penrose null tetrad and Weyl scalars}
\label{sec:WeylScalars}

The set of constraint equations (\ref{eq:Raychad}-\ref{eq:Xi}), ensures the consistency of the connection coefficients in \eqref{eq:connectionCoeff} with the trace part of the ambient-space Riemann tensor, i.e. the Ricci tensor $R_{\mu\nu}$. Therefore, it is possible to obtain further constraints  requiring that the extrinsic geometry of $\cH$ to be  compatible with the traceless part of the curvature, that is, with  the  Weyl tensor $C_{\mu\nu\rho\sigma}$.  The Weyl tensor has 10 independent components which can be collected in the form of 5  independent complex scalars $\Psi_n$, with $n=0,\ldots, 4$, the so called  \emph{Weyl scalars}. 

In order to define the Weyl scalars, first we have to introduce a  \emph{Newman-Penrose null tetrad} (see e.g. \cite{Newman:1981fn,Chandrasekhar:1985kt}), what can be done in our framework as follows. 
At any given point $\xi^M_0$ of the spatial sections $\cS_{\xi^1}$ it is possible to find a set of coordinates  $\xi^M$ such that the spatial metric $q_{MN}$ has the simple form $q_{MN}(\xi_0)= \delta_{MN}$. Note that for NEH horizons this choice is independent of $\xi^1$ as $\pd_n q_{MN}=0$.  In this way we ensure that the two basis vectors $e_M|_{\xi_0^M}$ are orthogonal to each other and have unit norm. Then, we can construct the Newman-Penrose null tetrad $\cB_{NP}=\{n,\ell, m, \overline m\}$ at $\{\xi^1,\xi^M_0\}$  comprised of the normal vector  $n$, the rigging $\ell$, and the two  complex null vectors 
\be
m|_{\xi_0^M} \equiv \frac{1}{\sqrt{2}} (e_2 + \rmi e_3)|_{\xi_0^M}, \qquad \overline m|_{\xi_0^M} \equiv \frac{1}{\sqrt{2}} (e_2- \rmi e_3)|_{\xi_0^M}.
\label{eq:mbarm}
\ee 
By defining the $\cB_{NP}$ in this way we avoid introducing the additional gauge freedom which is always associated to the choice of null tetrad. 
 The set of vectors  $\cB_{NP}$  also forms a basis of the spacetime tangent space, and it is composed of null vectors only. 
 Actually, at $\xi_0^M$ the scalar products of its elements read
\bea
g(n,n) &=& 0,\quad g(n,\ell) = 1, \quad g(n,m) = 0, \quad g(n,\overline m) = 0,\nonumber \\
g(\ell,\ell) &=& 0,\quad g(\ell,m) = 0, \quad g(\ell,\overline m) = 0,\nonumber \\
g(m,m) &=& 0, \quad g(m,\overline m) = 1,\nonumber \\
g(\overline m,\overline m) &=& 0.
\label{eq:nullTetradProds}
\eea
The Weyl scalars are defined in terms of the Newman-Penrose tetrad by\footnote{We use the  definitions in \cite{Gourgoulhon:2005ng} up to a difference in the conventions:  in \cite{Gourgoulhon:2005ng} the second element of the tetrad $\cB_{NP}$ is future directed, but in our conventions the rigging $\ell$ is past directed.}  
\bea
\Psi_0 = C_{\sigma \rho \mu \nu} n^\sigma m^\rho n^\mu m^\nu,Ê&\qquad& \Psi_1 = C_{\sigma \rho \mu \nu} n^\sigma m^\rho \ell^\mu n^\nu,Ê\nonumber \\
\Psi_2 = C_{\sigma \rho \mu \nu} n^\sigma m^\rho  \ell^\mu \overline m^\nu,Ê&\qquad& \Psi_3 = C_{\sigma \rho \mu \nu} n^\sigma \ell^\rho \overline m^\mu \ell^\nu,\nonumber \\
\Psi_4 = C_{\sigma \rho \mu \nu} \overline m^\sigma \ell^\rho \overline m^\mu \ell^\nu.  \ \ &&
\label{eq:defWeyl}
\eea
The computation of the Weyl scalars is useful to determine the Petrov type of the gravitational field (see \cite{Chandrasekhar:1985kt}), and to characterise its different contributions  \cite{Szekeres:1965ux}. In particular, $\Psi_0$ and $\Psi_4$ encode   transverse wave components travelling along  the directions\footnote{The vector $-\ell$ is future directed since, by convention, $g(n, \ell)= 1$. } $-\ell$ and $n$ respectively.  The scalars  $\Psi_1$ and $\Psi_3$  represent  longitudinal wave components propagating respectively parallel to   $-\ell$ and $n$, and  $\Psi_2$ can be associated with a Coulomb contribution of the gravitational field  \cite{Szekeres:1965ux}. 

The  general form of the Weyl scalars in terms of the hypersurface data \eqref{eq:dataDef} can be found appendix \ref{app:weyl}.  We will present the relevant formulae when discussing case of non-expanding horizons and null infinity.

\section{Gauge redundancies and horizon supertranslations}
\label{sec:redundancies}

In the present section we will discuss in detail  the gauge redundancies in  our description in the case of generic non-expanding  null hypersurfaces with $\Theta_{MN}=0$, which are of interest both  for the study of black hole horizons and null infinity.  
Part of this gauge freedom was already used in the last section to set the hypersurface metric  data in the  the form \eqref{eq:metricData}.
Then, in section \ref{sec:residualGaugeH} we will begin our analysis identifying  the residual gauge redundancies which are left after imposing the conventions \eqref{eq:metricData}.  Since the constraints (\ref{eq:Raychad}-\ref{eq:Xi}) are direct consequence of the Ricci identity and \eqref{eq:metricData}, these residual gauge transformations also leave invariant the form of the constraint equations.

In section \ref{sec:supertranslationsGauge}, we will turn our attention to horizon supertranslations.  We will define them  as spacetime diffeomorphisms preserving the metric tensor of a generic NEH.  We will characterise how the geometry of the horizon changes under the action of a  supertranslation, and show that the transformation of the hypersurface data \eqref{eq:dataDef} can be identified with a   gauge redundancy of the description. In other words, we prove that horizon supertranslations  preserve both the intrinsic and extrinsic geometry of the horizon  up to a gauge transformation. As a consequence, we find that supertranslations also leave invariant the form of the constraint equations  (\ref{eq:Raychad}-\ref{eq:Xi}). 

It is important to stress that, the fact that two data sets  
can be related to each other by a gauge transformation is \emph{necessary} for them 
to describe \emph{the same NEH geometry}. However, the gauge equivalence of two data sets is \emph{not  sufficient} to prove that they correspond to the same NEH. For this reason the results derived in this section do not yet prove that two data sets related by a horizon supertranslation represent the same horizon geometry. Although this might seem unnatural at first, recall that in the case of  null infinity there are geometrically distinct data sets which can be related to each other by a gauge transformation, namely by BMS supertranslations. In this sense, BMS supertranslations should be regarded as a large gauge transformation, i.e. a global symmetry of the constraint equations, rather than pure gauge. We will discuss again this point  in section \ref{sec:Scri}.

\subsection{Gauge redundancies of hypersurface data}
\label{sec:allRedundancies}

As we described in section \ref{sec:nullGeometry}, the intrinsic and extrinsic geometry of a null hypersurface can be fully encoded in a set of tensor fields defined on the abstract manifold $\Sigma$, namely $\mathscr{D}=(\gamma_{ab}, \ell_a, \ell^{(2)}, Y_{ab})$ defined in \eqref{eq:metricData0} and \eqref{eq:tensorY} \cite{Mars:2005ca}.  The description of the geometry of a non-expanding horizon  in terms of these quantities   has  some ``built in''  gauge redundancies, that is, different data sets  $\mathscr{D}$ and   $\mathscr{D}'$ might represent equivalent geometries. 
The following two types of redundancies represent \emph{all the gauge ambiguities} in our framework:

\paragraph{Coordinate freedom on the abstract  manifold.} Recall that the hypersurface data $\mathscr{D}$ is given in terms of tensor fields  living on the abstract manifold $\Sigma$, and thus its definition 
is unaffected by coordinate reparametrisations of $\Sigma$. Nevertheless,  the explicit coordinate expressions of these tensor fields will have a different form in different coordinate systems. This implies that the same hypersurface geometry could be encoded in 
 two different data  sets   $\mathscr{D}$ and $\mathscr{D}'$ 
  which are related to each other through a \emph{diffeomorphism of the abstract manifold} $\Sigma$. These transformations should be regarded as  a gauge freedom  in the hypersurface data\footnote{For a detailed discussion on this type of gauge redundancies in the context of perturbation theory in general relativity see e.g.  \cite{Mars:2005ca}}. 
 Under an arbitrary diffeomorphism on the abstract manifold $\Sigma$, with the explicit form  $\zeta: \xi^a \rightarrow  \zeta^i(\xi^a)$ and $i = 1,2,3$, the coordinate representation of the  data  $\mathscr{D}=(\gamma_{ab}, \ell_a, \ell^{(2)}, Y_{ab})$ transforms as follows
\be
\gamma_{ab}'=\zeta^*\gamma_{ab}, \qquad 
\ell_a' = \zeta^*\ell_a, \qquad
\ell^{(2)}{}' = \zeta^*\ell^{(2)}, \qquad
Y'_{ab} = \zeta^*Y_{ab}.
\label{eq:red1}
\ee   
Here $\zeta^*$ is the pull-back map associated to $\zeta$ which, for example, acts explicitly on the first fundamental form as  $\zeta^*\gamma_{ab}=\gamma_{ij}|_{\zeta(\xi)} \, \zeta_a^i \zeta_b^j$, with  $\zeta_a^i \equiv \pd_a \zeta^i$. 

At this point  it is worth to emphasise that the abstract manifold $\Sigma$ is a redundant object 
detached from the ambient spacetime, which can be seen as a bookkeeping device used  to isolate the geometric data of the hypersurface $\cH\subseteq \cM$. Therefore the diffeomorphisms on $\Sigma$ should not be confused with the diffeomorphisms of the ambient spacetime $\cM$.

 \paragraph{Choice of rigging.} The rigging $\ell$ is an auxiliary vector field over $\cH$ introduced to specify a direction transversal to the hypersurface. 
Given a null hypersurface $\cH$ we could construct the hypersurface data using two different choices of rigging, leading in general  to two different results $\mathscr{D}$ and $\mathscr{D}'$  which obviously represent the same geometry. To characterise  the effect of an arbitrary change of rigging, consider two different choices, $\ell$ and $\ell'$, related to each other by
$\ell' = u (\ell + V)$,  where $u$ is a non-vanishing  scalar function on the hypersurface $\cH$, and $V$ is some vector field tangent to it. Then, from the definition of the  hypersurface data it follows that the elements characterising the metric tensor transform as  \cite{Mars:2013qaa}
\be
\gamma_{ab}'=\gamma_{ab}, \quad
\ell_a' = \uhat(\ell_a + \Vhat^b \gamma_{ab}), \quad
\ell^{(2)}{}' = \uhat{}^2(\ell^{(2)} + 2 \Vhat^a \ell_a + \gamma_{ab} \Vhat^a \Vhat^b). 
\label{eq:red21}
\ee
where $\uhat$ is a function in $\Sigma$ defined by $\uhat \equiv \Phi^*(u)$ and $\hat V \in T_p\Sigma$ is defined by the condition $\dd\Phi(\hat V) = V$. The tensor $Y_{ab}$ describing the transverse connection coefficients is sent to
\be
Y'_{ab}= \uhat Y_{ab} + \ft12 (\pd_a \uhat \, \ell_b + \pd_b \uhat\, \ell_a) + \ft12 \cL_{\uhat \hat V} \gamma_{ab}
\label{eq:red22}
\ee

\subsection{Partial gauge fixing and residual gauge redundancies}
\label{sec:residualGaugeH}

In section  \ref{sec:nullGeometry} we introduced the conventions \eqref{eq:metricData}  to reduce the elements of the hypersurface data down to $\mathscr{D} = (q_{MN},\kappa, \Omega_M, \Xi_{MN})$,  fixing some of the redundancies described above. In addition, we can specify a particular form for the metric $q_{MN}$ to fix partially the coordinate reparametrisations on the spatial sections of the horizon.  However, despite all these conventions there is still  some residual gauge freedom.  Indeed, we are still allowed to perform a diffeomorphism on $\Sigma$ \eqref{eq:red1} followed by a change of rigging, \eqref{eq:red21} and \eqref{eq:red22},  as long as they preserve our choice of gauge  \eqref{eq:metricData}. Thus, the combined transformations must satisfy 
\begin{align}
\gamma_{ab}' &=\zeta^*\gamma_{ab} =  \gamma_{ab},\label{eq:gammaRed}\\
\ell_a' &= \uhat(\zeta^*\ell_a +  \zeta^*\gamma_{ab} \Vhat^b ) = \delta_a^1,\label{eq:riggRed1}\\
\ell^{(2)}{}'&=\uhat^2(\zeta^*\ell^{(2)} + 2 \zeta^*\ell_a \Vhat^a + \zeta^*\gamma_{ab} \Vhat^a \Vhat^b) =0. \label{eq:riggRed2}
\end{align}
where $\gamma_{mn}$, $\ell_n$ and $\ell^{(2)}$ are given by \eqref{eq:metricData}.
To determine the form of the allowed diffeomorphisms we begin solving the  first equation \eqref{eq:gammaRed}, which reads explicitly
\be
\zeta_a^i \zeta_b^j \gamma_{ij}|_{\zeta(\xi)} = \gamma_{ab}|_{\xi}\quad \Longleftrightarrow \quad \, \zeta_1^I \zeta_1^J q_{IJ}|_{\zeta(\xi)}= 0, \quad \;Ê\zeta_M^I \zeta_N^J q_{IJ}|_{\zeta(\xi)} = q_{MN}|_\xi,
\label{eq:solvingRed1}
\ee
where $I,J=2,3$. On the one hand, $q_{MN}$ is non-degenerate, and thus the first  equation on the right implies that the components of the diffeomorphism  $\zeta^i(\xi)$ are constant along the null direction,  $\zeta_1^i = 0$. 
On the other hand, since we are restricting the analysis to non-expanding horizons $\pd_n q_{MN}=0$,  the previous equations are independent of the  null coordinate $\xi^1$. Therefore the last equation in \eqref{eq:solvingRed1} implies  that $\zeta^i(\xi^M)$ must define an isometry of the metric $q_{MN}(\xi^M)$, while the component $\zeta^1(\xi) \equiv \hat f(\xi)$ of the diffeomorphism   can be any arbitrary function on $\Sigma$. 

Although the diffeomorphism $\zeta^i(\xi)$ leaves $\gamma_{ab}$ invariant, without a compensating change of rigging, it leads to a   non-trivial transformation of the components $\ell_a$  
\be
\zeta^*\ell_a(\xi) = \zeta_a^i \ell_i(\zeta)\quad \Longrightarrow\quad    \zeta^*\ell_1 =\pd_n \hat f \equiv \hat f_n,\qquad \zeta^*\ell_M =\pd_M \hat f\equiv  \hat f_M,
\ee
while  $\zeta^* \ell^{(2)}=0$, and thus $\ell^{(2)}$ is unchanged. The appropriate  rigging transformation which compensates this change  can be found solving the remaining equations \eqref{eq:riggRed1} and \eqref{eq:riggRed2}. The solution   for $\uhat$ and $\Vhat^a$  reads
\be
\uhat= \frac{1}{\hat f_n}, \qquad \Vhat_M = - \hat f_M, \qquad \Vhat^1 =- \frac{1}{2\hat f_n} \hat f_M \hat f^M.
\label{eq:neqRigging}
\ee
This result together with the form of the diffeomorphism, $\zeta^i(\xi) = (\, \hat f(\xi)\, , \, \zeta^I(\xi^M)\,)$, determines completely the residual gauge redundancies of our description.
Then, using  \eqref{eq:red1} and \eqref{eq:red22} we can  find the how the  form of the tensor $Y_{ab}$ \eqref{eq:tensorY} changes under this gauge transformations
\be
Y'_{ab}= \uhat (\zeta^*Y_{ab}) + \ft12 (\pd_a \uhat \, (\zeta^* \ell_b) + \pd_b \uhat\, (\zeta^*\ell_a)) + \ft12 \cL_{\uhat\hat V} \gamma_{ab} \label{eq:YRed}.
\ee
Here we have used that the induced metric is invariant under the action of $\zeta$ for the last summand (recall (\ref{eq:gammaRed})). 
 Since the freedom to reparametrise the spatial coordinates $\xi^M$ has no interest for our discussion, we  consider for simplicity  diffeomorphisms with $\zeta^I(\xi^M) = \xi^I$, and we find 
 \begin{empheq}[box=\widefboxb]{align}
\begin{split}
\kappa'(\xi) &= \kappa|_{\zeta(\xi)} \, \hat  f_n+ \pd_n \log \hat f_{n},\\
\Omega_M'(\xi)  &= \Omega_M|_{\zeta(\xi)} + \kappa|_{\zeta(\xi)} \, \hat  f_M + \pd_M \log \hat f_{n},\\
\Xi'_{MN}(\xi)  &= \ft{1}{\hat f_n}\big(\Xi_{MN}|_{\zeta(\xi)} -   \Omega_{(M}|_{\zeta(\xi)} \;\hat f_{N)}   -  \kappa|_{\zeta(\xi)} \,\hat  f_M \hat f_N  - D_M \hat f_N \big).
\label{eq:dataRedundancies}
\end{split}
\end{empheq}
Thus, the previous transformations represent the gauge freedom left in the hypersurface data after setting the conventions  \eqref{eq:metricData}.
As these 
transformations leave invariant our conventions \eqref{eq:metricData}, the form of the constraint equations (\ref{eq:Raychad}-\ref{eq:Xi}) is also left unchanged\footnote{The vanishing tensor $J_{ab}$ appearing in the  constraint equations (\ref{eq:Raychad}-\ref{eq:Xi}) can so be shown to be left  invariant under \eqref{eq:dataRedundancies} (see \cite{Mars:2013qaa}).}. This implies that, if a given data set $\mathscr{D}$ is a solution to the constraint equations, then the transformed data set $\mathscr{D}'$ obtained  via  \eqref{eq:dataRedundancies}   will also satisfy the constraints.

Before we close this discussion let us single out the  situation when the diffeomorphism is of the form $\zeta^i(\xi) = (\, \xi^1 + A (\xi^M)\, , \, \xi^I\,)$.  
Then the gauge transformations have the simpler form
\begin{empheq}[box=\widefboxb]{align}
\begin{split}
\kappa'(\xi) &= \kappa|_{\zeta(\xi)},\\
\Omega_M'(\xi) &= \Omega_M|_{\zeta(\xi)} +   \kappa|_{\zeta(\xi)} \, A_M ,\\
\Xi'_{MN}(\xi) &= \Xi_{MN}|_{\zeta(\xi)} -  \Omega_{(M}|_{\zeta(\xi)} \; A_{N)}-  \kappa  |_{\zeta(\xi)} \,  A_M A_N- D_M A_N.
\label{eq:dataST}
\end{split}
\end{empheq}
This case is  particularly interesting because, as we will see in the next subsection, it is closely related to horizon supertranslations. 

Note that a generic change of gauge  also induces a transformation on the elements of the  basis $\cB =\{n,\ell, e_M\}$, and in turn, of the components of any tensor which is expressed in terms of $\cB$. The new basis elements $n'=e_1'$ and $e_M'$ at the point $\xi^a$ can be obtained from their  definition after the acting with the pushforward $\dd \zeta$  on $\hat e_i$, that is, $e_m' = \dd \Phi(\dd \zeta(\hat e_m)) = \dd \Phi(\zeta^i_m \hat e_i)$, and the new rigging $\ell'$ from \eqref{eq:neqRigging}. In the case of transformations of the form  \eqref{eq:dataST}  the basis elements behave as
\be
n' = n|_{\zeta(\xi)}, \quad e'_M = e_M|_{\zeta(\xi)} +  n|_{\zeta(\xi)} \, A_M, \quad  \ell' = \ell|_{\zeta(\xi)} -  e_M|_{\zeta(\xi)}\, A^M - \ft12n|_{\zeta(\xi)}\,   A^M A_M.
\label{eq:basisChange}
\ee

\subsection{Horizon supertranslations and hypersurface data}
\label{sec:supertranslationsGauge}

We will now study the effect of horizon supertranslations on the hypersurface data of a generic non-expanding horizon. We will describe supertranslations as \emph{active spacetime diffeomorphisms} $F: \cM \to \cM$ (as opposed to coordinate transformations), which act on the spacetime metric tensor as $g \to F^*g$. That is, $F$  induces a deformation of the metric tensor, while the coordinate charts are left invariant.  
More specifically, we will characterise horizon supertranslations as diffeomorphisms which leave invariant the horizon as a set of points, and which preserve the full metric tensor on it (see \cite{Koga:2001vq,Chung:2010ge,Majhi:2012tf,Hawking:2015qqa,Donnay:2015abr,Donnay:2016ejv,Hawking:2016sgy,Averin:2016hhm,Averin:2016ybl,Hawking:2016msc}). These two conditions can be expressed in a coordinate invariant way as follows    
\begin{empheq}[box=\widefboxb]{align}
F(\cH) = \cH,Ê\qquad \text{and} \qquad F^*g(X,Y)_p = g(X,Y)_p,
\label{eq:defSupertranslation}
\end{empheq}
for all pairs of vectors $X,Y \in T_p \cM$ in the spacetime tangent space at points $p \in \cH$ on the hypersurface. Actually, we shall see that  diffeomorphisms satisfying \eqref{eq:defSupertranslation} lead to a more general class of transformations than supertranslations, and we will refer to them  as \emph{hypersurface symmetries}\footnote{As a consistency check for this approach we have also rederived the full BMS group  at null infinity --including BMS supertranslations-- using similar techniques. See appendix \ref{app:BMSgroup}.}. We will prove that the action of these diffeomorphisms on the NEH data can be described by the transformations  \eqref{eq:dataST}. That is,  hypersurface symmetries, and in particular supertranslations,  leave invariant both the intrinsic and the extrinsic geometry  of the hypersurface up to a gauge redundancy of the description.

  Previous works have used an infinitesimal version of the definition  \eqref{eq:defSupertranslation}, which can be recovered expressing $F$ explicitly  in \eqref{eq:defSupertranslation}, i.e.   in terms of a coordinate system $F:x^\mu \to y^\alpha(x)$ where  $\alpha=0, \ldots, 3$. Then, setting   $y^\alpha(x) \approx x^\alpha + \epsilon \, k^\alpha(x)$ and working at linear order in $\epsilon\ll 1$ we find that the second condition in \eqref{eq:defSupertranslation} reduces to   $\cL_kg_{\mu\nu}|_{\cH}=0$, which is the definition used in \cite{Koga:2001vq,Chung:2010ge,Majhi:2012tf,Hawking:2015qqa,Donnay:2015abr,Donnay:2016ejv,Hawking:2016sgy,Averin:2016hhm,Averin:2016ybl,Hawking:2016msc}. The advantage of  \eqref{eq:defSupertranslation} is that it is coordinate independent, what clarifies the geometrical interpretation of these transformations, 
and that  it can be solved easily  leading  directly to the finite form  of the diffeomorphisms.

\paragraph{Spacetime coordinate system.} In order to find the set of diffeomorphisms satisfying the conditions \eqref{eq:defSupertranslation} we will first define a spacetime coordinate system adapted to $\cH$ to simplify the derivation. As the hypersurface $\cH$ is diffeomorphically identified with the abstract manifold $\Sigma$ via the embedding map $\Phi:\Sigma \to \cM$, we can use the coordinate system $\{\xi^a\}$ on  $\Sigma$ to parametrise points over the hypersurface. Then, we can extend these coordinates away from $\cH$ introducing a transverse coordinate $r$, which we define by the conditions $\ell= \pd_r$ and $r(\cH)=0$, and   requiring the coordinates $\{\xi^a\}$ to be constant along the integral lines of $\ell$. Strictly speaking this procedure would require to define how the rigging is extended off the hypersurface, but  the following calculation is independent of this extension, and thus we will leave it unspecified.  The complete spacetime  coordinate system is given by  $x^\mu\equiv\{u=\xi^1,r,x^M=\xi^M\}$, and therefore it follows that the embedding map is simply
\be
\Phi: \xi^a \longrightarrow x^\mu = \{u =\xi^1, r=0, x^M = \xi^M\}.
\label{eq:embeddingST}
\ee
With these choices the corresponding coordinate basis for the spacetime tangent space $\{\pd_u,\pd_r,\pd_M\}$ coincides with the basis $\cB=\{e_1,\ell,e_M\}$ defined in section \eqref{sec:nullGeometry}  
\be
n= \pd_u, \qquad \ell= \pd_r, \qquad e_M =\pd_M.
\label{STbasis}
\ee
In order to be consistent with our conventions, which require $\bn(\ell)=1$,  the normal form to the hypersurface must be given by $\bn=dr$. In addition, from   \eqref{eq:metricData}  it follows  that the metric tensor  is of the form
\be
g_{\mu\nu}(x)|_{r=0} =\begin{pmatrix}
0 \ \ &1& 0\\
1  \ \ &0&0\\
0  \ \ &0&q_{MN}
\end{pmatrix}, \qquad \text{with}Ê\qquad q_{MN} = q_{MN}(x^{M}),
\label{eq:metricN}
\ee
at points on the hypersurface $\cH$, which is located at $r=0$. In the following we will use ``$\hat =$'' to write equations which hold on the hypersurface $\cH$, that is, at $r=0$.  Note also that the normal vector is given $n = g^{-1}(\bn,\cdot) = e_1$, in consistency with the setting defined in section \ref{sec:nullGeometry}. For later reference, note that the first derivatives of the metric have the form
\be
\pd_u g_{\mu\nu} \; \hat =\; 0, \qquad \pd_L g_{\mu\nu} \; \hat =\;   \begin{pmatrix}
0 \ \ &0& 0\\
0  \ \ &0&0\\
0  \ \ &0&\pd_L q_{MN}
\end{pmatrix}, \qquad  \ft12\pd_r g_{\mu\nu}  \; \hat =\;  \begin{pmatrix}
\cdot \ \ &\cdot& \cdot\\
\cdot  \ \ &-\kappa&-\Omega_M\\
\cdot  \ \ &-\Omega_N &\Xi_{MN}
\end{pmatrix},
\label{eq:metricDers}
\ee
where the last equality follows directly  from the definition of the tensor $Y_{ab}$ \eqref{eq:tensorY}, and the empty entries are those  which cannot be determined from the hypersurface data alone. 
With this information we are ready to find those diffeomorphisms $F$ satisfying the conditions \eqref{eq:defSupertranslation}, and to characterise their action on the hypersurface data.

\paragraph{Hypersurface symmetries.} The first condition on \eqref{eq:defSupertranslation} requires that the  diffeomorphism $F$ maps the hypersurface to itself. Since the null normal $\bn$ of a hypersurface is unique up to a scale, $F$ can only change the normalisation of $\bn$, that is $F^*\bn = \lambda_1 \, \bn$,  where $\lambda_1$ is some function on $\cH$. In the coordinate system defined above this condition has the explicit form
\be
y^\alpha_\mu\, n_\alpha(y(x))  \; \hat =\;  \lambda_1\,  n_\mu(x)  \quad \Longleftrightarrow\quad y^1_\mu  \; \hat =\; \lambda_1\,  \delta_\mu^1 \quad \Longrightarrow \quad \lambda_1  \; \hat =\; y^1_r,
\label{ders1}
\ee
where we are using the shorthand $y^\alpha_\mu\equiv \pd_\mu y^\alpha$.
Moreover, since the metric tensor $g$ should also be preserved by $F$, the normal vector $n= g^{-1}(\bn,\cdot)$ can only change its normalisation under the action of   the diffeomorphism $F$, i.e. $dF(n) = \lambda_2 \,  n$, where $\lambda_2$ is some function over $\cH$.  The explicit form of this condition is
\be
 \lambda_2 \, n^{\alpha}(y(x))  \; \hat =\; y^\alpha_\mu \, n^\mu(x)  \quad \Longleftrightarrow\quad  \lambda_2 \, \delta^\alpha_u  \; \hat =\; y^\alpha_u \quad \Longrightarrow \quad \lambda_2  \; \hat =\; y^0_u,
\label{ders2}
\ee
The following sequence of identities shows that the transformation of the null normal and the normal vector are related by the condition $\lambda_1 = \lambda_2^{-1}$
\bea
1 & \; \hat =\;& g(\ell,  n)  \; \hat =\; F^*g(\ell, n)  \; \hat =\; g(dF(\ell),d F( n))  \; \hat =\; \lambda_2\,  g(dF(\ell),n) \; \hat =\; \nonumber \\
&&  \lambda_2\,  \bn(dF(\ell))  \; \hat =\; \lambda_2 \, F^*\bn(\ell) \; \hat =\; \lambda_1  \lambda_2\,  \bn(\ell) \; \hat =\; \lambda_1  \lambda_2,
\eea
where we have  used the transformation properties of the metric tensor, the normal form and the normal vector, and the definition of the pullback map (see e.g. \cite{Nakahara:2003nw}). Summarising, the first condition in \eqref{eq:defSupertranslation} implies that the diffeomorphism should satisfy the following constraints at points on the hypersurface
\be
\pd_u y^1  \; \hat =\; \pd_M y^1  \; \hat =\;0, \qquad \pd_u y^I  \; \hat =\;0, \quad \text{and}\quad \pd_u y^0  \; \hat =\; 1/\pd_r y^1.
\label{eq:solvST1}
\ee
The second condition in \eqref{eq:defSupertranslation} is satisfied if and only if the diffeomorphism preserves all the scalar products between the elements of the basis $\cB$. From the results obtained above it is straightforward to check that the requirements 
\be
F^*g(n,\ell)  \; \hat =\; g(n,\ell),\quad F^*g(n,n)  \; \hat =\; g(n,n),\quad \text{and} \quad F^*g(n,e_M) \; \hat =\;g(n,e_M)
\ee
are satisfied already without imposing further conditions. In order for $F$ to preserve the remaining scalar products the following equations must hold
\begin{align}
F^*g(e_M,e_N) \; \hat =\;g(e_M,e_N):& \qquad g_{\alpha\beta}' y^\alpha_M y^\beta_N  \; \hat =\; g_{MN} \;  \Longleftrightarrow \hspace{.4cm} q_{IJ}' \, Y^I_M Y^J_N Ê   \; \hat =\; q_{MN},
\label{ders3}\\
F^*g(\ell,e_M) \; \hat =\;g(\ell,e_M):& \qquad  g_{\alpha \beta}' y^\alpha_r y^\beta_M  \; \hat =\; 0 \hspace{.74cm}\Longleftrightarrow \hspace{.5cm}y^I_r  \; \hat =\; -\ft{1}{f_u} f^M Y_M^I,  
\label{ders4}\\
F^*g(\ell,\ell)  \; \hat =\;g(\ell,\ell):& \qquad  g_{\alpha \beta}' y^\alpha_r y^\beta_r \; \hat =\;0 \hspace{.88cm} \Longleftrightarrow \hspace{.51cm} y_r^0 \; \hat =\; - \ft{1}{2 f_u} f^M f_M,
\label{ders5}
 \end{align}
where we have defined $f(u, x^M) \equiv y^0(x)|_{r=0}$,   $Y^I(x^M) \equiv y^I(x)|_{r=0}$, and $f_u\equiv \pd_u f$.  The indices $M,N$ are raised and lowered with the metric $q_{MN}$ and its inverse $q^{MN}$, and  we also used a prime to denote quantities evaluated at $y(x)$, e.g. $g_{\alpha \beta}' = g_{\alpha \beta}(y(x))$.
    
We can conclude that the  action of hypersurface symmetries at points of the hypersurface $\cH$ is completely determined by the following functions
\begin{empheq}[box=\widefboxb]{align}
y^0(x)  \; \hat =\; f(u, x^M), \qquad y^I(x) \; \hat =\; Y^I(x^M),
\label{eq:solST}
\end{empheq}
where $f(u,x^M)$ has an arbitrary dependence on its variables, while the components $Y^I(x^M)$ are independent on $u$ due to \eqref{eq:solvST1}. Moreover equation  \eqref{ders3} implies that $Y^I$ must define an isometry of the $u-$independent metric $q_{MN}$. Note that the conditions \eqref{eq:defSupertranslation} only constrain the form of hypersurface symmetries at points of the hypersurface $\cH$, and then, their  extension away from $\cH$ is arbitrary.

The diffeomorphisms  we just found are more general than the supertranslations discussed in \cite{Chung:2010ge,Majhi:2012tf,Hawking:2015qqa,Donnay:2015abr,Donnay:2016ejv,Hawking:2016sgy,Averin:2016hhm,Averin:2016ybl,Hawking:2016msc}, which correspond to the case when the  functions \eqref{eq:solST} are of the form  
\be
f(u,x^M) = u + A(x^M),\qquad \text{and}\qquad Y^I(x^M) =x^I.
\label{eq:supertranslations}
\ee
Actually, supertranslations can be singled out noting   that, in addition to \eqref{eq:defSupertranslation}, they also also preserve the normalisation of the null normal, namely $F^* \bn =\bn$, which implies $f_u=1$.

\paragraph{Effect on the horizon geometry.} To conclude this section we will  discuss the effect that hypersurface symmetries have on  the   data   $\mathscr{D}=\{\gamma_{ab},\ell_a, \ell^{(2)},Y_{ab}\}$.  Note that, since the diffeomorphisms satisfying \eqref{eq:defSupertranslation} preserve the scalar products on $\cH$, they also leave  invariant  the metric data of the horizon $\{\gamma_{ab}, \ell_a,\ell^{(2)}\}$, given by \eqref{eq:metricData0} and \eqref{eq:metricData}, and in particular its intrinsic geometry. Therefore all that remains to compute is the effect of these diffeomorphisms on the geometric data encoded in the tensor $Y_{ab}$. The form  of the tensor $Y_{ab}$ after a horizon supertranslation can be obtained doing the substitution $g \to F^*g$ in its definition \eqref{eq:tensorY}
\bea
Y_{ab}' &=&\ft12 e_a^\mu e_b^\nu \cL_{\ell} (F^*g)_{\mu \nu} = \ft12 e_a^\mu e_b^\nu \pd_r\left( g_{\alpha\beta}(y)  \; y^\alpha_\mu y^\beta_\nu\right)  \nonumber \\
&=&\ft12 \left(g_{\alpha\beta,\gamma} y^\gamma_r  y^\alpha_\mu y^\beta_\nu + g_{\alpha\beta} y^\alpha_{r \mu} y^\beta_\nu  +g_{\alpha\beta} y^\alpha_\mu y^\beta_{r \nu}   \right)  e^\mu_a e^\nu_b,
\eea
with $e_a = \{n,e_M\}$.
The relevant derivatives of the metric tensor  have been given in \eqref{eq:metricDers}, the first and second derivatives of $y^\alpha(x)$ can be determined from \eqref{eq:solvST1} and (\ref{ders3} - \ref{ders5}), and the basis vector components $e^\mu_a$ are defined in \eqref{STbasis}.  
As in the previous subsection we also assume for simplicity that $Y^I(x^M) = x^I$. After a long but straightforward computation we obtain
\bea
Y_{nn}'(\xi) &=&-\kappa|_{\zeta(\xi)} \, \hat f_n - \pd_n \log \hat f_n,  \nonumberÊ\\
Y_{nM}'(\xi) &=& -\Omega_M|_{\zeta(\xi)} - \kappa|_{\zeta(\xi)} \, \hat f_M - \pd_M \log \hat  f_n, \nonumberÊ\\
Y_{MN}'(\xi) &=&  \ft{1}{\hat f_n}(\Xi_{MN}|_{\zeta(\xi)} - \Omega_{(M}|_{\zeta(\xi)}\;  \hat f_{N)} - \kappa|_{\zeta(\xi)} \, \hat  f_M \hat f_N  -D_M \hat f_N).
\label{eq:Ytransform}
\eea
where $\zeta^i(\xi) =(\hat f(\xi), \xi^I)$ is a diffeomorphism  of the abstract manifold $\Sigma$ defined as  $\zeta \equiv \Phi^{-1} \circ F \circ \Phi$, in terms of the  embedding map $\Phi$ of the hypersurface \eqref{eq:embeddingST}. Note that the diffeomorphism $\zeta(\xi)$ is well defined, as $F$ maps the hypersurface $\cH$ onto itself.
It is straightforward to identify the behaviour of the tensor $Y_{ab}$ under a hypersurface symmetry with the effect of a residual gauge transformation \eqref{eq:dataRedundancies}.  In particular,  the action of supertranslations \eqref{eq:supertranslations} on the horizon data is identical to  \eqref{eq:dataST}.  Thus, in the following  we will make no distinction between horizon supertranslations and the gauge transformations acting as \eqref{eq:dataST}.

The same conclusion can be reached analysing the effect of hypersurface symmetries in the tensor $Y_{ab}$ in terms of the diffeomorphism $\zeta$, without making use of the explicit spacetime coordinate system constructed above. Using the well-known property $\cL_{\ell}(F^*g) = F^*(\cL_{dF(\ell)} g)$
 (see e.g.  \cite{schouten1954ricci}) one finds the following equalities
 \bea
 Y' = \ft12 \Phi^* \cL_{\ell} (F^*g) = \ft12 (F \circ \Phi)^* \cL_{\dd F (\ell)} g = \ft12 (\Phi \circ \zeta)^* \cL_{\dd F (\ell)} g =  \zeta^*(\ft12 \Phi ^* \cL_{\dd F (\ell)} g). \nonumber
 \eea
The quantity in parentheses in the last term can be regarded as the tensor $Y|_{\zeta(\xi)}$ changed under a rigging transformation (\ref{eq:red22}), which is followed by a diffeomorphism in the abstract manifold. These are the two gauge redundancies considered in section \ref{sec:residualGaugeH}. \\

 The results presented in this section have two main consequences: on the one hand hypersurface symmetries, and in particular supertranslations, leave  invariant  both the intrinsic and extrinsic geometries of the horizon up to a gauge redundancy.   On the other hand, since the residual gauge redundancies  \eqref{eq:dataRedundancies}  leave the equations (\ref{eq:Raychad}-\ref{eq:Xi}) unchanged,  diffeomorphisms acting as hypersurface symmetries  also preserve the form of the constraint equations of null hypersurfaces. That is, supertranslations are a symmetry  of the NEH constraint equations. 

The next step is to determine the action of these transformations on the dynamical degrees of freedom on  the horizon. In other words, we need to find a free data set necessary and sufficient to describe the full horizon geometry, and then we have to characterise the action of supertranslations on such data set.
 Depending on whether these diffeomorphisms  have a non-trivial action on the NEH free data  or not, we will identify them as a large gauge transformations (i.e. global symmetries of the constraint equations) or pure gauge redundancies of our description.   In section \ref{sec:NEH}  we will consider the constraint equations (\ref{eq:Raychad}-\ref{eq:Xi}) for a NEH in order to single out  an  appropriate free data set,  and characterise the geometric nature of  horizon supertranslations. For completeness in section \ref{sec:Scri} we will perform a similar analysis for BMS supertranslations acting on null infinity.

\section{Evolution of non-expanding horizons}
\label{sec:NEH}

In the present section we will study the solutions to the constraint equations of non-expanding horizons embedded in the vacuum, (\ref{eq:transverseCon2}) and (\ref{eq:DNS2}).  Our main objective is to extract a free data set, $\mathscr{D}_{free}$,   necessary and sufficient to reconstruct the full NEH geometry, and then to determine the behaviour of the free data set under supertranslations. Our starting point is the data set \eqref{eq:dataDef} presented in  section \ref{sec:nullGeometry}, which contains \emph{sufficient} information to characterise completely the NEH geometry.   However the data set \eqref{eq:dataDef} still involves some residual gauge freedom, which we characterised in section \ref{sec:residualGaugeH}. Moreover, the data elements of \eqref{eq:dataDef} are not  freely specifiable, as they are subject to the constraints (\ref{eq:transverseCon2}) and (\ref{eq:DNS2}).

Our strategy will be,  first, to reduce the residual gauge freedom  \eqref{eq:dataRedundancies}  imposing  appropriate  gauge fixing conditions, 
so that the only remaining ambiguity  is  \eqref{eq:dataST}, which we associated to  supertranslations in section \ref{sec:supertranslationsGauge}.
Then,  we will turn to the resolution of the constraints (\ref{eq:transverseCon2}) and (\ref{eq:DNS2}), and we will present a free data set, $\mathscr{D}_{free}$, composed of quantities  invariant under supertranslations. Note that, since this free data set involves no unfixed  gauge redundancies, all of its elements are \emph{necessary} to describe the NEH geometry.  Finally, we will check explicitly that our free data set encodes all the information about the spacetime curvature which was contained in the original data set \eqref{eq:dataDef}.  With this result we conclude that the supertranslation invariant data set $\mathscr{D}_{free}$ is both \emph{necessary and sufficient} to  reconstruct the NEH geometry, and that  supertranslations act trivially on the dynamical variables of the NEH.\\

In order to simplify the analysis of the gauge redundancies and constraint equations  \eqref{eq:transverseCon2} let us introduce  the a more convenient set of variables than  \eqref{eq:dataDef}. The hypersurface data element  $\Omega_M$ defines a one-form on the spatial sections of the horizon  $\cS_{\xi^1} \cong \mathbb{S}^2$, and  therefore  we can decompose it uniquely  as the sum of an exact   part $\Omega^e_M$ and a  divergence free part $\Omega_M^0$  
\be
\Omega_M = \Omega^e_M + \Omega_M^0, \qquad \Omega^e_M\equiv  \pd_M \eta, \qquad D^M \Omega_M^0=0.
\label{def:hodge}
\ee
where   $\eta(\xi)$ is a smooth   function of the coordinates.  The previous equation represents the Hodge decomposition of $\Omega_M$ on $\cS_{\xi^1}$, which determines $\Omega_M^0$ uniquely and the potential  $\eta(\xi)$ is defined   up to a shift, $\eta \to \eta + \eta_0$, where $\pd_M \eta_0=0$.  Due to the properties of  the Hodge decomposition, the exact and divergence free part of each side of  the Damour-Navier-Stokes equations \eqref{eq:DNS2} are separately equal, leading to 
\be
 \pd_n  \Omega_M^0=0,  \qquad\qquad  \pd_n \pd_M \eta  = \pd_M \kappa.
\label{eq:NSdec}
\ee
In particular this implies  that the divergence free part of the Hajicek one form is constant along the null direction. Moreover, the  equation on the  right can be solved  in terms of $\eta(\xi)$ requiring it to satisfy $\kappa = \pd_n \eta$, what determines $\eta$  uniquely  up to an additive constant on the horizon. 
The potential $\eta$ can also be defined in a  more covariant way  in terms of a decomposition of the rotation one-form $\omega_a = (\kappa,\Omega_M)$, which we defined in \eqref{eq:defRot}. More specifically,  if $\omega_a$ is a solution of \eqref{eq:DNS2} it can be decomposed as (see \cite{Ashtekar:1999yj,Ashtekar:2001is})
\begin{empheq}[box=\widefboxb]{align}
\omega_a = \pd_a \eta + \omega_a^0, \quad \text{where} \quad  \omega^0(\hat n) =0 \quad \text{and} \quad D^M \pd_M \eta = D^M \Omega_M.
\label{eq:defEta}
\end{empheq}
Here  $\omega^0_a$ is uniquely determined to be  $\omega_a^0=(0,\Omega_M^0)$, and $\eta(\xi)$ is defined up to a constant shift. 
As we shall see in section \ref{sec:solvEq} the potential $\eta$ will play an essential role to solve \eqref{eq:transverseCon2} in terms of quantities which are  invariant under supertranslations.   Finally, let us  also introduce the following combination\footnote{The quantity $\Sigma_{MN}^0$ can be defined covariantly in terms of the rotation one form $\omega_a$. Using the connection $\overline \nabla$ defined by eq. (17) of reference \cite{Mars:2013qaa}  we have  $\Sigma_{ab}^0 \equiv  \ft12\overline  \nabla_{(a} \omega_{b)} + \omega_a \omega_b$.   } 
\begin{empheq}[box=\widefboxb]{align}
\Sigma_{MN}^0 \equiv \ft1 {2}  D_{(M}\Omega_{N)} +   \Omega_M \Omega_N + \kappa\, \Xi_{MN},
\label{eq:invariantXi}
\end{empheq}
together with   its trace $\theta^0 \equiv \Sigma^{0 \, M}_M$  and its traceless part $\sigma^0_{MN}$.
 From the definitions \eqref{eq:defEta} and \eqref{eq:invariantXi}, it is straightforward to check that the NEH data  \eqref{eq:dataDef} can be equivalently encoded in 
a new set of quantities $\mathscr{D}_s$
\be
\mathscr{D} = (q_{MN},\quad \kappa, \quad \Omega_M, \quad \Xi_{MN}) \quad \longrightarrow\quad \mathscr{D}_s = (q_{MN},\quad \eta, \quad \Omega_M^0, \quad \Sigma_{MN}^0).
\label{eq:invDataSet}
\ee  
As we will see below the new data set $\mathscr{D}_s$  has particularly simple transformation properties under supertranslations.

\subsection{Reduction of the gauge freedom.}

 We now introduce the relevant gauge fixing conditions so that the residual gauge freedom \eqref{eq:dataRedundancies} reduces to the transformations \eqref{eq:dataST}, which we identified with supertranslations. Given an arbitrary non-expanding horizon with surface gravity $\kappa$, it is always possible to choose a gauge  where the  surface gravity is a constant $\kappa_0$ over the horizon 
\be
\text{\emph{Gauge condition 1:}}\qquad \qquad \pd_n \kappa_0 = \pd_M \kappa_0=0 \quad \text{and} \quad \kappa_0 >0  \quad \text{for all   $\xi \in \Sigma$}
\label{eq:GaugeFix1}
\ee
 This gauge can be achieved making a transformation of the form \eqref{eq:dataRedundancies} with the function $\hat f(\xi)$ satisfying
\be
\kappa(\xi)\,  \hat f_n + \pd_n \log \hat f_{n} =\kappa_0, 
\label{eq:GaugeFix1bis}
\ee
  which can   always be solved for  $\hat f(\xi)$.  It is straightforward to check that, in this gauge, the equation \eqref{eq:DNS2} implies that the full Hajicek one-form $\Omega_M$ --and not only $\Omega_M^0$-- must be constant along the null direction of the horizon, $\pd_n \Omega_M=0$.

The previous gauge fixing condition  still does not  reduce the redundancies down to supertranslations. Indeed, after imposing the  condition \eqref{eq:GaugeFix1} on the data, the remaining gauge freedom can be found solving again  equation \eqref{eq:GaugeFix1bis}, but this time  setting  $\kappa(\xi)=\kappa_0$, which gives
\be
\hat f(\xi) =\xi^1 +A(\xi^M)+ \ft{1}{\kappa_0} \log\Big( 1+ B(\xi^M)\rme^{-\kappa_0 \xi_ 1}\Big), 
\label{eq:resGauge1}
\ee
where $B(\xi^M)$ and $A(\xi^M)$ are smooth functions satisfying $\pd_n A = \pd_nB =0$.  At this point we can already identify $A(\xi^M)$ with the freedom to perform a supertranslation \eqref{eq:dataST}. Therefore it only remains to
find a convention to eliminate ambiguity associated to  $B(\xi^M)$, which reflects the fact that the gauge condition \eqref{eq:GaugeFix1bis} does not  determine completely the normalisation of the null normal $\bn$.  The analysis in the following sections is independent of the the actual method to fix the normalisation of $\bn$, what  was discussed for example in \cite{Ashtekar:1999yj,Ashtekar:2001jb,Gourgoulhon:2005ng}. Here we will follow the strategy of \cite{Ashtekar:2001jb}, which consists in imposing a gauge fixing condition on $\theta^0 = \Xi_M^M$. 
Contracting the constraint equation for $\Xi_{MN}$ \eqref{eq:transverseCon2} with  $q^{MN}$, and using that the surface gravity $\kappa_0$ and $\Omega_M$ are constant along $\xi^1$ we find
\be
\pd_n \theta^0 = - \kappa_0 \,( \theta^0  -\ft{1}{2}  \cR),
\label{eq:theta0}
\ee
 where  the  quantity\footnote{The quantity $\theta^0$  should not be confused with the trace of the second fundamental form, the expansion $\theta \equiv \Theta^{M}_M$, which is zero for non-expanding horizons.} $\theta^0$ was defined in \eqref{eq:invariantXi}, and $\cR$ is the scalar curvature associated to $q_{MN}$. 
Note that in this equation only $\theta^0$ has a non trivial dependence on $\xi^1$, since  $\pd_n q_{MN} =0$  also implies $\pd_n \cR=0$, and thus it can be integrated easily
\be
\theta^0(\xi)  = (\theta^0|_{\xi^1_0} - \ft12 \cR)\,  \rme^{-\kappa_0(\xi^1- \xi^1_0)} + \ft{1}{2} \cR.
\ee
As was discussed in \cite{Ashtekar:2001jb}, for a  subclass of non-expanding horizons, known as \emph{generic non-expanding horizons},  it is possible perform a transformation of the form \eqref{eq:resGauge1} in order to make $\theta^0$ stationary on the horizon
\be
\hspace{-4cm} \text{\emph{Gauge condition 2:}} \qquad  \qquad \pd_n \theta^0=0 \qquad \text{for all   $\xi \in \Sigma$.} 
\label{eq:GaugeFix2}
\ee
This condition is trivially satisfied by all black holes in the Kerr family, which are stationary, and thus they have  generic horizons  (see e.g. appendix D in \cite{Gourgoulhon:2005ng}). As we show in appendix  \ref{app:gaugeFixB}, if the hypersurface data of a NEH satisfies the  following condition on a spatial section $\cS_{\xi^1}$ of $\Sigma$
\be 
\Omega^0{}^M \Omega^0_M \le    \ft12 \cR\le \theta^0 \qquad \text{for all   $\xi \in \cS_{\xi^1}$},
\label{eq:generic}
\ee
the horizon can be shown to be    generic. In other words,  it is possible to find a gauge transformation of the form \eqref{eq:resGauge1} which allows to set $\pd_n \theta^0=0$ everywhere on $\Sigma$. Moreover,   the residual gauge freedom left after imposing this gauge fixing condition is precisely that of  supertranslations \eqref{eq:dataST}.  To the best of our knowledge
the condition \eqref{eq:generic} has not been presented before in the literature. In the following we will restrict ourselves to horizons where the gauge \eqref{eq:GaugeFix2} can be attained.\\

After imposing the gauge fixing conditions \eqref{eq:GaugeFix1} and \eqref{eq:GaugeFix2}, the only remaining gauge freedom are supertranslations  \eqref{eq:dataST}.  We will now characterise the behaviour   under supertranslations of the elements in the data set  $\mathscr{D}_s$  \eqref{eq:invDataSet}. The transformation properties of the spatial metric $q_{MN}$ were discussed in section \ref{sec:residualGaugeH}, and it was shown to be invariant under \eqref{eq:dataST}.    Since the surface gravity $\kappa_0$ has been set to a constant,  the exact and divergence free parts of  the Hajicek one-form transform under  supertranslations as
\be
\Omega_M^e{}'(\xi)  = \Omega_M^e|_{\zeta(\xi)} + \kappa_0  A_M, \qquad \qquad \Omega_M^0{}'(\xi)  = \Omega_M^0|_{\zeta(\xi)},
\label{eq:STdec}
\ee
where $\zeta^a = (\xi^1 + A(\xi^M), \xi^M)$. Actually, due to \eqref{eq:NSdec} 
 the  divergence free part of the Hajicek one-form satisfies $\pd_n\Omega_M^0=0$, and thus  $\Omega_M^0$ is completely invariant. In this gauge the functional form of the potential $\eta(\xi)$ reduces to  $\eta(\xi) = \kappa_0 \xi^1  + h(\xi^M)$, with $\pd_n h=0$,  and  it behaves   under supertranslations as\footnote{The behaviour of $\eta(\xi)$ under \eqref{eq:dataST} should be derived from  its definition in  \eqref{eq:defEta}.} 
\be
\eta(\xi)  \to \eta'(\xi) = \kappa_0 \xi^1 + h(\xi^M) + \kappa_0 A(\xi^M).
\label{eq:etaTransf}
\ee
The previous expression can also be written as $\eta'(\xi)=\eta(\zeta(\xi))$, which means that   $\eta$ transforms under a supertranslation as a scalar field.
 Finally, it is easy to check that the object  $\Sigma_{MN}^0$,  defined in \eqref{eq:invariantXi}, also   transforms  as a scalar under  \eqref{eq:dataST} 
 \be
\Sigma_{MN}^0{}'(\xi) = \Sigma_{MN}^0|_{\zeta(\xi)}.
\label{eq:SigmaSTtrans}
\ee
Thus,  from \eqref{eq:STdec}, \eqref{eq:etaTransf} and \eqref{eq:SigmaSTtrans}, it follows that all the elements of the NEH data set $\mathscr{D}_s$  \eqref{eq:invDataSet} are either invariant or transform as a scalar field under supertranslations.

\subsection{Resolution of the hypersurface constraint equations}
\label{sec:solvEq}

In this subsection we will consider the constraint equations of the NEH, and we will present a free data set $\mathscr{D}_{free}$  composed of quantities which are all invariant under supertranslations. 

The data set $\mathscr{D}_s$  \eqref{eq:invDataSet} is subject to the following complete set of constraint equations and gauge fixing conditions
\be
\pd_n q_{MN} =0, \qquad \pd_n \Omega_M^0=0, \qquad  \theta^0  = \ft1{2} \cR, \qquad \pd_n  \sigma^0_{MN} = -\kappa_0 \,  \sigma^0_{MN},
\label{eq:finalEqs}
\ee
and the potential has to be of the form  $\eta(\xi) = \kappa_0 \xi^1 + h(\xi^M)$, where $\pd_a \kappa_0=\pd_n h=0$, due to \eqref{eq:GaugeFix1}.
Let us recapitulate the origin of these equations from left to right: the first one is a consequence of the non-expanding condition and the Raychaudhuri equation, \eqref{eq:nonExp}; the second one follows from the DNS equation \eqref{eq:NSdec}; the third and fourth ones are obtained  expressing  the eq. \eqref{eq:transverseCon2} for the transverse connection in terms of $\Sigma_{MN}^0$  \eqref{eq:invariantXi}, and then decomposing it in its trace $\theta^0$, and traceless $\sigma_{MN}^0$ parts. To obtain the last two equations we also used the gauge fixing conditions \eqref{eq:GaugeFix1} and \eqref{eq:GaugeFix2}.

The equations \eqref{eq:finalEqs} imply that the full NEH geometry can be reconstructed from an initial data set  specified on a spatial section $\cS_{\xi^1_0}$ of the horizon
\be
(q_{MN}|_{\cS_{\xi^1_0}}, \quad \Omega_M^0|_{\cS_{\xi^1_0}}, \quad \sigma^0_{MN}|_{\cS_{\xi^1_0}}),
\label{eq:freeData1}
\ee
and providing, in addition,  the scalar potential $\eta(\xi)= \kappa_0 \xi^1 + h(\xi^M)$.  This initial data set, and the quantities $\kappa_0$ and $h(\xi^M)$, can be chosen freely on a given spatial slice $\cS_{\xi^1_0}$, and then the geometry over the entire NEH can be obtained solving \eqref{eq:finalEqs}.  

As we discussed above, all the elements in the data set $\mathscr{D}_s$ transform as scalar fields under supertranslations \eqref{eq:dataST}, implying that the initial data \eqref{eq:freeData1} may still involve gauge dependent quantities. Let us examine the transformation properties of the elements in  \eqref{eq:freeData1} under supertranslations:
 \begin{itemize}
\item  The gauge freedom  \eqref{eq:dataST} is defined in terms of \emph{active diffeomorphisms} on $\Sigma$, which transform the NEH data but leave the coordinate system unchanged.  In consequence, the initial slice  $\cS_{\xi^1_0}$ (defined by $\xi^1 = \xi^1_0$) does not transform under  supertranslations. 
 \item  The objects $q_{MN}$ and  $\Omega_M^0$ 
 do not depend on the null coordinate  due to   \eqref{eq:finalEqs}, and thus   $q_{MN}|_{\cS_{\xi^1_0}}$ and  $\Omega_M^0|_{\cS_{\xi^1_0}}$   are invariant under supertranslations. 
 \item  The   potential $\eta(\xi) = \kappa_0 \xi^1 + h(\xi^M)$  has a non trivial dependence on the null coordinate,  and so does  $\sigma_{MN}^0$ unless it is strictly zero (see eq. \eqref{eq:finalEqs}).  Therefore, in general, both the potential $\eta(\xi)$, and the initial value $\sigma_{MN}^0|_{\cS_{\xi^1_0}}$ will transform non-trivially under supertranslations. 
\end{itemize}  
At this point, we could impose  an appropriate gauge fixing conditions to eliminate the  ambiguity associated with the transformations\footnote{This is the approach used in the  membrane paradigm for the description of black holes (e.g. see appendix D in  \cite{Price:1986yy}). Other examples of this method are  reviewed in \cite{Gourgoulhon:2005ng}.  } \eqref{eq:dataST}. However, following the strategy used for null infinity in \cite{Ashtekar:1981hw},  we will deal with this redundancy introducing a supertranslation invariant free data set, and proving that it contains the same information about the spacetime geometry as the original data  \eqref{eq:dataDef}. This is a rigorous way to ensure that we do not exclude physically allowed configurations of the NEH.

\paragraph{Supertranslation invariant data.} 
 
In order to define a  free data set which is composed of quantities invariant under supertranslations  it is convenient to parametrise the null direction of the horizon using the scalar potential $\eta(\xi)$. Note that this parametrisation is well defined, since  $\eta(\xi)$ increases monotonically  along the null direction everywhere in $\Sigma$, i.e. $\pd_n \eta = \kappa_0 >0$.  Then, using the fact that both $\sigma_{MN}^0(\xi)$ and $\eta(\xi)$ transform as scalar fields under supertranslations, we can construct  a supertranslation invariant variable expressing the evolution of $\sigma^0_{MN}$ along the null direction in terms  of $\eta$.
For this purpose, let us  write the null coordinate in terms of $\eta$ as $\xi^1 = H(\eta,\xi^M)$, 
\be
\eta(\xi)  = \kappa_0 \xi^1 + h(\xi^M) \qquad \Longrightarrow \qquad H(\eta,\xi^M) = \ft{1}{\kappa_0} \big(\eta - h(\xi^M)\big).
\label{eq:defH}
\ee
As the potential $\eta(\xi)$ changes under supertranslations \eqref{eq:etaTransf}, the inverse function $H(\eta,\xi^M)$ needs to transform accordingly
\be
H(\eta,\xi^M) \to H'(\eta,\xi^M) = \ft{1}{\kappa_0}\big(\eta - h(\xi^M)\big) - A(\xi^M).
\ee
Thus, we can characterise the evolution of $\sigma_{MN}^0$ along the null coordinate in terms of the following   \emph{supertranslation invariant} variable 
 \begin{empheq}[box=\widefboxb]{align}
s_{MN} (\eta, \xi^M) \equiv \sigma^0_{MN} \big(H(\eta,\xi^M), \xi^M\big).
\label{eq:defSigmaGI}
\end{empheq}
To prove that this object is invariant under  \eqref{eq:dataST} we just need to use its definition in combination with the transformation properties of $\sigma_{MN}^0$ and the function $H(\eta,\xi^M)$ under \eqref{eq:dataST}
\bea
s_{MN}'(\eta,\xi^M) &=& \sigma^0_{MN}{}' \big(H'(\eta,\xi^M), \xi^M\big) = \sigma^0_{MN}{} \big(H'(\eta,\xi^M) + A(\xi^M), \xi^M\big) = \nonumber\\
&=& \sigma^0_{MN}{} \big(H(\eta,\xi^M), \xi^M\big) = s_{MN}(\eta,\xi^M).
\label{eq:sMNinvariant}
\eea 
Therefore, the transformed form of  $s_{MN}' (\eta, \xi^M)$ after a supertranslation is the same function of $\eta$ as $s_{MN} (\eta, \xi^M)$, which proves that this object is completely  invariant under the action  of supertranslations.

 Following a similar line of argument it can also be shown that 
$s_{MN}$ is invariant under gauge transformations \eqref{eq:dataRedundancies} with $\zeta^a(\xi) =( \lambda \xi^1 , \xi^M)$, where $\lambda$ is a constant over the horizon.

\paragraph{Solution to the constraint equations.} The constraint equation  for $s_{MN}(\eta,\xi^M)$ is obtained expressing the last equation in  \eqref{eq:finalEqs} in terms of $\eta$ and $s_{MN}$. Using that $\pd_\eta H = 1/\kappa_0$ we find 
\be
\pd_\eta s_{MN} =  \pd_\eta H \,  \pd_n \sigma_{MN}^0|_{H(\eta)}\qquad  \Longrightarrow\qquad  \pd_\eta s_{MN} = - s_{MN},
\label{eq:GIequations}
\ee
which  has the general solution  
\be
s_{MN}(\eta,\xi^M) =  s_{MN}|_{\eta_0}  \; \rme^{- \, (\eta-\eta_0)},
\label{eq:finalSol}
\ee
and  $\eta_0$ is an arbitrary  constant which reflects the ambiguity in the definition of $\eta$. This  ambiguity can also be eliminated imposing an additional condition on the data, e.g. the normalisation
\be
\frac{1}{a_\cH} \oint_{\cS_{\eta=0}} d\xi^2  q\; s_{MN} s^{MN}= 1,
\ee
 where the integral is over the spatial section $\cS_{\eta=0}$ of the horizon, $q\equiv \sqrt{\det(q_{MN})}$ and  $a_\cH$ is the area of $\cS_{\eta=0}$.

 The result \eqref{eq:finalSol}, together with the equations \eqref{eq:finalEqs}, imply that the full NEH geometry can be encoded in the functional form of the potential $\eta = \kappa_0 \xi^1 + h(\xi^M)$, combined with the following free data set  
\begin{empheq}[box=\widefboxb]{align}
\text{Free horizon data:} \qquad  \mathscr{D}_{free}\equiv (q_{MN}|_{\cS_{\eta_0}},  \quad \Omega_M^0|_{\cS_{\eta_0}}, \quad s_{MN}|_{\cS_{\eta_0}}),
\label{eq:completeFreeData}
\end{empheq}
which is specified on a spatial slice  of the horizon $\cS_{\eta_0}$ defined by  $\eta = \eta_0$. The full NEH geometry can be recovered from these quantities  using that $q_{MN}$ and $\Omega_M^0$  are constant along the null  direction of the horizon and \eqref{eq:finalSol}. In particular $q_{MN}|_{\cS_{\eta_0}}$ determines the intrinsic geometry of the NEH, and  $\Omega_M^0|_{\cS_{\eta_0}}$  can be associated to its angular momentum aspect when $q_{MN}$ admits an $SO(2)$ isometry  (see \cite{Gourgoulhon:2005ng}). It is also interesting to note that  $s_{MN}$ (which is symmetric and traceless) has two independent components, matching the number of radiative degrees of freedom of the gravitational field.

Due to  \eqref{eq:finalEqs}, the first two elements  of \eqref{eq:completeFreeData}, $q_{MN}|_{S_{\eta_0}}= q_{MN}|_{S_{\xi^1_0}}$ and   $\Omega_{M}^0|_{S_{\eta_0}} =  \Omega^0_{M}|_{S_{\xi^1_0}}$, coincide with the first two elements in \eqref{eq:freeData1}, which we argued to be invariant  under supertranslations. Moreover, the third element $s_{MN}|_{\cS_{\eta_0}}$ is also invariant under \eqref{eq:dataST} due to \eqref{eq:sMNinvariant}, implying that none of the elements in \eqref{eq:completeFreeData} involve any unfixed gauge freedom.  As a consequence distinct data sets $\mathscr{D}_{free}$  generate gauge inequivalent NEH structures, and thus, the corresponding spacetime geometries must be different as well.   In other words, all the  elements in $\mathscr{D}_{free}$  are  \emph{necessary}  to characterise completely the geometry of the NEH.   Note, however, that  the potential $\eta = \kappa_0 \xi^1 + h(\xi^M)$ is also part of the NEH initial data set, and it transforms non-trivially under supertranslations \eqref{eq:etaTransf}.

\subsection{Free horizon  data  and the spacetime curvature.}
\label{sec:NewmanPenrose}

We will now show explicitly that the free data set $\mathscr{D}_{free}$ \eqref{eq:completeFreeData} encodes all the information about the  curvature of the ambient spacetime $\cM$   contained in the original data set \eqref{eq:dataDef}.  In other words,  the supertranslation invariant data set \eqref{eq:completeFreeData} is both \emph{necessary and sufficient}  to  characterise the entire  NEH geometry, and thus no knowledge about the functional form of  the potential $\eta(\xi)$ is required.  In addition, using the Newman-Penrose formalism, we will argue  that $\mathscr{D}_{free}$ is sufficiently general to  describe radiative processes taking place at the horizon.

In  the case of horizons embedded in vacuum, $T_{\mu \nu}=0$, the curvature of the ambient spacetime is completely described by the Weyl tensor $R_{\mu\nu\rho\sigma} = C_{\mu\nu\rho\sigma}$, or equivalently, by  the five Weyl scalars $\Psi_n$, with  $n=0,\ldots,4$. Since the connection coefficients  \eqref{eq:connectionCoeff} only characterise the spacetime connection along the horizon they only constrain  four of the Weyl scalars. Indeed, the computation of $\Psi_4$ requires knowledge about the spacetime connection off the hypersurface, which is not available in  \eqref{eq:connectionCoeff}.  Therefore, all we need to show is that all the information about the spacetime curvature contained  in the  Weyl scalars $\Psi_n$, with $n=0,\ldots, 3$ is also encoded in the free data set $\mathscr{D}_{free}$ \eqref{eq:completeFreeData}. 
 
The computation of the Weyl scalars can be done as described in  section \ref{sec:WeylScalars}. First, without loss of generality, we specify an arbitrary point $\xi^M_0$ on the spatial sections $\cS_{\xi^1}$ of the horizon, choosing the coordinates $\xi^M$ such that\footnote{This choice determines  our gauge fixing conventions \eqref{eq:metricData} at $\xi^a=\xi_0^a$. As discussed in section \ref{sec:redundancies} this choice is preserved by the residual gauge redundancies \eqref{eq:dataRedundancies},  including supertranslations. } $q_{MN}(\xi_0) = \delta_{MN}$. Then, the Weyl scalars can be obtained from    \eqref{eq:defWeyl}  contracting the Weyl tensor 
with the elements  of the Newman-Penrose tetrad $\cB_{NP} = \{n,\ell,m,\overline  m \}$. 
The explicit expressions for the Weyl scalars $\Psi_n(\xi^1,\xi^M)$ are functions of the coordinates $\xi^a$, and therefore they are not invariant under diffeomorphisms of the abstract manifold.  However, if we impose the gauge fixing conditions   \eqref{eq:metricData}, \eqref{eq:GaugeFix1} and \eqref{eq:GaugeFix2} the only remaining gauge transformations are supertranslations, $\xi^1 \to \xi^1 + A(\xi^M)$. Thus, in order  to deal with this  freedom   we  introduce  the \emph{gauge corrected} Weyl scalars
\be
\Psi_n^{c}(\eta, \xi^M) \equiv   \Psi_n(H(\eta,\xi^M), \xi^M).
\label{eq:GCWdefs}
\ee
In the case  of non-expanding horizons embedded in vacuum these quantities read 
\begin{empheq}[box=\widefboxb]{align}
\Psi_0^c = \Psi_1^c=0,\qquad \Psi_2^c = -\ft14 \cR + \ft\rmi{2} \cJ, \qquad \text{with}  \qquad \cJ \equiv D_{[2} \Omega_{3]}^0,
\label{eq:Psi012final}
\end{empheq}
 and  
 \be
\Psi^c_3 =\ft1 {\kappa_0\sqrt{2}} \Big[ D  s |_{\eta_0} \; \rme^{-(\eta-\eta_0)}  + \hat D \Psi_2^c + 3 \, \hat \Omega^0 \Psi_2^c  + 3 \, \hat \Omega^e \Psi_2^c  \Big].
\label{eq:Psi3final}
\ee
Here we have used the notation $ \hat D \equiv  (D_{2} -\rmi D_{3})$, and $\hat \Omega \equiv (\Omega_2 - \rmi \Omega_{3})$. We  have also defined the complex field
\be
Ds(\eta,\xi^M)  \equiv D^M s_{M2} +  \Omega^{0|M} s_{M2} - \rmi  (D^M s_{M3} + \Omega^{0|M} s_{M3}).
\ee
The details of the computation 
can be found in  appendix \ref{app:weyl}.
The  Weyl scalars $\Psi_0^c$ and $\Psi_1^c$,  represent gravitational radiative modes 
which propagate  into the horizon, and their vanishing can be seen as a consistency condition for the horizon to be non-expanding.  The scalar $\Psi_2^c$ encodes the coulomb contribution of the gravitational field and,  when $q_{MN}$ admits an axial killing vector field,   $\cJ$ characterises  the angular momentum aspect of the NEH (see \cite{Gourgoulhon:2005ng}). Finally, the Weyl scalars     $\Psi_3^c$ and $\Psi_4^c$  can be associated to radiative modes propagating along the horizon. 

As we explained above, the structure of the NEH does not constrain the value of the fourth Weyl scalar, and  thus  a priori $\Psi_4^c$ can take any value on $\cH$.   Since we also have  $\Psi_0^c=\Psi_1^c=0$ and  $\Psi_2^c, \Psi_3^c \neq 0$, 
we can  conclude that  the Weyl tensor on a NEH will be generically of Petrov type II (see \cite{Chandrasekhar:1985kt,Gourgoulhon:2005ng}). Then, in general,  
the gravitational field on the NEH  will  contain a radiative component   \cite{Ashtekar:2000hw}. In other words, the NEH structure  is sufficiently general to allow for the presence of gravitational radiation on the horizon.  \\

It is straightforward to check that $\Psi_0^c$, $\Psi_1^c$ and $\Psi_2^c$ are invariant under supertranslations, and that $\Psi_2^c$ can be computed from the elements in $\mathscr{D}_{free}$. However, the expression \eqref{eq:Psi3final} for $\Psi^c_3$ still involves gauge dependent quantities which are not part of the free data set \eqref{eq:completeFreeData}, namely, the surface gravity $\kappa_0$, which depends on the normalisation of the null normal, $\bn$, and the exact part of the Hajicek one form $\Omega_M^e$ which transforms under supertranslations. We will now show that both quantities can be associated to well known gauge redundancies of the Newman-Penrose formalism, i.e. the freedom to perform  rotations of the null tetrad $\cB_{NP}$.  That is, neither $\kappa_0$ or $\Omega_M^e$   involve any information about the spacetime geometry. More specifically, we will prove that the expression \eqref{eq:Psi3final} for $\Psi_3^c$ represents the same spacetime geometry as
  \begin{empheq}[box=\widefboxb]{align}
\Psi^c_3 =\ft1 {\sqrt{2}} \Big[ D  s |_{\eta_0} \; \rme^{-(\eta-\eta_0)}  + \hat D \Psi_2^c + 3 \, \hat \Omega^0 \, \Psi_2^c\Big],
\label{eq:Psi3TrueFinal}
\end{empheq}
 which is completely determined by the elements of $\mathscr{D}_{free}$. Both expressions for the third Weyl scalar \eqref{eq:Psi3final}  and \eqref{eq:Psi3TrueFinal}  are projections of the \emph{same Weyl tensor} associated to two different null tetrads $\cB_{NP}$.
 
We will begin  considering the surface gravity $\kappa_0$. The Newman-Penrose formalism has an inherent gauge freedom associated to the choice of null tetrad $\cB_{NP}$ which, in a general setting, is only required to satisfy the orthogonality and normalisation conditions \eqref{eq:nullTetradProds}. Thus, when there are no further restrictions, it is possible to perform the following redefinition of the null tetrad $\cB_{NP}$ which preserves \eqref{eq:nullTetradProds} 
\be
n'  = \lambda n, \qquad \ell' = \lambda^{-1} \ell, \qquad m'=m, \qquad \overline m' =\overline m,
\label{eq:tetradScaling}
\ee
where $\lambda$ is real scalar field on $\Sigma$. From \eqref{eq:defWeyl} it is immediate to check that if we transform the null tetrad as in  \eqref{eq:tetradScaling} --keeping  the Weyl tensor fixed-- the gauge corrected  Weyl scalars behave as (see section 8 in \cite{Chandrasekhar:1985kt})
\be
\Psi_0^c{}' =  \Psi_0^c{}=0, \qquad \Psi_1^c{}'   =\Psi_1^c=0, \qquad \Psi_2^c{}'= \Psi_2^c, \qquad \Psi_3^c{}' =\lambda^{-1} \Psi_3^c. 
\label{eq:WeylScaling}
\ee 
That is,  $\Psi_n^c$ and $\Psi_n^c{}'$ represent contractions of the same Weyl tensor with the elements of two different tetrads, $\{n,\ell, m, \bar m\}$ and $\{n',\ell', m', \bar m'\}$ respectively, and thus the two sets of Weyl scalars describe the same spacetime geometry.
In our setting,  the null tetrad is fully determined by the  elements in the basis $\cB$ adapted to $\cH$, and thus the rotations \eqref{eq:tetradScaling} must always be associated  
to a gauge transformation \eqref{eq:dataRedundancies}  for consistency with the definition of $\cB$.
 Actually, it is possible to implement a rotation of the null tetrad of the  form \eqref{eq:tetradScaling}  performing a transformation \eqref{eq:dataRedundancies}  with  $\zeta^a(\xi) = (\lambda \xi^1, \xi^M)$, where $\lambda>0$ is an arbitrary positive constant. The corresponding change in  the hypersurface data can be derived from \eqref{eq:dataRedundancies}, and the definition of $s_{MN}$ \eqref{eq:defSigmaGI} 
\be
\cR'=\cR, \qquad   \kappa_0' =\lambda\,  \kappa_0,\qquad  Ds' = Ds, \qquad \hat \Omega'=\hat \Omega. 
\label{eq:rotationIII}
\ee 
Then, using this data to compute the transformed Weyl scalars \eqref{eq:Psi012final} and  \eqref{eq:Psi3final} it is straightforward to check that the behaviour of $\Psi_n^c$  under these transformations is precisely   \eqref{eq:WeylScaling}, i.e. it is indistinguishable from the effect of a null tetrad rotation. This proves  explicitly that  gauge transformations with  $\zeta^a(\xi) = (\lambda \xi^1, \xi^M)$ and  $\pd_a\lambda=0$ leave invariant the spacetime geometry, and thus $\kappa_0$ can be set to any arbitrary value in  \eqref{eq:Psi3final} without changing the geometric information encoded in   $\Psi_3^c$.

We will now discuss the role of the exact part of the Hajicek one-form $\Omega^e_M$ in \eqref{eq:Psi3final}. Consider the following redefinition  of the null tetrad $\cB_{NP}$ which preserves the scalar products \eqref{eq:nullTetradProds}
\be
n' = n, \qquad m' = m + a  \, n, \qquad  \overline m' = \overline m + \overline a  \, n, \qquad  \ell' = \ell - \overline a \,  m -  a \, \overline  m  - a  \overline a \, n,
\label{eq:WeyltetradST}
\ee
where  $a= a_2(\xi) + \rmi a_3(\xi)$ is a complex valued function on $\Sigma$. Under this change of null tetrad, and keeping the Weyl tensor fixed,  $\Psi_n^c$ behave as (see  \cite{Chandrasekhar:1985kt})
 \be
\Psi_0^c{}' =  \Psi_0^c{}=0, \qquad \Psi_1^c{}'   =\Psi_1^c=0, \qquad \Psi_2^c{}'= \Psi_2^c, \qquad \Psi_3^c{}' =\Psi_3^c +3 \overline a\,  \Psi_2^c. 
\label{eq:WeylST}
 \ee 
In our framework it can be shown, using \eqref{eq:mbarm} and \eqref{eq:basisChange},  that  supertranslations $\xi^1  \to \xi^1 + A(\xi^M)$  induce a rotation of the null tetrad $\cB_{NP}$ of the form \eqref{eq:WeyltetradST}   with $a \equiv (A_2 + \rmi A_3) /\sqrt{2}$. 
 Moreover,  from \eqref{eq:dataST}  it follows that supertranslations  act on the 
 data appearing in \eqref{eq:Psi012final} and  \eqref{eq:Psi3final}  as 
\be
\cR'=\cR,\qquad  \kappa_0' = \kappa_0,\qquad  Ds' = Ds, \qquad \hat \Omega^e{}'=\hat \Omega^e +\kappa_0 \sqrt{2}  \, \overline a. 
\label{eq:rotationI}
\ee 
These transformations  lead precisely to the  behaviour of the Weyl scalars described by  \eqref{eq:WeylST} when we apply them to \eqref{eq:Psi012final} and  \eqref{eq:Psi3final}. Therefore, the effect of a supertranslation in the Newman-Penrose formalism is entirely equivalent to a rotation of the null tetrad, which has no effect on the horizon geometry.  As a consequence, the quantity  $\hat \Omega^e$  could be changed to any value in \eqref{eq:Psi3final} without affecting the information about the spacetime curvature carried by $\Psi_3^c$, e.g. it could  be eliminated from  \eqref{eq:Psi3final} choosing $\overline a = - \hat \Omega^e/(\kappa_0\sqrt{2})$ in \eqref{eq:WeyltetradST}. This concludes our proof  that  the two expressions  \eqref{eq:Psi3final} and \eqref{eq:Psi3TrueFinal} for the third Weyl scalar $\Psi^c_3$ can be identified as projections of \emph{the same Weyl tensor} expressed in terms of two different null tetrads. As a consequence  \eqref{eq:Psi3final} and \eqref{eq:Psi3TrueFinal}   represent the same 
spacetime geometry, which can be entirely encoded in the supertranslation invariant free data set $\mathscr{D}_{free}$. \\

 Summarising,  we have argued that the  horizon free data set  \eqref{eq:completeFreeData}  is both \emph{necessary and sufficient} to reconstruct all the information about the spacetime geometry determined by the NEH structure:
\begin{itemize}
\item the free data set $\mathscr{D}_{free}$   \eqref{eq:completeFreeData} involves no gauge degrees of freedom,
\item  all the information about the  spacetime curvature  which is contained in \eqref{eq:dataDef}   is also encoded in  $\mathscr{D}_{free}$, and
\item  the corresponding data about the curvature tensor   can be recovered using the expressions of the Weyl scalars \eqref{eq:Psi012final} and  \eqref{eq:Psi3TrueFinal}, and the  relation  $R_{\mu\nu\rho\sigma} = C_{\mu\nu\rho \sigma}$ which holds in vacuum.  
\end{itemize}
Since the all the elements in the data set   $\mathscr{D}_{free}$ are invariant under supertranslations, this result completes the proof  that horizon supertranslations act trivially on the NEH geometry.

\section{Radiative vacua of null infinity}
\label{sec:Scri}

For completeness, in this section we review the role of BMS supertranslations at null infinity  using Penrose's conformal framework, and the intrinsic description of $\cI$ developed in \cite{Geroch1977,Ashtekar:1981hw} (see also \cite{Ashtekar:1981bq,Ashtekar:1987tt,Ashtekar:2014zsa}). In particular, we will reproduce the well known result that the radiative vacuum of asymptotically flat spacetimes is degenerate,  and we will discuss the connection of this degeneracy  with supertranslations.  At null infinity the would-be gauge degree of freedom  associated to  BMS supertranslations is \emph{necessary} to have a complete  characterisation of the dynamics of $\cI$, i.e. it cannot be gauged away.  Thus,  BMS supertranslations act non-trivially on the geometric data of $\cI$.

 The difference between the dynamical behaviour of a NEH and null infinity can be traced back to three main causes.  First,  the structure of null infinity is only defined up to conformal transformations, what requires that  the dynamical degrees of freedom of $\cI$ are encoded in appropriate \emph{equivalence classes  of data sets}. Second, the Ricci tensor of the conformal completion of the physical spacetime does not satisfy the ordinary Einstein's equations,  implying that the constraint equations we used for  NEH's, \eqref{eq:DNS2} and \eqref{eq:transverseCon2},  are no longer valid for null infinity. And finally, contrary to the case of horizons, the boundary conditions for the gravitational field  at null infinity allow for gravitational radiation propagating in a transverse direction to reach $\cI$.   

The geometry of null infinity will be described using the same formalism as in the case of non-expanding horizons, and thus the following analysis shall serve as  a non-trivial consistency check of our approach.

\paragraph{Asymptotically flat spacetimes.} Let us begin recalling the definition of asymptotic flatness and null infinity following \cite{Ashtekar:1981hw}. A spacetime $(\hat \cM,\hat g)$ is said to be asymptotically flat at null infinity if it is possible to find a  spacetime $(\cM, g)$, together with an embedding $\Psi: \hat \cM \rightarrow \cM$, and a function $\Omega$ on $\cM$ such that 
\begin{itemize}
\item[(i)] $\Psi^*g_{ab} = \Omega^2\,  \hat g_{ab}$ on $\hat \cM$.
\item[(ii)] 
 $I\cong \mathbb{S}^2 \times \mathbb{R} $ is the  boundary of $\Psi(\hat \cM)$ on $\cM$,  located at $\Omega=0$.
 \item[(iii)]   The normal form is given by $n_\mu \equiv \nabla_\mu \Omega\neq 0$ on $I$.
\item[(iv)] There is a neighbourhood  of $I$ on $\cM$, such that $\hat g_{ab}$ satisfies the vacuum Einstein equations, i.e.  $\hat R_{ab}=0$.
\end{itemize}
The spacetime $(\cM, g)$ is called the unphysical spacetime, and the hypersurface $I\subseteq\cM$, which is null as a consequence of $(i)$, $(ii)$ and $(iv)$, is referred as  null infinity. Note that, if the pair $(\Omega, g)$ defines an appropriate conformal completion, so does the pair $(\omega\, \Omega, \omega^2\, g)$ for some smooth positive function\footnote{the factors of $\omega$ are chosen so that the physical metric $\Omega^{-2} g$ remains the same.} $\omega$ on $\cM$.  Two asymptotic completions  related in this way are regarded as equivalent, and the freedom to perform such conformal transformations should be considered as a gauge redundancy.

\paragraph{Hypersurface data of null infinity.}To describe the geometry of the null hypersurface $I$  we can use the formalism introduced in section \ref{sec:nullGeometry}. Thus, we introduce an abstract manifold $\cI$, which acts as a diffeomorphic copy of $I\subseteq\cM$ detached from  the unphysical spacetime, 
 and the identification is performed via the embedding  $\Phi: \cI \rightarrow\cM$, such that $\Phi(\cI) =I$.   We will also choose the coordinate system $\xi^a$ for $\cI$ and the rigging $\ell$  following the conventions in  section \ref{sec:nullGeometry}, so that  the hypersurface data can be represented by the set of quantities  \eqref{eq:dataDef}. In the case of null infinity  it is possible to simplify  the hypersurface data taking advantage of the freedom to perform conformal transformations    $(\Omega, g) \to (\omega \Omega, \omega^2 g)$. Actually, under this change the normal form is rescaled  as $\bn \to \omega\, \bn$, what can be   used  to require that the normal vector $n^\mu$  satisfies \cite{Geroch1977} (see also \cite{Wald:1984rg})
\be
\nabla_\nu n^\mu =0 \ \ \text{on $I$}\qquad  \Longrightarrow\qquad  \kappa=\Omega_M=\Theta_{MN}=0,
\label{eq:divFree}
\ee
where the conditions on the connection coefficients follow  from \eqref{eq:connectionCoeff}.
In this gauge, the second fundamental form vanishes, and therefore null infinity $\cI$ admits a description as a non-expanding null hypersurface with $\pd_n q_{MN}=0$.  In addition, using the same conformal freedom, we can  impose that $q_{AB}$  describes a two dimensional metric of constant scalar curvature $\cR$ \cite{Geroch1977}.  In this setting the hypersurface data of $\cI$ has the form
\be
\gamma_{ab} = \begin{pmatrix}
0&0\\
0& q_{MN}
\end{pmatrix}, \qquad \ell^a=(1,0,0), \qquad \ell^{(2)}=0, \qquad Y_{ab} = \begin{pmatrix}
0&0 \\
0& \Xi_{MN}
\end{pmatrix}.
\label{eq:gauge2NI}
\ee
In the following, to simplify the notation, we will make no distinction between null-infinity $I$ and its abstract copy $\cI$.
\paragraph{Residual gauge redundancies.}    
The conventions introduced above  do not eliminate all the gauge redundancies of our description.  Regarding the conformal transformations, the present setting fixes completely    the normalisation of the null vector $n$. However, we are still allowed  to perform conformal transformations with a conformal factor that satisfies $\omega|_I=1$ at null infinity, but takes  arbitrary values away from it.
From the definition of the hypersurface data it is straightforward to check that under this residual conformal transformations the metric data $\{\gamma_{ab}, \ell_a, \ell^{(2)}\}$ remain invariant, but the transverse components of the tensor $Y_{ab}$ transform as 
\be
\Xi_{MN}' = \Xi_{MN} +\lambda\,  q_{MN},
\label{eq:scriConf}
\ee
 where  $\lambda(\xi)\equiv \cL_\ell \omega|_\cI$ can be any smooth function  on $\cI$.

In addition to these conformal transformations, our description of null infinity also involves redundancies associated to the  freedom to perform diffeomorphisms on the abstract manifold, and the choice of rigging.  Actually,  the analysis of the gauge redundancies of the hypersurface data that we presented in section \ref{sec:redundancies}  is also applicable here,  since the condition \eqref{eq:divFree} implies that null infinity can be described  a non-expanding null hypersurface, and the conventions \eqref{eq:gauge2NI}  are the same we used to study  horizons. Taking the results of section \ref{sec:redundancies} into account,  and recalling that the normalisation of the null normal $n = \pd_{\xi^1}$ is fixed by our choice of conformal gauge, 
 we find that  the only residual freedom of this type are BMS supertranslations\footnote{Similarly to the case of horizons, BMS supertranslations can be  described as hypersurface symmetries of null infinity $I\subseteq \cM$  (see \ref{app:BMSgroup}).   }  \eqref{eq:dataST}
 \be
 \Xi_{MN}'= \Xi_{MN} - D_M A_N,
\label{eq:scriST}
\ee
which corresponds to a diffeomorphism of the abstract manifold $\cI$ acting as $\zeta^a(\xi) = (\xi^1 + A(\xi^M),\xi^M)$.  Note that, since $\kappa=0$, these transformations leave invariant all other elements of the hypersurface data. We will denote the group of BMS supertranslations by $\cS$.

It might seem surprising that we did not encounter the BMS group when discussing the gauge freedom at null infinity. The reason is that   we fixed the scale of the null normal  (by  eqs. \eqref{eq:divFree}, and choosing $q_{MN}$ to describe a sphere with curvature $\cR$) before characterising the residual gauge freedom of our description.    Indeed,  the  BMS group can be recovered  as the set of gauge transformations that leave invariant the conventions \eqref{eq:metricData} up to a conformal transformation  $(\Omega,g) \to (\omega \Omega,\omega^2 g)$, with $\omega\neq1$ at $\cI$. When the conformal transformations are gauge fixed so that only those with $\omega|_{\cI}=1$ are allowed, the remaining residual gauge is given by the transformations \eqref{eq:scriConf} and \eqref{eq:scriST} that we described above. A derivation of the full BMS group can be found in appendix \ref{app:BMSgroup}, where we do a similar  analysis  to that of section \ref{sec:supertranslationsGauge} for non-expanding horizons. In the appendix we show that the BMS group can be described as the set of diffeomorphisms of the unphysical spacetime which leaves invariant the metric tensor $g_{\mu\nu}$ and the null normal $\bn$ of $\cI$ up to a conformal transformation, $g \to \omega^2\, g$ and $\bn \to \omega\,  \bn$. 

\paragraph{Constraint equations.} As in the case of non-expanding horizons, the hypersurface data of null infinity cannot be specified freely.  It must be consistent with the constraint equations (\ref{eq:Raychad}- \ref{eq:Xi}), which are mathematical identities satisfied by any null hypersurface.  In the previous paragraphs we have presented most of the elements involved in these equations, and it only remains to compute the terms \eqref{eq:sourceTerms}. 
The crucial difference with our previous discussion of non-expanding horizons is that, although the physical spacetime is vacuum in a neighbourhood of $\cI$ (condition ($iv$)), the \emph{unphysical Ricci tensor} $R_{\mu\nu}$ \emph{does not vanish}. This is just a direct consequence of the non-trivial transformation properties of the Ricci tensor under the conformal rescaling of the metric (see \cite{Wald:1984rg}).  Therefore,  \eqref{eq:DNS2} and \eqref{eq:transverseCon2} are not valid for $\cI$.

In the following subsection we will characterise the  terms  \eqref{eq:sourceTerms} of the constraint equations of null infinity, i.e. the unphysical Ricci tensor $R_{\mu\nu}$,  and then we will turn to  the resolution  of the constraints  in section \ref{sec:solEqScri}.

\subsection{The Ricci tensor at null infinity}

In order to compute the Ricci tensor at points of null infinity it is convenient to note that the Weyl tensor $C_{\rho\sigma\mu\nu}$ is vanishing at $\cI$. This implies  that the unphysical Riemann tensor at null infinity has the general form \cite{Geroch1977,Ashtekar:1981hw}
\be
R_{\sigma\rho\mu \nu } =\ft12(g_{\sigma[\mu} S_{\nu] \rho} - g_{\rho[\mu} S_{\nu]\sigma}),
\label{eq:defRscri}
\ee
where  the symmetric tensor $S_{\mu \nu}$ is the Schouten tensor defined at the beginning of section \ref{sec:Hypersurfaces}.  The tensor $S_{\mu\nu}$ has a particularly simple form when expressed in the basis  $\cB=\{n, \ell, e_M\}$ due to our gauge fixing conventions \eqref{eq:divFree} and \eqref{eq:gauge2NI}.  Indeed, the divergence free condition \eqref{eq:divFree} can be used in combination with the Ricci identity to prove that the four components $S_{na}\equiv n^\mu e^\nu_a S_{\mu \nu}$  vanish  at null infinity
\bea
0&=& \ell_\sigma n^\mu e^\nu_A \, \nabla_{[\mu} \nabla_{\nu]} n^\sigma =\ell_\sigma n^\mu e^\nu_A R^\sigma_{\rho \mu \nu} n^\rho = \ft 12 S_{An},\nonumber\\
0&=& q^{AB}\, e_{A |\sigma} n^\mu e^\nu_B \, \nabla_{[\mu} \nabla_{\nu]} n^\sigma =q^{AB}\, e_{A |\sigma}  n^\mu e^\nu_B R^\sigma_{\rho \mu \nu} n^\rho =- S_{nn}.
\label{eq:GaugeSchouten}
\eea
Moreover,  it is also possible to show that the components of the Schouten tensor satisfy
$S_M^M  = \cR$,
where $\cR$ is the scalar curvature of $q_{MN}$. This expression can be derived comparing the result of computing $R_{MANB} q^{AB} q^{MN}$ directly from  \eqref{eq:defRscri}, with the outcome of the same computation  using the identity \eqref{eq:appRicciIdentity1} (see appendix \ref{app:weyl}) together with  the gauge conditions \eqref{eq:divFree}. We can simplify $S_{\mu\nu}$ even further making use of the  residual conformal transformations with $\omega|_\cI=1$,  which act on the Schouten tensor as (see \cite{Geroch1977}) 
\be
S'_{ab} = S_{ab}, \qquad  S_{\ell a}' = S_{\ell a} - 2 \pd_a \lambda, \qquad S_{\ell \ell}'  = S_{\ell \ell}   -2 \mu + 4 \lambda^2,
\label{eq:schGauge}
\ee
where $\lambda(\xi) = \nabla_\ell \, \omega|_{\cI}$ and  $\mu(\xi) =\ell^\mu \ell^\nu \nabla_\mu  \nabla_\nu\,  \omega|_\cI$ are two arbitrary functions on $\cI$. Therefore, we can set $S_{\ell \ell}=0$ 
 by a suitable choice of the function $\mu$.
 
 Collecting these results, and using  the inverse metric \eqref{eq:inverseG} to compute the trace of the Schouten tensor, it is possible to derive  the Ricci tensor of the unphysical spacetime using $R_{\mu\nu} = S_{\mu \nu} + \ft13 S g_{\mu\nu}$. We find the  following    non-vanishing components
\begin{empheq}[box=\widefboxb]{align}
R_{n\ell} = S_{n \ell}, \quad R_{MN} = S_{MN} +\ft12 (2 S_{n \ell} + \cR) q_{MN},
\label{eq:EinsteinScri}
\end{empheq}
and $R_{nn} = R_{nM}= R_{\ell \ell}= 0$. 
This form for the unphysical  Ricci tensor at null infinity is  universal for any asymptotically flat spacetime. We will now derive the additional conditions satisfied by $R_{\mu \nu}$ in regions of  $\cI$ where  no outgoing is radiation present.

\paragraph{Geometry of the radiative vacuum.} In order to find the relevant boundary conditions 
we need to consider  
the  leading order contribution $K_{\rho \sigma\mu \nu}\equiv\Omega^{-1}C_{\rho \sigma \mu \nu}$ to the Weyl tensor, since the  unphysical Weyl tensor $C_{\rho \sigma \mu \nu}$ always  vanishes on $\cI$  \cite{Geroch1977}.  

The condition that there is no outgoing radiation in a region of $\cI$ is most easily expressed in terms of the leading order Weyl scalars, which are defined as  components of the tensor $K_{\sigma \rho \mu\nu}$ in the basis $\cB_{NP} = \{\ell,  n, m,\overline m\}$ (see \cite{Newman:1981fn}). Note that we have changed the order of the  first two elements of the null tetrad $\cB_{NP}$, $\ell$ and $n$, with respect to section \ref{sec:NewmanPenrose},  while  $m$ and $\overline m$ are  defined by \eqref{eq:mbarm}. The relevant Weyl scalars are given by
\be
\Psi_2^0 = K^\mu_{\nu \rho \sigma} \ell_ \mu m^\nu n^\rho  \overline  m^\sigma, \qquad \Psi_ 3^0 =  K^\mu_{\nu \rho \sigma} \ell_ \mu n^\nu n^\rho \overline m^\sigma,  \qquad \Psi^0_4 = K^\mu_{\nu \rho \sigma} \overline m_\mu n^\nu \overline m^\rho n^\sigma,
\label{eq:ScriPsi}
\ee
and the conditions for no outgoing radiation at a region of  $\cI$ read \cite{Ashtekar:1987tt}
\be
\text{\emph{Radiative vacuum:}}\qquad  \Im \Psi_2^0 = 0,  \qquad \Psi_3^0 =0,\quad \text{and} \quad  \Psi_4^0=0.
\label{eq:noScriRad}
\ee
The implications of these boundary conditions on the form of the  Schouten tensor can be derived from the equations 
\be
\hspace{-3.56cm} \text{\emph{Bianchi Identities:}}\qquad \ \ \quad Ê\nabla_{[\mu} S_{\nu]\sigma} = -K_{\mu \nu \sigma \rho} n^\rho,
\label{eq:ScriFieldEq}
\ee
which are a direct consequence of  Bianchi identities of the unphysical spacetime  \cite{Geroch1977,Ashtekar:1987tt}.  In order to solve the previous equations and boundary conditions, it is   convenient to express them in the basis $\cB=\{\ell, n, e_M\}$. Taking contractions on both sides of  \eqref{eq:ScriFieldEq} with appropriate combinations of the elements in $\cB$, and  using \eqref{eq:connectionCoeff} in combination with the gauge conditions \eqref{eq:divFree} to simplify the result, we find
\bea
\Im \Psi_2^0=0 \qquad & \Longrightarrow& \qquad D_{[M} S_{N]\ell} = \Xi_{[M}^P S_{N]P},\label{eq:sch1}\\
\Psi_3^0=0 \qquad & \Longrightarrow& \qquad  \pd_{[n} S_{M]\ell}=0,\quad  \text{and}\quad D_{[M} S_{N]P}=0,  \label{eq:sch2}\\
\Psi_4^0=0 \qquad & \Longrightarrow& \qquad \pd_n S_{MN}=0. \label{eq:sch3}
\eea
A detailed derivation can be found in appendix \ref{app:scriVacua}. The last equation \eqref{eq:sch3}  implies that the components $S_{MN}$  have to be  constant along the null direction of $\cI$. Moreover, the form of $S_{MN}$ can be found solving the  second constraint in \eqref{eq:sch2} in combination with $S_M^M=\cR$, and it has the unique solution
\be
S_{MN} = \ft12 \cR \, q_{MN}.
\label{eq:vacSchouten}
\ee
The original proof can be found in \cite{Geroch1977}, but given that the setting therein is slightly different from ours, for completeness  we have  written a summary of  it  in appendix \ref{app:scriVacua}. From the previous relation  it follows   that  the right hand side of    \eqref{eq:sch1} must vanish,  which  together with the  first equation in \eqref{eq:sch2}, also implies that the components $S_{a\ell}$   take the form  $S_{a\ell} = \pd_a S_\ell$ for some function $S_\ell$ on $\cI$.  Thus, from equation  \eqref{eq:schGauge} it is straightforward to check that 
in the radiative vacuum the components $S_{a\ell}$  are pure conformal gauge. 

With these results at hand,  we can finally  obtain the  components  of the unphysical Ricci tensor on $\cI$  \emph{in the absence of outgoing  radiation}  
  \begin{empheq}[box=\widefboxb]{align}
R_{n\ell} = \pd_n S_{\ell}, \qquad R_{MN} =  (\pd_n S_{\ell} +  \cR) q_{MN},
\label{eq:EinsteinScriVac}
\end{empheq}
and $R_{nn}=R_{nM}=R_{\ell \ell}=0$. This result will allow us to write down the  constraint equations at regions of $\cI$ where there is no outgoing radiation,  and whose solutions  represent the radiative vacua of asymptotically flat spacetimes.

\subsection{Constraint equations for the transverse connection}
\label{sec:solEqScri}

Before discussing the constraint equations let us comment on the physical degrees of freedom contained in the transverse connection $\Xi_{MN}$. At the beginning of this section we identified  the trace of $\Xi_{MN}$ as a pure conformal gauge (see eq. \eqref{eq:scriConf}), and thus,
 in order to eliminate this redundancy  we will proceed as in  \cite{Ashtekar:1981hw}, identifying those connections related by a conformal transformation.  In other words, we will introduce  the equivalence relation 
\be
\Xi_{MN} \approx \Xi_{MN}' \quad \Longleftrightarrow\quad \Xi_{MN}' - \Xi_{MN} = \lambda\,  q_{MN},
\label{eq:equiv}
\ee 
where $\lambda(\xi)$ is an arbitrary smooth function  on $\cI$,  and we will  work with the resulting  equivalence classes.  
This amounts to neglecting  the trace part of the transverse connection $\Xi_{MN}$, leaving as the dynamical field its traceless part  $\Xi_{MN} - \ft12 \Xi_L^{\phantom{L} L} q_{MN}$.
Note that this quantity describes precisely two degrees of freedom, which  can be identified with the two radiative degrees of freedom of  gravitational radiation \cite{Ashtekar:1981hw,Ashtekar:1987tt}.

\paragraph{General form of the constraint equations.} We begin discussing the constraint equations for a general  situation in the presence of radiation. In particular we will show that the two components in the traceless part of $\Xi_{MN}$ are both necessary and sufficient to describe the radiative degrees of freedom of the gravitational field  at null infinity.

In the presence of radiation at null infinity, the  terms \eqref{eq:sourceTerms} in  the constraint equations can be computed from the Ricci tensor given in \eqref{eq:EinsteinScri}
\be
J_{nn} = J_{nM} = 0,\qquad J_{MN} = - S_{MN}-S_{n\ell} q_{MN} - \ft12 \cR q_{MN}.
\label{eq:sourceTermsScri}
\ee
This result together with  the gauge conditions \eqref{eq:divFree}  imply that the  Raychaudhuri \eqref{eq:Raychad} and Damour-Navier-Stokes equations \eqref{eq:NS}  are trivially satisfied on $\cI$. The only non-trivial equations are those for  the transverse components of the connection \eqref{eq:Xi}, which can be expressed as
 \begin{empheq}[box=\widefboxb]{align}
\pd_n \Xi_{MN} - \ft12 (S_{MN} + S_{n\ell} q_{MN}) \approx- \ft12 N_{MN},
\label{eq:XiScri}
 \end{empheq}
where we have used the equivalence relation \eqref{eq:equiv}.
Here  $N_{MN} \equiv S_{MN} - \ft12 \cR q_{MN}$, is the \emph{news} tensor, which vanishes in the absence of radiation passing through  $\cI$, i.e. when eqs.  \eqref{eq:vacSchouten} hold. In addition to the previous equation, the connection must satisfy one more  constraint coming from the  identity \eqref{eq:appRicciIdentity1} 
\be
D_{[M} \Xi_{N]P} = \ft12 q_{P[M} S_{N] \ell}.
\label{eq:XiScri2}
\ee
Using the  Bianchi identities satisfied  by the Schouten tensor  \eqref{eq:ScriFieldEq} it can be checked that this condition is consistent with the time evolution given by \eqref{eq:XiScri} (see appendix \ref{app:scriVacua}). In other words, if the previous equation is satisfied at any given value of the null coordinate, then equation \eqref{eq:XiScri} ensures that it will hold for all values of $\xi^1$. \\

We can now show that the two components in the traceless part of $\Xi_{MN}$ encode  the radiative modes of gravitational radiation at null infinity.  Recall that the information about the outgoing radiative modes at $\cI$ is described by the   Weyl scalars $\Im \Psi_2^0$, $\Psi_3^0$ and $\Psi_4^0$ \cite{Ashtekar:1981hw,Ashtekar:1987tt}. As we review in appendix \ref{app:scriVacua}, using the Bianchi identity \eqref{eq:ScriFieldEq}, it is possible to express these scalars as
\bea
\Im \Psi_2^0 &=& -\ft12 ( D_{[3} S_{2] \ell} - \Xi_{[3}^M S_{2]M}),\\
\Psi_3^0&=&  \ft1{\sqrt{2}} (D_{[3}S_{2]3}  - \rmi D_{[2}S_{3]2} ),\\
\Psi_4^0 &=& \ft12 (\pd_n S_{22} - \pd_n S_{33}) - \rmi \pd_n S_{23},
\eea
implying that  they are completely determined by the components of the Schouten tensor  $S_{M \ell}$ and  $S_{MN}$,  and  by the transverse connection $\Xi_{MN}$. Actually, it is easy to prove that  only the traceless part of $\Xi_{MN}$ contributes in the first equation, and that these expressions are invariant under the conformal transformations \eqref{eq:scriConf} and \eqref{eq:schGauge}. 
Then,    the constraint equations \eqref{eq:XiScri} and \eqref{eq:XiScri2} can be solved for $S_{MN}$ and $S_{M\ell}$ giving
\be
S_M^M= \ft12 \cR, \qquad S_{MN} \approx-2 \pd_n \Xi_{MN}, \qquad S_{N \ell} = q^{PM}D_{[M}\Xi_{N]P}.
\ee
In these equations too only the traceless part of $\Xi_{MN}$ contains relevant information about the geometry at $\cI$, as the contribution of the trace $\Xi_M^M$  can be identified as pure conformal gauge.  Thus, we can conclude that the traceless part of    $\Xi_{MN}$ is both \emph{necessary and sufficient}  to recover completely the  information about the radiative modes at  $\cI$ (for a more detailed derivation see \cite{Ashtekar:1981hw}).

\paragraph{Degeneracy of the radiative vacuum.} Finally, we turn to the discussion of the degeneracy of the radiative vacua
 in asymptotically flat spacetimes. The constraint equations for regions of $\cI$ with no outgoing radiation can obtained from \eqref{eq:XiScri} and  \eqref{eq:XiScri2} together with the boundary conditions \eqref{eq:EinsteinScriVac}, which imply $S_{a \ell} = \pd_a S_\ell$, and  $N_{MN}=0$. Using the equivalence relation  \eqref{eq:equiv} they read 
 \be
\pd_n \Xi_{MN} \approx0, \qquad 
D_{[M} \Xi_{N]P} \approx 0.
\label{eq:vacuumScri3}
 \ee
Note that $\Xi_{MN}$ still transforms under supertranslations. In order to characterise the set of radiative vacua  avoiding possible gauge artifacts  we need to introduce a new gauge invariant dynamical variable.  Due to our choice of conformal gauge the Hajicek one-form and the surface gravity are vanishing, and therefore we cannot  proceed as in the case of NEHs and  construct a supertranslation invariant variable analogous to $s_{MN}(\eta)$ in eq. \eqref{eq:defSigmaGI}. Instead, following \cite{Ashtekar:1981hw},  we choose a reference vacuum  $\mathring{\Xi}_{MN}$ and then we consider the differences $\Sigma_{MN}=\Xi_{MN} - \mathring{\Xi}_{MN}$, between a generic vacuum connection $\Xi_{MN}$ and the fiducial connection $\mathring{\Xi}_{MN}$. 
It is easy to check that $\Sigma_{MN}$ is invariant under supertranslations. Then, given a fixed fiducial connection $\mathring{\Xi}_{MN}$, the set of distinct $\Sigma_{MN}$ consistent with the equations  \eqref{eq:vacuumScri3} is isomorphic to the set of radiative vacua.  Since both of the connections $\Xi_{MN}$ and $\mathring{\Xi}_{MN}$  describe a radiative vacuum,  their  difference $\Sigma_{MN}$  also   satisfies \eqref{eq:vacuumScri3} due to the linearity of the equations. The general solution to \eqref{eq:vacuumScri3}, and therefore, the set of radiative vacua of null infinity is characterised by the expression
 \begin{empheq}[box=\widefboxb]{align}
\Sigma_{MN} \approx D_M f_N - \ft12 \Delta f  q_{MN},
\label{eq:vacuaScriSol}
 \end{empheq}
where $f(\xi^M)$ is any smooth function of the coordinates $\xi^M$  (see appendix \ref{app:scriVacua}).  It is important to stress that the smoothness $f(\xi^M)$ is essential  for the derivation,  which uses the fact that the spatial sections of  $\cI$ are compact and simply connected. We will denote the set of vacuum connections by $\mathring{\Gamma}$. 

The previous expression \eqref{eq:vacuaScriSol} already indicates clearly that the set of vacuum connections is infinitely degenerate.   Comparing \eqref{eq:vacuaScriSol} with \eqref{eq:scriST} it is straightforward to check that, given a fiducial vacuum $\mathring{\Xi}_{MN}$,  the most general vacuum connection is given by
 \be
\Xi_{MN} \approx \mathring{\Xi}_{MN} + D_M f_N - \ft12 \Delta f  q_{MN},
\label{eq:setVacua}
 \ee
and therefore, the difference between any two vacuum connections has the form of a supertranslation. In other words,  we can construct  the full set $\mathring{\Gamma}$ acting on $\mathring{\Xi}_{MN}$ with all the elements  of the group of BMS supertranslations $\cS$ \eqref{eq:scriST}. 
Note that  \emph{BMS translations}, i.e.  the four  dimensional subgroup $\cT \subseteq \cS$ of supertranslations satisfying
\be
D_M f_N - \ft12 \Delta f  q_{MN} =0,
\ee
 acts trivially on the  connections of null infinity.  Thus,  the set of radiative vacua is isomorphic to the group of supertranslations modulo BMS translations $\mathring{\Gamma} \cong \cS/\cT$. 
 
 It is interesting to see how the presence of a non-vanishing news tensor induces a change of the radiative vacuum.   Consider a  solution  of \eqref{eq:XiScri} where the news $N_{MN}$ is non-zero in the interval $\xi^1 \in( \xi^1_i,\xi^1_f)$ and vanishes everywhere else.
 Then, the initial and final states of the connection, $\Xi_{MN}|_{\xi_i^1}$ and  $\Xi_{MN}|_{\xi_f^1}$ respectively, represent radiative vacua of $\cI$, and have to be of the form \eqref{eq:setVacua}. 
Integrating \eqref{eq:XiScri} we find that the difference between the final and initial transverse connections $\Xi_{MN}$ is given by the expression 
\be
\delta \Sigma_{MN} \equiv \Sigma_{MN}|_{\xi_f^1} - \Sigma_{MN}|_{\xi_i^1}   =\Xi_{MN}|_{\xi_f^1} - \Xi_{MN}|_{\xi_i^1}  \approx-   \ft12 \int_{\xi_i^1}^{\xi_f^1} d\xi^1 N_{MN},   
\label{eq:newsShift}
\ee
which is invariant under supertranslations.  For a generic source of radiation the configuration of the news tensor $N_{MN}$ will be such that $\delta \Sigma_{MN}\neq0$ and therefore, in general, the initial and final transverse connections correspond to \emph{distinct radiative  vacua}. In particular, it is now clear that if we imposed a gauge fixing condition on $\Xi_{MN}$ to eliminate the freedom to perform supertranslations \eqref{eq:scriST} we would be restricting  the allowed dynamics at null infinity. \\

 From the discussion in the previous paragraphs we can see that, in contrast  to the case of horizons,  null infinity supertranslations transform the dynamical variables of $\cI$. On the one hand,  supertranslations have been shown to act non-trivially on the traceless part of the transverse connection $\Xi_{MN}$ (see eq. \eqref{eq:scriST}). On the other hand, the two components in the traceless part of $\Xi_{MN}$ are both \emph{necessary and sufficient} to describe the two degrees of freedom of  gravitational radiation at $\cI$.  Thus, connections related by a supertranslation  cannot be identified with each other, as this would require gauging away one further component of the traceless part of $\Xi_{MN}$. 
  As a consequence, BMS supertranslations must be regarded as  large gauge transformations, i.e. as \emph{global symmetries} of the  constraint equations of null infinity,  which  act non-trivially on the geometric data of $\cI$.

\section{Results and Discussion}
\label{sec:discussion}

One of the most interesting features about asymptotically flat spacetimes is the  infinite dimensional
 asymptotic symmetry group at null infinity, the BMS group.  The BMS symmetries, and in particular null infinity supertranslations, were originally characterised  as  diffeomorphisms which preserved certain coordinates conventions in a neighbourhood of null infinity \cite{Bondi:1962px,Sachs:1962wk,Sachs:1962zza}. Many years later, the study of the geometrical structure  of null infinity led to the isolation of the radiative degrees of freedom of the gravitational field \cite{Ashtekar:1981hw}, and it was understood that BMS supertranslations act  non-trivially on the radiative degrees of freedom. Actually, the radiative vacuum of asymptotically flat spacetime was shown to be infinitely  degenerate, and that it was possible to transform each of these vacua into any other with a supertranslation.

Recently it has been argued that the ASG of spacetimes containing a non-extremal black hole should be enhanced with horizon supertranslations.   These diffeomorphisms  would transform  the state of the black hole horizon in an analogous way as BMS supertranslations act on the geometric data of null infinity. According to this proposal, the  multiplicity of black hole states generated by horizon  supertranslations could  provide a partial explanation for the Bekenstein-Hawking  entropy formula.

The task of characterising  the ASG of the near horizon geometry for non-extremal black holes has been addressed in many works. However, there is  no consensus  regarding the structure of the ASG, or the physical interpretation of these diffeomorphisms. In the present paper we have presented a detailed characterisation of the geometric properties of  supertranslations defined on a generic  non-expanding horizon embedded in vacuum. For this purpose we have used a coordinate independent approach analogous to the one used in \cite{Geroch1977,Ashtekar:1981hw} to study the structure of null infinity in exact, non-linear, general relativity. In this framework, the intrinsic and extrinsic geometry of the horizon are encoded in tensor fields living on an abstract three dimensional manifold $\Sigma$, which acts as a diffeomorphic copy of horizon separated from the physical spacetime. In particular,  the corresponding set of tensor fields,  known as the  \emph{the horizon data set}, contains the dynamical degrees of freedom of the horizon, i.e. the freely specifiable and gauge invariant   data of the horizon.   In order to extract the  dynamical degrees of freedom from the  data set, and determine their behaviour under supertranslations, we have followed the strategy   described below:
\begin{itemize}
\item[1.] First, we have characterised in detail \emph{all the gauge redundancies} in our description of the NEH.
\end{itemize}
In particular we have shown that the action of   supertranslations   on the horizon data is identical to that of a gauge redundancy:  they are  associated with a reparametrisation of the null direction of the horizon, and a change of the transversal direction used to define the extrinsic geometry, (i.e. the rigging).   Thus, supertranslations leave invariant both the intrinsic and extrinsic geometry of the horizon up  to a gauge redundancy of the description.
\begin{itemize}
\item[2.] To determine the free data of the horizon we have solved the constraints imposed by the vacuum Einstein's equations on the NEH geometry.  
\end{itemize}
As a result of this analysis we have identified the set of geometric quantities which can be freely specified on the horizon, and which encode  all the  information about the spacetime curvature  contained on the NEH  geometry.  This \emph{free data set} encodes the  dynamical degrees of freedom of the horizon, but typically still involves some gauge redundancies. 
\begin{itemize}
\item[3.] The previous two analyses can be combined to characterise the gauge redundancies on the free data set. This  allows to extract a set of quantities which are both  \emph{necessary and sufficient} to reconstruct the full NEH geometry.
\end{itemize}
This procedure has led us to find a free data set which  contains no unfixed gauge degrees of freedom, and in particular,  which only involves  objects     \emph{invariant under supertranslations}.  More specifically, this free data set is composed of  quantities defined on a particular  spatial section $\cS_{\eta_0}$ of the horizon
\begin{empheq}[box=\widefboxb]{align}
\text{Free horizon data:} \qquad  \mathscr{D}_{free}\equiv (q_{MN}|_{\eta_0},  \quad \Omega_M^0|_{\eta_0}, \quad s_{MN}|_{\eta_0}).\nonumber
\end{empheq}
where $q_{MN}$ represents the induced metric on the spatial sections of the horizon, and $\Omega_M^0$ determines its angular momentum aspect when $q_{MN}$ has  an $SO(2)$ isometry. In those situations when there is gravitational radiation propagating along the horizon  the symmetric traceless tensor $s_{MN}$  can be  associated to radiative degrees of freedom of the  gravitational field.
  Since the elements of the  free data set $\mathscr{D}_{free}$ are all invariant under supertranslations, we conclude that    \emph{supertranslations act trivially on the NEH geometry}, i.e. they must be regarded as pure gauge. In particular,  the stationary state of the NEH, which corresponds to the case $s_{MN}|_{\eta_0}=0$,  is uniquely determined by  $q_{MN}$ and $\Omega_M^0$, and it does not transform under supertranslations.

A fundamental step to obtain the   supertranslation-invariant data set $\mathscr{D}_{free}$  is the choice of an appropriate  parametrisation for the null direction of the horizon.  Rather than using an arbitrary coordinate,  the null direction is parametrised by the value of a potential $\eta$,  which is defined in a coordinate invariant way (see  eq. \eqref{eq:defEta}). The horizon can be foliated by the  level sets of the potential $\eta$,  the spatial sections $\cS_\eta$, and  the evolution of the geometry along the null direction can be expressed in terms of the dependence on $\eta$ of the horizon data.   In particular, $q_{MN}$ and $\Omega_M^0$ are both constant along the null direction of the horizon, while $s_{MN}$ behaves as
\be
 s_{MN}=  s_{MN}|_{\eta_0}  \; \rme^{- \, (\eta- \eta_0)},\nonumber
\ee
which shows that any deviation away from the stationary state, i.e.  $s_{MN}=0$,  relaxes exponentially fast to it.  The use of the potential to express the evolution of the horizon data  avoids the ambiguity associated to supertranslations,  which are related to  coordinate reparametrisations of the null direction.

It is important to remark that the present work is restricted to the case the non-expanding horizons embedded in vacuum, and thus we have not considered processes  involving matter or radiation falling across the horizon.  To check if our results can be extended to more general situations we have considered  the possibility of ``implanting'' supertranslation hair on an event horizon with a non-spherical shock-wave of null matter or radiation, as proposed in \cite{Hawking:2016sgy}. We have found that,  consistently with the conclusions of this paper, the shock-wave cannot excite the degree of freedom associated to supertranslations. In other words,  the supertranslation degree of freedom cannot encode any ``memory'' about the energy momentum tensor of the shock-wave, what  is in harmony  with our identification of supertranslations as a gauge redundancy.  The corresponding analysis will be presented in a companion paper \cite{shockwave}.

\section*{Acknowledgments}
 
We would like to  thank  J. Barb\'on, P. Benincasa,  J.J. Blanco-Pillado,  G. Dvali, A. Garcia-Parrado, C. G\'omez, A. Helou,  M. Mars,  J. Mart\'in, M. Panchenko and  R. Vera  for useful discussions. We are grateful to J.M.M. Senovilla  for comments on the draft and for discussions.  This work is supported by  the Basque Government grants IT-956-16, POS-2016-1-0075 and  IT-979-16, the Spanish Government Grant FIS2014-57956-P,  the project FPA2015-65480-P,  the Centro de Excelencia Severo Ochoa Programme under grant SEV-2012-0249,  and by the ERC Advanced Grant 339169 ÓSelfcompletionÓ.
\appendix

\section{Constraint equations for null hypersurfaces}
\label{app:constraints}

In this appendix we will present a derivation of the constraint equations for null hypersurfaces (\ref{eq:Raychad}-\ref{eq:Xi}).

\subsubsection*{Raychaudhuri equation}
We begin discussing the constraint equation \eqref{eq:Raychad} for the expansion $\theta$ of the hypersurface. From the definition of the second fundamental form $\Theta_{MN} = e^\mu_M e^\nu_N \nabla_\mu n_\nu$, and using  Leibnitz rule to expand the derivative we find
\be
\pd_n \Theta_{MN} =
e^\nu_N  \nabla_n e^\mu_M  \nabla_\mu n_\nu  + e^\mu_M \nabla_n e^\nu_N   \nabla_\mu n_\nu  + e^\mu_M e^\nu_N  n^\sigma \nabla_\sigma \nabla_\mu n_\nu,
\ee
where $\nabla_n = n^\mu \nabla_\mu$. Substituting the connection coefficients \eqref{eq:connectionCoeff} we have
\bea
\pd_n \Theta_{MN} &=& \Theta_{(M}^P \Theta_{N)P} + e^\mu_M e^\nu_N  n^\sigma \nabla_\sigma \nabla_\mu n_\nu =  \nonumber \\
&=&\Theta_{(M}^P \Theta_{N)P} - R_{nMnN} + e^\mu_M e^\nu_N  n^\sigma \nabla_\mu \nabla_\sigma n_\nu,
\eea
where the second equality follows from using the Ricci identity. Using again the Leibnitz  rule andÊ\eqref{eq:connectionCoeff}  the last term can be written as
\bea
e^\mu_M e^\nu_N  n^\sigma \nabla_\mu \nabla_\sigma n_\nu &=& \pd_M (e_{N}^\mu \nabla_n n_\mu) - \nabla_M e_N^\nu \nabla_n n_\nu - 
e^\nu_N \nabla_M n^\sigma \nabla_\sigma n_\nu = \nonumber \\
&=& \kappa \Theta_{MN} - \Theta_{M}^P \Theta_{NP}.
\eea
The contribution from the Riemann tensor  can be expressed in terms of the Weyl and Ricci tensors
\be
 R_{nMnN} = C_{nMnN} + \ft12 S_{nn}q_{MN} =  C_{nMnN} + \ft12 R_{nn} q_{MN},
\ee
 where we are using the shorthand $R_{nMnN} = R_{\mu \nu \rho \sigma} n^\mu e_M^\nu n^\rho e_N^\sigma$, and similar expressions to denote the contraction of spacetime tensors with the elements of the basis $\cB = \{n,\ell, e_M\}$. 
Using the last equation we obtain
\be
\pd_n \Theta_{MN} =  \kappa \Theta_{MN}  + \Theta_{M}^P \Theta_{NP} -C_{nMnN} - \ft12 R_{nn} q_{MN}, 
\label{eq:fundamentaForm}
\ee
which known as the \emph{tidal force equation} (see \cite{Gourgoulhon:2005ng}).
The equation for the expansion $\theta = \Theta_M^M$ can be calculated from the expression
\be
\pd_n \theta = \pd_n( q^{MN} \Theta_{MN}) = \pd_nq^{MN} \Theta_{MN} + q^{MN} \pd_n \Theta_{MN}.
\ee
Note also that $\pd_n q^{MN} = - 2\Theta^{MN}$, what follows from differentiating $q_{MN} q^{NP} = \delta_M^P$ with respect to $\xi^1$ and using the relation $\Theta_{MN} = \ft12 \pd_n q_{MN}$.
Collecting all these results we   arrive to the final expression for the Raychaudhuri equation
\be
\pd_n \theta -  \kappa \theta +\Theta_{M}^P \Theta_{NP} =- R_{nn}.
\ee
\subsubsection*{Damour-Navier-Stokes equations}
In order  to derive the Damour-Navier-Stokes  equation \eqref{eq:NS} we will compute the component $R_{nA}$ of the Ricci tensor in terms of the hypersurface data. Using the expression \eqref{eq:inverseG} for the inverse metric we find
\be
R_{nA} = R_{\mu n \nu A} g^{\mu\nu}=R_{\ell n n A}  + q^{MN} R_{MnNA}.
\ee
The two terms  can be rewritten using the Ricci identity
\bea
R_{\ell n n A} &=& \ell^\sigma n^\mu e^\nu_A \nabla_{\mu} \nabla_{\nu} n_\sigma - \ell^\sigma n^\mu e^\nu_A \nabla_{\nu} \nabla_{\mu} n_\sigma,\\
R_{MnNA} &=& e_M^\sigma e_N^\mu e_A^\nu  \nabla_{\mu} \nabla_{\nu} n_\sigma - e_M^\sigma e_N^\mu e_A^\nu  \nabla_{\nu} \nabla_{\mu} n_\sigma.
\eea
Each of these four terms can be expressed in terms of the connection coefficients using the definitions \eqref{eq:connectionCoeff} and the Leibniz rule for the covariant derivative. For example, noting that $\Omega_M = \ell^\sigma e^\nu_M \nabla_\nu n_\sigma$ we have
\bea
\ell^\sigma n^\mu e^\nu_M \nabla_{\mu} \nabla_{\nu} n_\sigma  &=& \pd_n \Omega_M - \nabla_n e_M^\nu  \ell^\sigma \nabla_\nu n_\sigma -  e^\nu_M \nabla_n \ell^\sigma   \nabla_\nu n_\sigma = \nonumber \\
&&  \pd_n \Omega_M - \kappa \Omega_M -  \Omega^N \Theta_{MN}  + \kappa \Omega_M  + \Omega^N \Theta_{MN}. 
\eea
Also since $\kappa= \ell^\sigma n^\nu \nabla_\nu n_\sigma$ we have
\bea
\ell^\sigma n^\mu e^\nu_M \nabla_{\nu} \nabla_{\mu} n_\sigma &=& \pd_M \kappa - \nabla_M n^\mu \ell^\sigma \nabla_\mu n_\sigma -  n^\mu \nabla_M \ell^\sigma \nabla_\mu n_\sigma\nonumber\\
& = &\pd_M \kappa - \Omega_MÊ\kappa - \Theta_M^N \Omega_M + \kappa \Omega_M.
\eea
Collecting terms we find
\be
R_{\ell n n M} =R_{ n M \ell n}= \pd_n \Omega_M - \pd_M \kappa +   \Theta_{M}^N \Omega_N.
\label{eq:identity1}
\ee
Similarly it can be shown that 
\be
R_{MnNA} = D_{[N} \Theta_{A]M} + \Theta_{M[N} \Omega_{A]}.
\label{eq:identity2}
\ee
Then we have
\be
R_{nA} =  \pd_n \Omega_A - \pd_A \kappa + D_{N} \Theta_{A}^N - D_A \theta + \theta \Omega_{A},
\ee
which can be identified with the Damour-Navier-Stokes equations   \eqref{eq:NS}.

\subsubsection*{Equation for the transverse connection}
We now describe the  derivation of the constraint  equation \eqref{eq:Xi} for the transverse connection $\Xi_{MN}$. From the definition of the transverse connection  $\Xi_{MN} = \ft12 e^\mu_{(M} e^\nu_{N)} \nabla_\mu \ell_\nu$, and using  Leibnitz rule to expand the derivative we find
\be
2\pd_n \Xi_{MN} = \nabla_n e^\mu_{(M} \, e^\nu_{N)} \nabla_\mu \ell_\nu + e^\mu_{(M} \nabla_n e^\nu_{N)} \nabla_\mu \ell_\nu + e_{(M}^\mu e^\nu_{N)} \nabla_n \nabla_\mu \ell_\nu.
\ee
Substituting the expressions for the connection coefficients \eqref{eq:connectionCoeff} we arrive to
\bea
\pd_n \Xi_{MN} &=& -2 \Omega_M \Omega_N + \Theta_{(M}^P \Xi_{N)P}+ \ft12 e_{(M}^\mu e^\nu_{N)} \nabla_n \nabla_\mu \ell_\nu = \nonumber\\
&=& -2 \Omega_M \Omega_N + \Theta_{(M}^P \Xi_{N)P}+\ft12 R_{(N\ell nM)}+ \ft12 e_{(N}^\nu n^\sigma \nabla_{M)} \nabla_\sigma \ell_\nu,
\label{eq:appXider1}
\eea
where  we have  also used the Ricci identity to derive the second equality. Using the Leibnitz rule for the connection and   \eqref{eq:connectionCoeff} we can  rewrite the last term as
\bea	
 e_{(N}^\nu n^\sigma \nabla_{M)} \nabla_\sigma \ell_\nu &=& \nabla_{(M} ( n^\sigma e_{N)}^\nu \nabla_\sigma \ell_\nu) - \nabla_n e_\nu \nabla_{(M} e_{N)}^\nu - \nabla_{(M} n^\sigma \nabla_\sigma \ell_\nu e^\nu_{N)} =\nonumber\\
&=& -  D_{(M} \Omega_{N)} - 2 \kappa\Xi_{MN}  + 2\Omega_M \Omega_N  - \Theta^P_{(M} \Xi_{N)P}
\label{eq:appXider1half}
\eea
where $D_M$ is the Levi-Civita connection of the  spatial metric $q_{MN}$. Thus,  substituting the previous expression in it we have
\be
\pd_n \Xi_{MN} =   - \ft12 D_{(M} \Omega_{N)} - \Omega_M \Omega_N - \kappa \Xi_{MN}   + \ft12 \Theta_{(M}^P \Xi_{N)P}+\ft12( R_{N\ell nM}+ R_{M\ell nN} ).
\label{eq:appXider2}
\ee
Using the symmetries of the Riemann tensor, and the form \eqref{eq:inverseG} for the inverse metric,  we find that
\be
\ft12( R_{N\ell nM}+ R_{M\ell nN} ) = - \ft12 R_{MN} +\ft12 R_{MANB} q^{AB},
\label{eq:appXider3}
\ee
where $R_{AB} = g^{\mu\nu} R_{\mu A \nu B}$. We will now rewrite $R_{MANB}$ in terms of the connection coefficients. From the Ricci identity
we have
\be
R_{MANB} =   e_{M|\sigma}\,   \nabla_{[\mu} \nabla_{\nu]} e^\sigma_A\,   e^\mu_N e^\nu_B = \pd_{[N} ( e_{M|\sigma}  \nabla_B e_{A]}^\sigma)
-\nabla_{[N} e_{M|\sigma} \nabla_B e_{A]}^\sigma - \nabla_{[N} e_{B]}^\nu \nabla_\nu e_A^\sigma e_{M|\sigma}.
\ee
Substituting the expressions for the connection coefficients  \eqref{eq:connectionCoeff} we obtain 
\be
R_{MANB} =  \pd_{[N} ( \overline{\Gamma}_{A]B}^L q_{M L}) - \Theta_{M[N} \Xi_{A]B}  - \Xi_{M[N} \Theta_{A]B} - \overline{\Gamma}_{M[N}^L \overline{\Gamma}_{A]B}^P q_{PL},
\ee
which after contracting with $q^{AB}$ can be written as 
\be
q^{AB} R_{MANB} = \cR_{MN} + \Xi_{P(M} \Theta^P_{N)} - \theta \Xi_{MN} - \Theta_{MN} \theta^\ell.
\label{eq:identity3}
\ee
Here $\cR_{MN}$ is the Ricci tensor associated to the spatial metric $q_{MN}$.
This result can be used together with  \eqref{eq:appXider2} to express \eqref{eq:appXider1} as follows
\bea
\pd_n \Xi_{MN} &=&   - \ft12 D_{(M} \Omega_{N)} - \Omega_M \Omega_N -( \kappa +\ft12 \theta) \Xi_{MN}  + \Theta_{(M}^P \Xi_{N)P}\nonumber\\ &=& - \ft12\Theta_{MN} \theta^\ell  +\ft12 \cR_{MN}  - \ft12 R_{MN},
\eea
which leads to the constraint equation for the transverse connection \eqref{eq:Xi} after using the identity $\cR_{MN}  = \ft12 \cR q_{MN}$.

\section{Calculations for non-expanding horizons}

\subsection{Fixing the normalisation of the null normal}
\label{app:gaugeFixB}

In this appendix we will show that when the NEH data \eqref{eq:invDataSet} satisfy  one of the following  conditions on a spatial slice $\cS_{\xi^1}$
\bea
\hspace{-3cm}(i)&&\qquad \qquad \Omega^0{}^M \Omega_M^0 \leq \ft12 \cR \leq \theta^0, \nonumber\\
\hspace{-3cm}(ii)&&\qquad \qquad \Omega^0{}^M \Omega_M^0\leq \theta^0 \leq \ft12 \cR, 
\label{eq:conditionsBgauge}
\eea
it is possible to find a gauge transformation  \eqref{eq:dataRedundancies} that sets $\pd_n\theta^0=0$  on the horizon $\Sigma$. Moreover,  the gauge freedom that remains after imposing this condition is precisely that of supertranslations.

\textbf{Conditions to set} \bm{$\pd_n\theta^0=0$}. Among the residual gauge freedom \eqref{eq:dataRedundancies}, the transformations $\zeta^a(\xi)= (\hat f(\xi),\xi^M)$ that keep $\kappa$ constant (i.e. gauge condition 1 \eqref{eq:GaugeFix1})  are given by
\be \label{fTransform}
\hat f(\xi)=\xi^1 + A(\xi^M)+\ft{1}{\kappa_0}\log \( 1+B(\xi^M)\rme^{-\kappa_0 \xi^1} \) ,
\ee
\be 
\hat f_n = \frac{1}{1 + B \rme^{-\kappa_0 \xi^1}}, \hspace{.5 cm}
\hat  f_M = A_M + \frac{B_M \rme^{-\kappa_0 \xi^1}}{\kappa_0 (1+B \rme^{-\kappa_0 \xi^1})}, \hspace{.5 cm}
\hat  f_{nM} = -\frac{B_M\rme^{-\kappa_0 \xi^1}}{(1+ B \rme^{-\kappa_0 \xi^1})^2}.\nonumber
\ee
Under these transformations the data changes as follows
\bea \label{dataTransf}
\kappa'(\xi) &=& \kappa|_{\zeta(\xi)},\\
\Omega'_M(\xi) &=& \Omega_M|_{\zeta(\xi)} + \kappa_0 A_M , \\
\Xi'_{MN}(\xi) &=& (1+B\rme^{-\kappa_0\xi^1}) \big(\Xi_{MN}|_{\zeta(\xi)} -\Omega_{(M}|_{\zeta(\xi)}\;  A_{N)}  - \kappa_0 A_MA_N -D_M A_N\big)-\nonumber\\
&& - \frac{\rme^{-\kappa_0 \xi^1} }{\kappa_0} \big(\Omega_{(M}|_{\zeta(\xi)}\;  B_{N)} + \kappa_0 A_{(M} B_{N)} + D_M B_N \big).
\eea
The transformation of the quantity $\Sigma_{MN}^0$, which is invariant under \eqref{eq:dataST}, can be found by plugging the previous equations into its definition (\ref{eq:invariantXi}), giving
\bea
{\Sigma_{MN}^0}' &=& \Sigma_{MN}^0 + B\rme^{-\kappa_0 \xi^1} \(  \Sigma_{MN}^0 -\frac{1}{2}  D_{(M}\Omega'_{N)}-\Omega'_M \Omega'_N  \) -\nonumber \\
&& - \rme^{-\kappa_0 \xi^1}\(\Omega'_{(M}B_{N)}+D_M B_N \).
\eea
Taking the trace we find
\be \label{thetaTransf}
{\theta^0}' =\theta^0 + B\rme^{-\kappa_0\xi^1} \big({\theta^0} -\Omega'^M \Omega'_M - D^{M} \Omega'_{M}\big)- \rme^{-\kappa_0 \xi^1} \big(2 \Omega'^{M} B_{M} + D^M B_M \big).
 \ee
This quantity evolves according to \eqref{eq:theta0},
and thus, if at any given time (e.g. $\xi^1=0$) we can set ${\theta^0}'=\ft1{2}\cR$, then $\pd_{n'}{\theta^0}'=0$ for all $\xi^1$, where $n'$ is the vector resulting from the transformation of $n$ under the map $\zeta$. This requires to solve
\be
D^M B_M + 2 \Omega'^{M} B_{M} + \big(D^{M} \Omega'_{M} + \Omega'^M \Omega'_M -\theta^0 \big) B- (\theta^0 - \ft12\cR)=0 
\label{eq:Braw}
\ee
To simplify this expression we decompose the  Hajicek one-form $\Omega_M$  in its exact $\Omega_M^e = \pd_M \eta$ and divergence free parts $\Omega_M^0= {\varepsilon_{M}}^N \pd_N g$ as in \eqref{def:hodge}. Here  $\eta(\xi)$ and $g(\xi)$ are two smooth functions on $\Sigma$ satisfying $\pd_n \eta =\pd_n g=0$, and $\epsilon_{MN}$ is the volume one form associated to $q_{MN}$. The transformed  Hajicek one-form $\Omega_M'$ is then determined by the functions $\eta' = \eta + A$, and $g' = g$. 
With the change of variables $B = \rme^{-\eta'} \tilde B$ we find 
\bea
B_M &=& \tilde B_M \rme^{-\eta'} - \eta'_M \tilde B \rme^{-\eta'},\nonumber \\
D^M B_M &=& D^M \tilde B_M \rme^{-\eta'} - 2 \eta'_M \tilde B^M \rme^{-\eta'} + \eta'^M \eta'_M \tilde B \rme^{-\eta'}- D^M \eta'_M \tilde B \rme^{-\eta'}, \nonumber
\eea
which allows us to rewrite \eqref{eq:Braw} as 
\be \label{ellipticEqn}
D^M \tilde B_M + 2{\varepsilon_{M}}^N \tilde B^M g_N + \tilde B(g^M g_M  - \theta^0) - ( \theta^0 - \ft12 \cR ) \rme^{\eta'}=0.
\ee 
In this form we can easily identify two solutions to this equation
\be
\tilde B=-1, \hspace{1cm} A=\log \( \frac{\theta_0-g^Mg_M}{\theta^0-\ft12 \cR} \)-\eta ,
\ee 
and
\be
\tilde B=1, \hspace{1cm} A=\log \( \frac{g^Mg_M-\theta^0}{\theta^0-\ft12 \cR} \)-\eta ,
\ee
provided the argument of the logarithms in these equations are non-negative, and that  \eqref{fTransform} is well defined at $\xi^1=0$. We find the following possibilities
\bea
\tilde B=-1&:& \quad   g^Mg_M \ge  \ft12 \cR\ge \theta^0, \quad \text{or}\quad g^Mg_M \le  \ft12 \cR\le \theta^0, \label{sol2}\\
\tilde B=1&:& \quad   g^Mg_M \leq \theta^0 \leq \ft12 \cR, \quad \text{or}\quad g^Mg_M \geq \theta^0 \geq \ft12 \cR. \label{sol1}
\eea
Thus, each of these four sets of conditions (to be met a at the spatial slice $\cS_{\xi^1=0}$) is sufficient to ensure that gauge  $\pd_n \theta^0=0$ exists.

\textbf{Uniqueness of the gauge}. Now we will show that this gauge is unique up to supertranslations provided $ g^M g_M \leq \ft12 \cR$ on $\Sigma$. For this, suppose that we are already in this gauge. We would like to know what  the possible transformations which maintain this gauge are. They would have to solve (\ref{ellipticEqn}) with $\theta^0=\ft1{2}\cR$, 
\be
  D^M \tilde B_M + 2 \epsilon^{MN} \tilde B_{M} g_N + (g^M g_M  -\ft12 \cR)\tilde B =0.
\ee
We can multiply by $\tilde B$ and integrate over the sphere. After integrating the first two terms by parts and dropping the boundary terms (the spatial sections of $\cS_{\xi^1} \cong \mathbb{S} ^2$ are compact and simply connected) we get
\be
\int_{S^2} d^2\xi \big(\tilde B^M \tilde B_M + (\ft12 \cR - g^M g_M )\tilde B^2\big) =0.
\ee
This implies that, if  
\be \label{12Rgeqgg}
\ft12 \cR \geq g^M g_M
\ee
is satisfied 
the integral is the sum of two positive contributions, so we must have $B=\tilde B=0$. But $A$ is unconstrained, so the condition  $\pd_n \theta^0=0$ fixes the gauge up to transformations for which $f(\xi)=\xi^1 + A(\xi^M)$, i.e. supertranslations.  Then we arrive to the following conditions which guarantee that    the gauge \eqref{eq:dataRedundancies} can be reduced down to supertranslations
\bea
\tilde B=-1&:& \quad   g^Mg_M \le    \ft12 \cR\le \theta^0, \\
\tilde B=1&:& \quad   g^Mg_M \leq \theta^0 \leq \ft12 \cR.
\eea
Noting that $ g^Mg_M = \Omega^0{}^M \Omega^0_M$ we arrive to \eqref{eq:conditionsBgauge}.  

A priory it might seem that the first situation, for which $B<0$, is ill behaved because the domain of $\zeta(\xi)$ covers only the range
\be
\xi^1 \in \big[ -\ft1\kappa_0 \log (-1/B), \infty\big),
 \ee
 but this is not the case. Actually, the consistency condition that we should impose on $\zeta(\xi)$ is that given a tensor field defined  on $\Sigma$, e.g. $\gamma_{ab}$, the coordinate representation of the transformed tensor $\zeta^*\gamma(\xi)$ contains the same information as the original one $\gamma(\xi)$. Thus, we must require that the image of $\zeta$ is the full abstract manifold $\Sigma$. This condition ensures that scanning over the domain of $\zeta^*\gamma(\xi)$ we will access the full domain where $\gamma(\xi)$ is defined. It is straightforward to see that the image of $\zeta$ for  the two cases in \eqref{eq:conditionsBgauge} covers the following ranges of the null coordinate
\bea
(i) \; B<0&:& \quad  \hat f(\xi) \in (-\infty, \infty),\\
(ii) \; B>0 &:& \quad   \hat f(\xi) \in (A + \ft1 \kappa_0 \log B, \infty).
\eea
Then, only diffeomorphisms satisfying the condition $(i)$ are well behaved in the sense explained above. This is the condition we presented in the main text \eqref{eq:generic}.

\textbf{Supertranslation independent condition}. We will now prove that the condition $(i)$ in \eqref{eq:conditionsBgauge} 
is preserved by supertranslations, even before setting the gauge $\pd_n \theta^0=0$.
 To see this, first note that the condition (\ref{12Rgeqgg}) is preserved by supertranslations since both $\cR$ and $g_M$ transform as scalar fields under supertranslations, but neither of the two depend on the null coordinate, so they are actually invariant. 
  Finally,    $\theta^0$ also transforms as a scalar under supertranslations $\theta^0(\xi)\to{\theta^0}'(\xi)=\theta^0(\zeta(\xi))$, as it can be checked setting $B=0$ in (\ref{thetaTransf}). 
 However,  due to the form of \eqref{eq:theta0}, the RHS cannot change sign, implying that if for some $\xi^1$ the inequality  $\theta^0 \ge \ft12 \cR$ holds, then it will hold for all $\xi^1$, including the supertranslated one $\hat f(\xi)=\xi^1+A(\xi^M)$. Therefore, it is impossible to cross the bound $\theta^0 \ge \ft12 \cR$ with a supertranslation.

In conclusion, we have shown that if a given data set  satisfies the condition  $(i)$ in \eqref{eq:conditionsBgauge}, then there is always a gauge transformation of the form (\ref{fTransform}) which allows us to set $\pd_n \theta^0=0$ everywhere on the horizon. Furthermore, the gauge freedom that remains once we have done so is that of supertranslations.

\subsection{Weyl scalars}
\label{app:weyl}

In the present section we will compute the  Weyl scalars $\Psi_n$, with $n=\{0,1,2,3\}$, at a generic point $\xi^a = \xi^a_0$ of a non-expanding horizon which is embedded in vacuum, i.e. $T_{\mu\nu}=0$.  As explained in the main text, the scalar $\Psi_4$  involves information about the spacetime connection off the hypersurface, and thus it cannot be computed from the connection coefficients \eqref{eq:connectionCoeff}. 

To be more precise, we will compute the pullback of $\Psi_n$ to the abstract manifold $\Sigma$, but we will keep the pullback operation implicit in order to ease  the notation. We will use the setting described in section \eqref{sec:WeylScalars}: we choose a coordinate system for the abstract manifold $\Sigma$ such that $q_{MN}(\xi_0) = \delta_{MN}$, and  we will define the Weyl scalars in terms of the Newman-Penrose tetrad $\cB_{NP}=\{n,\ell,m,\overline{m}\}$, where $m$ and $\overline{m}$ are defined by \eqref{eq:mbarm}. The Weyl scalars are given by the expressions
\bea
\Psi_0 &=& C_{\mu \nu \rho \sigma} n^\mu m^\nu n^\rho m^\sigma =\ft12  (C_{n2n2} - C_{n3n3}) + \rmi C_{n2n3},	\nonumber\\
\Psi_1 &=& C_{\mu \nu \rho \sigma} n^\mu m^\nu \ell^\rho n^\sigma =\ft{1}{\sqrt{2}}  (C_{n2\ell n} + \rmi C_{n3\ell n}), \nonumber \\
\Psi_2 &=& C_{\mu \nu \rho \sigma} n^\mu m^\nu \ell^\rho \overline m^\sigma = \ft12 (C_{\ell2n2} + C_{\ell3n3}) +\ft{\rmi}{2} (C_{\ell 2n3} - C_{\ell 3n2}), \nonumber \\
\Psi_3 &=&  C_{\mu \nu \rho \sigma} n^\mu \ell^\nu \overline m^\rho \ell^\sigma  = \ft{1}{\sqrt{2}} (C_{n\ell2\ell} - \rmi C_{n\ell3\ell}).
\label{eq:appPsidefs}
\eea
First we will express the scalars $\Psi_n$ in terms of the connection coefficients, and then we will compute the gauge corrected Weyl scalars, \eqref{eq:Psi012final} and \eqref{eq:Psi3final}, which we  introduced in section \ref{sec:NewmanPenrose}.

\begin{lemma}
Let  $\mathscr{D}=\{q_{MN}, \kappa, \Omega_M,\Xi_{MN}\}$ be the hypersurface data of a non-expanding horizon $\cH$ embedded   in the vacuum, and let  $\Psi_n$, $n=0,1,2,3$,  be the Weyl scalars defined with respect to the Newman-Penrose tetrad $\cB_{NP}=\{n,\ell,m,\overline{m}\}$ on $\cH$. Then,  the following equations hold
\be
\Psi_0=\Psi_1=0, \qquad \Psi_2 = -\ft14 \cR + \ft\rmi2 \cJ,
\label{eq:appPsi12a}
\ee
and 
\bea
\Re \Psi_3 &=& \ft{1}{\sqrt{2}} (D_{[M} \Xi_{2]N} +\Omega_{[M} \Xi_{2]N} )q^{MN},  \nonumber \\
 \Im \Psi_3 &=&-  \ft{1}{\sqrt{2}} (D_{[M} \Xi_{3]N} +\Omega_{[M}\Xi_{3]N}) q^{MN},
\label{eq:appPsi3a}
\eea
where  $D_{[M} \Omega_{N]} = \epsilon_{MN} \cJ$,  $\cR$ is the Ricci scalar of $q_{MN}$ and $\epsilon_{MN}$ the volume form.
\end{lemma}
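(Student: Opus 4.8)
The plan is to compute each Weyl scalar from its definition \eqref{eq:appPsidefs}, expressing it as a combination of Weyl-tensor components in the tetrad $\cB_{NP}$, replacing $C_{\mu\nu\rho\sigma}$ by $R_{\mu\nu\rho\sigma}$ (legitimate in vacuum), and then rewriting the relevant Riemann components in terms of the hypersurface data by the same device used in Appendix \ref{app:constraints}: apply the Ricci identity to $n$ or $\ell$ differentiated along directions \emph{tangent} to $\cH$, and expand with the connection coefficients \eqref{eq:connectionCoeff}. Throughout, I would use the non-expanding conditions $\Theta_{MN}=0$ and $\pd_n q_{MN}=0$ \eqref{eq:nonExp} and the vacuum constraints \eqref{eq:DNS2}--\eqref{eq:transverseCon2} to simplify, keeping careful track of factors since the paper's (anti)symmetrisation brackets carry no $\ft12$.

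\emph{The scalars $\Psi_0$ and $\Psi_1$.} For $\Psi_0$ the only components that enter are $C_{nMnN}$. The tidal force equation \eqref{eq:fundamentaForm} gives $C_{nMnN}=\kappa\Theta_{MN}+\Theta^P_M\Theta_{NP}-\pd_n\Theta_{MN}-\ft12 R_{nn}q_{MN}$; every term on the right vanishes on a vacuum NEH, so $\Psi_0=0$. For $\Psi_1$ the relevant component is $C_{nM\ell n}=R_{\ell nnM}$, which by \eqref{eq:identity1} equals $\pd_n\Omega_M-\pd_M\kappa+\Theta^N_M\Omega_N$; this is zero because $\Theta_{MN}=0$ and the Damour--Navier--Stokes equation for a NEH \eqref{eq:DNS2} reads $\pd_n\Omega_M=\pd_M\kappa$. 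Hence $\Psi_1=0$.

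\emph{The scalar $\Psi_2$.} Here one needs $C_{\ell MnN}=R_{\ell MnN}$, split into its parts symmetric and antisymmetric in $M,N$. The symmetric part is already packaged in the derivation of \eqref{eq:Xi}: combining \eqref{eq:appXider2} with the NEH transverse-connection equation \eqref{eq:transverseCon2} fixes $R_{(M|\ell n|N)}$, and tracing with $q^{MN}$ together with $\cR_{MN}=\ft12\cR q_{MN}$ yields $\Re\Psi_2=-\ft14\cR$. The antisymmetric part follows from the algebraic Bianchi identity, $R_{\ell MnN}-R_{\ell NnM}=R_{\ell nMN}$, plus a short Ricci-identity computation on $n$: using $\nabla_M n=\Omega_M n$ on a NEH one finds $\ell_\sigma(\nabla_M\nabla_N-\nabla_N\nabla_M)n^\sigma=D_{[M}\Omega_{N]}$, so that $R_{\ell nMN}=D_{[M}\Omega_{N]}$ and hence $\Im\Psi_2=\ft12 D_{[2}\Omega_{3]}=\ft12\cJ$. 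Together these give $\Psi_2=-\ft14\cR+\ft\rmi2\cJ$.

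\emph{The scalar $\Psi_3$, and the main obstacle.} For $\Psi_3$ the components $C_{n\ell M\ell}=R_{n\ell M\ell}$ appear, and the difficulty is that any direct Ricci-identity computation of this component antisymmetrises over a pair containing $\ell$, i.e. requires $\nabla_\ell$ --- data off the hypersurface, which is not available in \eqref{eq:connectionCoeff}. The way around this is to use vacuum Ricci-flatness and the completeness relation \eqref{eq:inverseG}: contracting $R_{\ell M}=0$ gives $R_{n\ell M\ell}=q^{AB}R_{A\ell BM}$, and $R_{A\ell BM}$ \emph{is} obtainable from tangent derivatives alone, via the Ricci identity applied to $\ell$ differentiated along $e_A,e_B$ and expanded with \eqref{eq:connectionCoeff} in the style of \eqref{eq:identity2}--\eqref{eq:identity3}. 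Carrying this out and simplifying with $\Theta_{MN}=0$ expresses $R_{A\ell BM}$ through $D\Xi$, $\Omega\Xi$ and the curvature of $q_{MN}$; tracing with $q^{AB}$ and separating real and imaginary parts produces \eqref{eq:appPsi3a}. I expect this last step to be the main obstacle --- both the trick of trading $R_{n\ell M\ell}$ for $q^{AB}R_{A\ell BM}$, and the bookkeeping of the antisymmetrised $\Xi$-divergence terms and of the numerical factors, which must be tracked consistently with the conventions fixed in \eqref{eq:fundamentaForm}--\eqref{eq:identity3}.
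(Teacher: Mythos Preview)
Your proposal is correct and follows essentially the same approach as the paper: the computations of $\Psi_0,\Psi_1,\Psi_2$ proceed via the same identities (tidal force, \eqref{eq:identity1}, and the Ricci-identity calculation underlying \eqref{eq:appXider1}--\eqref{eq:appXider2}), and for $\Psi_3$ the paper uses precisely the trick you single out --- trading $R_{n\ell A\ell}$ for $q^{MN}R_{\ell MAN}$ via $R_{\ell A}=0$, and then evaluating the latter from the Ricci identity along tangent directions to obtain \eqref{eq:appRicciIdentity1}. The only cosmetic difference is that the paper writes the Ricci identity for $e_A$ rather than for $\ell$ when deriving $R_{\ell MAN}$, but the content is the same.
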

\begin{proof}
Due to the  Einstein's equations  the Ricci tensor vanishes in vacuum $R_{\mu\nu}=0$, and the Weyl tensor is equal to the Riemann curvature tensor $C_{\mu\nu\rho \sigma} = R_{\mu\nu\rho \sigma}$.  Taking this into account we can compute $\Psi_0$ and $\Psi_1$ using  
the identities \eqref{eq:fundamentaForm} and  \eqref{eq:identity1} 
\bea
\Re(\Psi_0) &=& \ft12\Big(-\pd_n(\Theta_{22} - \Theta_{33}) +  \kappa(\Theta_{22} - \Theta_{33}) + (\Theta_{22}^2  - \Theta_{33}^2 )\Big), \\
  \Im(\Psi_0)&=& - \pd_n \Theta_{23}+ \kappa \Theta_{23} +  \Theta_2^C \Theta_{C3},\\
\Re(\Psi_1) &=& \ft{1}{\sqrt{2}}( \pd_n \Omega_2 - \pd_2 \kappa + \Theta_2^N \Omega_N),\\
Ê\Im(\Psi_1) &=& \ft{1}{\sqrt{2}} ( \pd_n \Omega_3 - \pd_3 \kappa + \Theta_3^N \Omega_N).
\eea
Note that, for non-expanding horizons embedded in  vacuum all these quantities vanish, $\Psi_0=\Psi_1=0$,
since the second fundamental form is zero $\Theta_{MN}=0$ (see section \ref{sec:redundancies}), and as a consequence of  the Damour-Navier-Stokes equation \eqref{eq:NS}. This proves the left equation in \eqref{eq:appPsi12a}. It is worth mentioning that this result could also have been obtained using the Goldberg-Sachs theorem (see \cite{Chandrasekhar:1985kt}), and noting the existence of a geodesic and shear free null vector, namely the null normal  $n$.

To compute $\Psi_2$ we can use the following two identities
\bea
R_{\ell N nM} q^{NM}&=& (- \pd_n \Xi_{MN}  - D_M \Omega_N - \Omega_M \Omega_N -\kappa \Xi_{MN} + \Theta_N^P \Xi_{MP})q^{NM},\nonumber\\
R_{\ell [N n M]} &=& D_{[N} \Omega_{M]} - \Xi_{[N}^L \Theta_{M]L}.
\eea
The first one follows from \eqref{eq:appXider1}, and the second one from  
the  Ricci identity $R_{\ell2 n 3} =\ell_\rho n^\mu e_3^\nu \nabla_{[\mu} \nabla_{\nu]} e_2^\rho$. From them we obtain
\be
\Re(\Psi_2) = \ft12(- \pd_n \theta^\ell - D_M \Omega^M  - \kappa \theta^\ell - \Omega^A \Omega_A + \Theta^{AB} \Xi_{AB} ), \qquad 
\Im(\Psi_2) = \ft12  D_{[2} \Omega_{3]}. 
\ee
Here we have also used the fact that the real part of $\Psi_2$ can also be written as $\Re \Psi_2 = \ft12 q^{AB} C_{\ell A n B}$.   The expressions \eqref{eq:Psi012final} can be recovered when we impose the constraint equations of a non-expanding horizon embedded in vacuum. Setting $\Theta_{MN}=0$, and  from  the constraint equation for the trace of $\Xi_{MN}$, \eqref{eq:Xi} we arrive to
\be
\Re(\Psi_2) = - \ft14\cR, \qquad 
\Im(\Psi_2) = \ft12  D_{[2} \Omega_{3]}. 
\ee
At the point where we are evaluating the expressions,  $\xi_0^a$,  the spatial metric has the canonical form $q_{MN}=\delta_{MN}$, and therefore the volume form reduces to the Levi-Civita symbol, which satisfies  $\epsilon_{23}=1$. Thus $D_{[2} \Omega_{3]} \equiv \epsilon_{23} \cJ=\cJ$, which proves \eqref{eq:appPsi12a}.

To compute the  last Weyl scalar $\Psi_3$ it is convenient to use the equation $R_{n \ell A\ell}=R_{\ell MAN}q^{MN}$  which holds in vacuum. It follows from
\be
0=R_{\ell A}= R_{\ell\mu A \nu} g^{\mu \nu} = R_{\ell MAN} q^{MN} + R_{\ell nA \ell} + R_{\ell\ell A n}= R_{\ell MAN} q^{MN} - R_{n\ell A\ell}
\ee
with $M \neq A$. 
Here we used the form for the inverse metric \eqref{eq:inverseG}, and  the symmetries of the Riemann tensor, which imply $R_{\ell\ell A n} =0$.
Then, the contractions of the Riemann curvature of the form $R_{\ell M A N}$ can be calculated from the relation 
\be
R_{\ell M A N} = D_{[N} \Xi_{A]M} + \Omega_{[N} \Xi_{A]M} ,
\label{eq:appRicciIdentity1}
\ee
which is a direct consequence of  the Ricci identity $R_{\ell M A N}  = \ell_\sigma e^\mu_N e^\nu_M  \nabla_{[\mu} \nabla_{\nu]} e^\sigma_A$,  and the definitions   of the connection coefficients \eqref{eq:connectionCoeff}.  Recalling that the Riemann and Weyl tensors are equal in vacuum we have that $C_{n \ell A\ell} = R_{\ell MAN} q^{MN}$, we can obtain \eqref{eq:appPsi3a} using \eqref{eq:appRicciIdentity1} and the definitions \eqref{eq:appPsidefs}.
\end{proof}
\finn

 Recall, that the gauge corrected Weyl scalars are given by 
 \be
 \Psi^c_n(\eta,\xi^M)\equiv\Psi^c_n(H(\eta,\xi^M),\xi^M),\nonumber
 \ee
  where $H(\eta,\xi^M)$ is defined in \eqref{eq:defH}. Therefore, since  the Weyl scalars 
 $\Psi_0$, $\Psi_1$ and $\Psi_2$ do not depend on the null coordinate $\xi^1$,  their gauge corrected expressions are identical to those in \eqref{eq:appPsi12a}, which proves \eqref{eq:Psi012final}. 
 
It only remains derive the expression  \eqref{eq:Psi3final} for the gauge corrected Weyl scalar $\Psi_3^c(\eta,\xi^M)$.
\begin{proposition}
Let $\mathscr{D}=\{q_{MN}, \kappa, \Omega_M,\Xi_{MN}\}$ represent the hypersurface data of a generic non-expanding horizon embedded   in  vacuum, and let  $\Psi_3^c(\eta,\xi^M)$ be the gauge corrected Weyl scalar defined in \eqref{eq:GCWdefs}. Then, in the gauge defined by \eqref{eq:GaugeFix1} and \eqref{eq:GaugeFix2},  $\Psi_3^c(\eta,\xi^M)$ is given by \eqref{eq:Psi3final}.
\end{proposition}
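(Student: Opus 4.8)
The plan is to start from the Lemma just established, which expresses $\Psi_3$ on the horizon purely through the hypersurface data,
\be
\Psi_3 = \ft{1}{\sqrt 2}\,q^{MN}\big[(D_{[M}\Xi_{2]N}+\Omega_{[M}\Xi_{2]N}) - \rmi\,(D_{[M}\Xi_{3]N}+\Omega_{[M}\Xi_{3]N})\big],
\ee
cf.\ \eqref{eq:appPsi3a}. Performing the $q^{MN}$-contraction at the point $\xi_0$ with $q_{MN}=\delta_{MN}$, this is a combination of $D^N\Xi_{AN}$, $D_A\theta^\ell$, $\Omega^N\Xi_{AN}$ and $\Omega_A\theta^\ell$ (with $A=2,3$ and $\theta^\ell\equiv\Xi^M_M$), packaged into the complex combination $(\cdot)_{A=2}-\rmi(\cdot)_{A=3}$. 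The idea is to trade $\Xi_{MN}$ for the supertranslation-friendly variables of \eqref{eq:invDataSet}, pass to the gauge-corrected scalars, and read off \eqref{eq:Psi3final}.

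First I would eliminate $\Xi_{MN}$ using \eqref{eq:invariantXi}: in the gauge \eqref{eq:GaugeFix1} the surface gravity equals the constant $\kappa_0$, so $\Xi_{MN}=\kappa_0^{-1}(\Sigma^0_{MN}-\ft12 D_{(M}\Omega_{N)}-\Omega_M\Omega_N)$. Writing $\Sigma^0_{MN}=\ft12\theta^0 q_{MN}+\sigma^0_{MN}$ and using the second gauge condition \eqref{eq:GaugeFix2} together with \eqref{eq:theta0} to fix $\theta^0=\ft12\cR$, I would substitute into $\Psi_3$. This leaves four types of contributions: (i) the divergence $D^N\sigma^0_{AN}$ of the traceless part; (ii) first derivatives $D_A\cR$ of the scalar curvature; (iii) second covariant derivatives of $\Omega_M$ (and of the potential $\eta$, which enters through $\theta^\ell$); and (iv) terms quadratic and cubic in $\Omega_M$. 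The cubic terms cancel between $\Omega^N\Xi_{AN}$ and $-\Omega_A\theta^\ell$. The type-(iii) terms are handled using the two-dimensionality of the sections: $\cR_{MN}=\ft12\cR q_{MN}$ and the attendant Ricci identity for $[D_M,D_N]$ acting on a one-form collapse the would-be $D_A\Delta\eta$ and $\Delta\Omega_A$ pieces either into the curl $D_{[A}\Omega_{N]}$ — which equals $\epsilon_{AN}\cJ$ with $\cJ\equiv D_{[2}\Omega^0_{3]}$ since the exact part of $\Omega_M$ has vanishing curl (cf.\ \eqref{def:hodge}) — or into terms of the form $\cR\,\Omega_A$.

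After these reductions the curvature data have collapsed to $\cR$, $\cJ$, $\Omega_M$ and $D^N\sigma^0_{MN}$ only. Recognising $\Psi_2^c=-\ft14\cR+\ft{\rmi}{2}\cJ$ from \eqref{eq:Psi012final}, and noting that in the $(A=2)-\rmi(A=3)$ combination the operation $\epsilon_A{}^N D_N$ acts as $\pm\rmi\,\hat D$ and $\epsilon_A{}^N\Omega_N$ as $\pm\rmi\,\hat\Omega$ (with $\hat D=D_2-\rmi D_3$, $\hat\Omega=\Omega_2-\rmi\Omega_3$), the type-(ii) and type-(iv) terms organise into $\hat D\Psi_2^c+3\,\hat\Omega\,\Psi_2^c$ while the type-(i) term becomes the divergence appearing in $Ds$. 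Finally I would pass to the gauge-corrected scalar via $\xi^1\to H(\eta,\xi^M)$, eqs.\ \eqref{eq:GCWdefs}--\eqref{eq:defH}: in the gauge \eqref{eq:GaugeFix1}--\eqref{eq:GaugeFix2} the quantities $q_{MN}$, $\Omega^0_M$, $\cR$, $\cJ$ and $\hat\Omega^e$ do not depend on $\xi^1$, so only $\sigma^0_{MN}$ is corrected, and by \eqref{eq:finalEqs} and \eqref{eq:defSigmaGI} its corrected form is $s_{MN}(\eta,\xi^M)=s_{MN}|_{\eta_0}\rme^{-(\eta-\eta_0)}$; splitting $\hat\Omega=\hat\Omega^0+\hat\Omega^e$ via \eqref{def:hodge} then yields exactly \eqref{eq:Psi3final}, the overall factor $1/(\kappa_0\sqrt 2)$ arising from the $\ft{1}{\sqrt 2}$ in $\Psi_3$ and the $\kappa_0^{-1}$ from inverting \eqref{eq:invariantXi}.

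The main obstacle is step (iii): arranging the two-dimensional curvature identities so that every second derivative of $\Omega_M$ and of $\eta$ combines either into the curl $\cJ$ or into the single clean term $3\,\hat\Omega\,\Psi_2^c$, and in particular pinning down the numerical coefficient $3$. A related point requiring care is that the exact part $\Omega^e_M=\partial_M\eta$ leaves behind no residual $\eta$-dependence beyond what is carried by $\hat\Omega^e$ and the factor $\rme^{-(\eta-\eta_0)}$ — precisely what makes \eqref{eq:Psi3final} consistent with the subsequent identification of $\hat\Omega^e$ (and of $\kappa_0$) as pure Newman--Penrose tetrad gauge.
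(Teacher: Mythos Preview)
Your strategy is the one the paper uses: substitute \eqref{eq:invariantXi} into the expression for $C_{n\ell A\ell}$ coming from the Lemma, impose $\theta^0=\ft12\cR$, and regroup the curvature pieces into $\hat D\Psi_2+3\,\hat\Omega\,\Psi_2$. The paper records the intermediate result as
\[
\Psi_3=\ft{1}{\kappa_0\sqrt{2}}\big(D\sigma^0+\hat D\Psi_2+3\,\hat\Omega\,\Psi_2\big),\qquad
D\sigma^0\equiv D^M\sigma^0_{2M}+\Omega^M\sigma^0_{2M}-\rmi\big(D^M\sigma^0_{3M}+\Omega^M\sigma^0_{3M}\big),
\]
so your type--(i) contribution is accompanied by an $\Omega^M\sigma^0_{AM}$ term with the \emph{full} Hajicek one--form, not just $\Omega^{0|M}$.

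The gap is in your last step. You write that under $\xi^1\to H(\eta,\xi^M)$ ``only $\sigma^0_{MN}$ is corrected'' and then replace $\sigma^0_{MN}\to s_{MN}$. That is correct for the undifferentiated factors, but $(D^M\sigma^0_{AM})|_{\xi^1=H}$ is \emph{not} equal to $D^M s_{AM}$: since $H$ depends on $\xi^M$, the chain rule gives
\[
D^M s_{AM}=\big(D^M\sigma^0_{AM}\big)\big|_{\xi^1=H}+\big(\pd_n\sigma^0_{AM}\,\pd^M H\big)\big|_{\xi^1=H}
=\big(D^M\sigma^0_{AM}\big)\big|_{\xi^1=H}+s_{AM}\,\Omega^{e|M},
\]
using $\pd_n\sigma^0_{AM}=-\kappa_0\,\sigma^0_{AM}$ from \eqref{eq:finalEqs} and $\pd_M H=-\kappa_0^{-1}\Omega^e_M$ from \eqref{eq:defH}. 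This correction is precisely what turns $D\sigma^0|_{\xi^1=H}$ into $Ds$ with only $\Omega^{0|M}$ appearing; without it you would be left with an extra $\Omega^{e|M}s_{AM}$ piece and would not obtain \eqref{eq:Psi3final}. The paper carries out exactly this chain--rule step; once you include it, your argument goes through and coincides with the paper's.
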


\begin{proof}
 We will first write $\Psi_3$ in terms of the object $\Sigma_{MN}^0$ defined  \eqref{eq:invariantXi}. Substituting the definition \eqref{eq:invariantXi} into \eqref{eq:appRicciIdentity1}, after a straightforward calculation we find
\be
\kappa C_{n \ell A\ell} = (D_{[N} \Sigma_{A]M}^0 + \Omega_{[N} \Sigma_{A]M}^0)  q^{MN} +\ft12 \epsilon_{AM} D^M \cJ + \ft32 \epsilon_{AC} \Omega^C  \cJ  - \ft12 \Omega_A \cR,
\ee
where  $\cR$ and $\epsilon_{MN}$ are respectively the curvature scalar and volume form of $q_{MN}$, and $\cJ = D_{[2} \Omega_{3]}$. In order to simplify this expression we can use the assumption that the horizon is generic, and that the gauge redundancies \eqref{eq:dataRedundancies}
 have been partially fixed by the conventions \eqref{eq:GaugeFix1} and \eqref{eq:GaugeFix2}. Then, the trace of $\Sigma_{MN}^0$ satisfies $\theta^0 = \ft12 \cR$, and thus the first term in the previous equation takes the form
\be
 (D_{[N} \Sigma_{A]M}^0 + \Omega_{[N} \Sigma_{A]M}^0)  q^{NM}  = D^M \sigma^0_{AM} + \Omega^M \sigma_{AM}^0 - \ft14 \pd_A \cR -\ft14 \Omega_A \cR,
\ee
where $\sigma_{MN}^0$ is the traceless part of $\Sigma_{MN}^0$. This leads to 
\be
\kappa C_{n \ell A\ell} =D^M \sigma^0_{AM} + \Omega^M \sigma_{AM}^0  - \ft14 \pd_A \cR +\ft12 \epsilon_{AM} D^M \cJ + \ft32 \epsilon_{AC} \Omega^C \cJ   - \ft34 \Omega_A \cR.
\ee
Then from the definition of $\Psi_3$ we find
\bea
\Re \Psi_3 &=&\ft1{\kappa\sqrt{2}}(D^M \sigma^0_{2M} + \Omega^M \sigma_{2M}^0  - \ft14 \pd_2 \cR +\ft12  D_3 \cJ + \ft32  \Omega_3 \cJ   - \ft34 \Omega_2 \cR)\nonumber\\
\Im \Psi_3 &=&\ft1{\kappa\sqrt{2}}(-D^M \sigma^0_{3M} - \Omega^M \sigma_{3M}^0  + \ft14 \pd_3 \cR +\ft12 D_2 \cJ + \ft32  \Omega_2 \cJ   + \ft34 \Omega_3 \cR),
\eea
or equivalently
\be
\Psi_3 =\ft1{\kappa\sqrt{2}}(D\sigma^0  +\hat D \Psi_2 + 3\, \hat\Omega \, \Psi_2),
\ee
where we used the shorthands $\hat D \equiv D_2 - \rmi D_3$ and $\hat \Omega \equiv \Omega_2 - \rmi \Omega_3$, and we defined the complex scalar
\be
D\sigma^0 \equiv D^M \sigma_{2M}^0 +  \Omega^M \sigma_{2M}^0 - \rmi  (D^M \sigma_{3M}^0 + \Omega^M \sigma_{3M}^0).
\ee
Then, the gauge corrected Weyl scalar is given by
\bea
\Psi_3^c(\eta,\xi^M) &=& \ft1{\kappa\sqrt{2}}(D\sigma^0  +\hat D \Psi_2 + 3\, \hat\Omega \, \Psi_2)|_{\xi^1=H(\eta,\xi^M)} 	\nonumber \\
&=& \ft1{\kappa\sqrt{2}}D\sigma^0|_{\xi^1=H(\eta,\xi^M)}  +\ft1{\kappa\sqrt{2}}(\hat D \Psi_2^c + 3\, \hat\Omega \, \Psi_2^c),
\label{eq:appPsi3int}
\eea
where the second equality follows from the fact that  neither $\Psi_2$, $\Omega_M$, or $q_{MN}$ depend on $\xi^1$, and thus their functional form  in  unchanged after evaluating them in $\xi^1=H(\eta,\xi^M)$.
Therefore we just have to compute the first term on the right in the last equation, which reads
\bea
D\sigma^0|_{\xi^1 = H(\eta,\xi^M)} &=& D^M \sigma_{2M}^0|_{\xi^1=H(\eta,\xi^M)} +  \Omega^M \sigma_{2M} \nonumber \\ 
&&- \rmi  (D^M \sigma_{3M}^0|_{\xi^1=H(\eta,\xi^M)} + \Omega^M \sigma_{3M}).
\eea
In the previous expression we already made the substitution  $\sigma_{MN}^0|_{\xi^1=H(\eta,\xi^M)} = \sigma_{MN}$, using the  definition of the supertranslation invariant variable $\sigma_{MN}$, i.e.  \eqref{eq:defSigmaGI}.  From the definition \eqref{eq:defSigmaGI} it is immediate to check that  the following relation holds
\bea
 D^M \sigma_{2M} &=& (D^M \sigma_{2M}^0 + \pd_n \sigma_{2M}^0 \, \pd^M H)|_{\xi^1=H(\eta,\xi^M)} \nonumber\\
&=&  (D^M \sigma_{2M}^0 + \sigma_{2M}^0 \, \Omega^{e|M})|_{\xi^1=H(\eta,\xi^M)} \nonumber\\
&=& D^M \sigma_{2M}^0|_{\xi^1=H(\eta,\xi^M)} + \sigma_{2M} \, \Omega^{e|M}
\eea
where, for the second equality, we have used the equations \eqref{eq:finalEqs} and \eqref{eq:defH}, and that the exact part of the Hajicek one form is given by $\pd_M \eta = \Omega_M^e$. The last result allows us to express $\Psi_3^c$ in terms of the gauge invariant variable $\sigma_{MN}$ as follows
\be
\Psi_3^c(\eta,\xi^M) = \ft1{\kappa\sqrt{2}}(D\sigma  +\hat D \Psi_2^c + 3\, \hat\Omega \, \Psi_2^c),
\label{eq:appFinalPsi3}
\ee
where $D\sigma$ is now is defined as
\be
D\sigma(\eta,\xi^M)  \equiv D^M \sigma_{2M} +  \Omega^{0|M} \sigma_{2M} - \rmi  (D^M \sigma_{3M} + \Omega^{0|M} \sigma_{3M}).
\ee
It can be seen  that the exact part of the Hajicek one form $\Omega_M^e$ has been cancelled out, and thus $D\sigma$  only involves the divergence free part $\Omega^0_M$. Substituting the solution to the constraint equations \eqref{eq:finalSol} in   \eqref{eq:appFinalPsi3}  we arrive  to our final result, which is given by \eqref{eq:Psi3final}.
\end{proof}
\finn

\section{Calculations for null infinity}

\subsection{Derivation of the BMS group.}
\label{app:BMSgroup}

In this appendix we present a derivation of the  BMS group of transformations at null infinity. We show that it can be described as the set of  diffeomorphisms of the unphysical spacetime which preserve null infinity as a set of points,  and that leave invariant both the metric tensor and the null normal on $I$ up to  a conformal transformation.  
The analysis is done in a similar fashion as in section \ref{sec:supertranslationsGauge}, where we studied the set of diffeomorphisms preserving the metric tensor at a  non-expanding horizon, i.e. the hypersurface symmetries, which include horizon supertranslations. Therefore, the present computation also serves as a check for our approach in section \ref{sec:supertranslationsGauge} to characterise horizon supertranslations. 

We will now characterise the set of  diffeomorphisms of the unphysical spacetime $F: \cM \to \cM$ which preserve the structure of null infinity implied by the definition of asymptotically flat spacetimes given in section \ref{sec:Scri}, i.e. conditions (i), (ii) and (iii).  More specifically, any diffeomorphism $F$ in this set should satisfy the following conditions: 
\begin{itemize}
\item[(i)] Leave invariant the scalar products  at $\cI$ up to a conformal transformation. That is, denoting  $g_{\mu \nu }'\equiv (F^*g)_{\mu \nu}$, and $ \Omega'\equiv F^*\Omega$ we should have
\be
g_{\mu \nu }' \;\hat{=}\; \omega^2 g_{\mu\nu}, \quad \text{and} \quad \Omega' \;\hat{=}\; \omega \Omega,
\label{eq:conformalMap}
\ee
so that $\Omega'{}^{-2} g_{\mu\nu}' \;\hat{=}\; \hat g_{\mu\nu}$, where  $\hat =$ denotes equality on $\cI$.
\item[(ii)] Map null infinity to itself,  $F(\cI) = \cI$, or equivalently $F^* \Omega \eqscr \Omega \eqscr 0$.
\item[(iii)] Preserve the definition of the null normal, $\bn \equiv d\Omega$.
\end{itemize}
The properties of this set of diffeomorphisms are more easily studied  using  a coordinate system of the unphysical spacetime adapted to $\cI$. We proceed as in section \ref{sec:supertranslationsGauge} for the case of non-expanding horizons. Since  null infinity $I$ is diffeomorphically identified with the abstract manifold $\cI$ via the embedding map $\Phi$, we can use the  coordinates on $\cI$ the later one, $\xi^a$, to parametrise  the hypersurface  $I$. The coordinate system on $I$ is then extended off the hypersurface introducing a transverse coordinate $r$, which is defined in terms of the rigging as $\ell = \pd_r$, with $r(\cI)=0$, and then keeping the coordinates $\xi^a$ constant along  the integral curves of $\ell$. Thus, the  coordinate system for the unphysical spacetime reads $x^\mu = \{\xi^1, r, \xi^M\}$, so that the embedding map takes the simple form
\be
\Phi: \xi^a \longrightarrow x^\mu = \{u =\xi^1, r=0, x^M = \xi^M\}.
\ee
The elements of coordinate basis $\cB=\{n, \ell, e_M\}$ have the following explicit form 
\be
 n = \pd_0, \qquad \ell= \pd_1,  \qquad e_M = \pd_M.
\ee
Then the null normal has the coordinate form $\bn = dr$, what follows  from our conventions in section \ref{sec:nullGeometry}, $\bn(\ell)=1$, and the properties of  the null  normal   $\bn(e_M) = \bn(n)=0$. Moreover, on this coordinate system the metric tensor has the following form at $\cI$
\be
g_{\mu\nu}|_{\cI} = \begin{pmatrix}
0 & 1 & 0 \\
1 & 0 & 0 \\
0 & 0 & q_{MN}
\end{pmatrix}.
\ee
Let us turn to the characterisation of the properties of the diffeomorphisms $F$ satisfying the conditions ($i$), ($ii$) and ($iii$) above.   From the condition $(iii)$ it is immediate to find the required behaviour of the null normal under the pull back $F^*$. Indeed, since $\Omega\;\hat{=}\;0$, we have 
\be
d\Omega' \;\hat{=}\;(\omega d\Omega + \Omega d\omega)\;\hat{=}\; \omega d\Omega \qquad \Longrightarrow \qquad  \bn' \;\hat{=}\; \omega \bn.
\label{eq:BMScond1}
\ee
where $\bn'\equiv F^*\bn$.
From this we can also find the transformation of the normal vector $n = g^{-1}(\bn,\cdot)$ under the  pushforward of $F$.
On the one hand, from the definition of $n$ we have that for any $k \in T_p\cI$
\be
F^*g(n , k)  \eqscr\omega^2 g(n , k) \eqscr \omega^2 \bn(k).
\label{eq:pullbn}
\ee
On the other hand, since $F$ maps $\cI$ to itself, it follows that $n$ can change at most by a rescaling $dF(n) =\alpha n$. Then 
\bea
F^*g(n , k) &\eqscr& g(\dd F( n) , \dd F(k)) \eqscr \alpha g( n , \dd F(k)) \nonumber \\
&\eqscr& \alpha \bn(\dd F(k)) \eqscr \alpha F^*\bn(k) \eqscr \alpha \omega \bn(k).
\label{eq:pushn}
\eea
Comparing the two previous expressions we find $\alpha=\omega$. Let $y^\alpha=y^\alpha(x)$ be  the explicit form for the diffeomorphism $F$, in a given set of coordinates, then from the condition on the pullback of $\bn$ \eqref{eq:pullbn} we find the following constraints on the mapping $F$
\be
F^*\bn_\mu \eqscr y_\mu^\alpha \bn_\alpha \qquad \LongrightarrowÊ\qquad y_\mu^1 \eqscr\omega \delta_\mu^1,
\ee
where we use the short hands $y_\mu^\alpha = \pd_\mu y^\alpha$.
From the condition on the pushforward of $n$ \eqref{eq:pushn} we find 
\be
\dd F(n)^\alpha \eqscr y^\alpha_\mu n^\mu \qquad \Longrightarrow \qquad y_0^\alpha \eqscr \omega \delta_0^\alpha.
\ee
Collecting both results we have
\be
y_0^0 \eqscr y_1^1 \eqscr \omega, \qquad y_0^1 \eqscr y_M^1 \eqscr y_0^I\hat =0.
\label{eq:STcond}
\ee
The mapping $F$ satisfies  conditions \eqref{eq:conformalMap}, if and only if the scalar products on the basis $\cB$ are mapped as follows
\bea
(F^*g)(n,n) \eqscr0, &\qquad& (F^*g)(e_M,e_N) \eqscr\omega^2 q_{MN}\\
(F^*g)( n,e_M) \eqscr0, &\qquad& (F^*g)(e_M,\ell) \eqscr0\\
(F^*g)(n,\ell) \eqscr\omega^2, &\qquad& (F^*g)(\ell,\ell) \eqscr0
\eea
In components they read
\bea
g_{\alpha \beta} y^\alpha_1 y^\beta_1\eqscr g_{11} \omega^2\eqscr0 &\qquad&  g_{\alpha \beta} y^\alpha_M y^\beta_N \eqscr q_{IJ} y^I_M y^J_N \eqscr \omega^2 q_{MN}\\
g_{\alpha \beta} y^\alpha_0 y^\beta_M \eqscr \omega\,  y^1_M \eqscr0 &\qquad&  g_{\alpha \beta} y^\alpha_M y^\beta_1  \eqscr  y^0_M \omega +q_{IJ} y_1^I y_M^J \eqscr 0\\
g_{\alpha \beta} y^\alpha_0 y^\beta_1 \eqscr\omega^2 &\qquad&  g_{\alpha \beta} y^\alpha_1 y^\beta_1 \eqscr2y^0_1 \omega +  q_{IJ} y^I_1 y^J_1 \eqscr 0
\eea
The second equation of the first line implies that, on $\cI$, $Y^I(x^M) \equiv y^I(x^M)|_{r=0}$ define a conformal symmetry of the metric $q_{MN}$ with conformal factor $\omega$, 
\begin{empheq}[box=\widefboxb]{align}
q_{IJ} Y^I_M Y^J_N = \omega^2 q_{MN}.
\label{eq:nullInfLorentz}
\end{empheq}
Note that, since $Y^I$ are constant along the null coordinate $u$, the conformal factor  must satisfy $\cL_n \omega\eqscr0$. If we restrict ourselves to globally well defined transformations, that is, to one-to-one mappings of the spatial sections of $\cI$ on to themselves, then the functions $Y^I$ generate a group isomorphic to the homogeneous orthochronous Lorentz group (see \cite{Sachs:1962zza}).

The action of the diffeomorphism on the null coordinate at $\cI$ is determined by the function $f(u,x^M) \equiv y^0(u,x^M)|_{r=0}$, which is constrained by the first equality in  \eqref{eq:STcond}, namely $y_0^0 \eqscr\omega$. Thus, the function $f(u,x^M)$ has the general form
\begin{empheq}[box=\widefboxb]{align}
f(u,x^M) = \omega(x^M) \big( u + A(x^M)\big),
\label{eq:nullInfST}
\end{empheq}
where $A(x^M)$  can be any smooth function of the spatial coordinates $x^M$.
The remaining non-trivial conditions can be solved in terms of $f$ and $Y^I$ to give
\be
y_1^I \eqscr  -\frac{1}{\omega} f_M q^{MN} Y_N^I,  \qquad y^0_1 \eqscr -  \frac{1}{2\omega}  f_M f^M.
\ee
The set of transformations determined by the functions $(f,Y^I)$ given by \eqref{eq:nullInfLorentz} and \eqref{eq:nullInfST} define the BMS group (see e.g. chapter 1 in  \cite{Newman:1981fn}). \emph{Null infinity supertranslations} can be  identified as those transformations of the BMS group  with $\omega\eqscr1$, and $Y^I(x) = x^I$, that is
\begin{empheq}[box=\widefboxb]{align}
f(u,x^M) = x^0 + A(x^M).
\end{empheq}
The infinitesimal version of the defining conditions of BMS transformations  can be recovered setting $y^\alpha(x) \approx x^\alpha + \epsilon k^\alpha(x)$, in \eqref{eq:conformalMap} and \eqref{eq:BMScond1}, where  the vector field $k^\alpha$ is the corresponding generator and $\epsilon\ll1$  is a small real parameter. We obtain
\be
\cL_{k} g_{\mu\nu} = 2 \lambda  g_{\mu\nu}, \qquad \cL_k  n = - \lambda  n,
\ee 
where $\omega \approx 1 + \lambda$, and $\cL_n\lambda=0$.  This is precisely the definition used to characterised the BMS group in the works by Geroch and Ashtekar  \cite{Geroch1977,Ashtekar:1987tt} (see also \cite{Ashtekar:2014zsa}). For the second equality we have used that the definition of the Lie derivative of a vector field  involves  the pushforward of the inverse mapping $F^{-1}$, this explains the minus sign on the second expression. 

From the previous equations it is also  straightforward to find the action of a supertranslation on the tensor $Y_{ab}$ at null infinity. Indeed, if we adopt  the gauge conventions in section \ref{sec:Scri}, null infinity can be described as a non-expanding null hypersurface. Moreover, since supertranslations preserve exactly the metric tensor and the null normal at null infinity they can be identified with a hypersurface symmetry of $\cI$. Therefore,  we can use the results in section \ref{sec:supertranslationsGauge} to find the transformation properties of the tensor $Y_{ab}$ under null infinity supertranslations, which hold for arbitrary hypersurface symmetries of a generic non-expanding null hypersurface.  Setting $\kappa= \Omega=0$ and $\hat f_1=1$ on equation \eqref{eq:Ytransform} we obtain
\be
Y_{ab}' =\begin{pmatrix}
0  & 0\\
0  &\Xi_{MN}
\end{pmatrix} - \begin{pmatrix}
0  & 0\\
0  &D_M  A_N 
\end{pmatrix},
\ee 
which is the relation we have used in the main text \eqref{eq:scriST}.

\subsection{Boundary conditions for no-outgoing radiation}
\label{app:scriVacua}

In this appendix we will derive the equations (\ref{eq:sch1}-\ref{eq:sch3}), that is, the constraints satisfied by the Schouten tensor $S_{\mu\nu}$ of the unphysical spacetime in the absence of out going gravitational radiation at null infinity. Our starting  point are the boundary conditions \eqref{eq:noScriRad} expressed in terms of the leading order Weyl scalars $\Psi^0_n$, and the Bianchi identities \eqref{eq:ScriFieldEq}, relating the leading order unphysical Weyl tensor $K_{\mu\nu\rho\sigma}$ with $S_{\mu\nu}$. 

We prepare our set up as described in section \ref{sec:WeylScalars}.   Given point $\xi^a_0$ at null  infinity, the Weyl scalars $\Psi_n^0$ can be expressed in terms  of the  Newman-Penrose null $\cB_{NP} = \{\ell,n, m,\overline m\}$, where $\ell$ is the rigging vector, $n$  the null normal vector to $\cI$, and the complex null vectors $m$ and $\overline m$ are defined as in \eqref{eq:mbarm}. In addition we choose the coordinates on $\cI$ such that, at the point $\xi^a = \xi^a_0$, the spatial metric has the canonical form $q_{MN}= \delta_{MN}$.

\paragraph{Condition on the second Weyl scalar.} We begin deriving the constraint on $S_{\mu\nu}$ which follows from imposing $\Im \Psi^0_2=0$ at null infinity. The second Weyl scalar $\Psi_2^0$ has the form
\be
\Psi_2^0 =K_{\ell m n\overline m}, \quad \Longrightarrow \quad \Im \Psi_2^0 = \ft12 (K_{\ell3n2} - K_{\ell2 n3})
\ee
Here we will  use the notation $K_{\ell m n\overline m} = K^\mu_{\nu \rho \sigma} \ell_ \mu m^\nu n^\rho\overline  m^\sigma$, and similar expressions for  contractions of tensor with the elements of a basis. The last term can be rewritten using the symmetries of the Weyl tensor, and the first (algebraic) Bianchi identity
\be
\Im \Psi_2^0 = \ft12 (K_{\ell3n2} - K_{\ell2 n3}) =  \ft12 (K_{n2\ell3} + K_{n3 2\ell }) =- \ft12 K_{n\ell32} = \ft12 K_{32\ell n}.
\ee
Using  the Bianchi identity \eqref{eq:ScriFieldEq}, together with the definitions for the connection coefficients \eqref{eq:connectionCoeff}, and the gauge conventions \eqref{eq:divFree} we find
\be
\Im \Psi_2^0 = \ft12 K_{32\ell n} = -\ft12  \nabla_{[\mu} S_{\nu] \rho} \ell^\rho e^\mu_3 e^\nu_2 = -\ft12 ( D_{[3} S_{2] \ell} - \Xi_{[3}^C S_{2]C}),
\ee
where $S_{M \ell} = S_{\mu\nu} e^\mu_M \ell^\nu$ and $S_{MN} = S_{\mu\nu} e^\mu_M e^\nu_N$. It is easy to check that only the traceless part of $\Xi_{MN}$ contributes in the previous equation. The  condition $\Im \Psi_2^0=0$ implies that the previous expression should vanish, what can be written in more covariantly as 
\be
D_{[M} S_{N] \ell} = \Xi_{[M}^P S_{N]P}.
\label{eq:appSc1}
\ee
\paragraph{Condition on the third Weyl scalar.} The vanishing of the third Weyl scalar on $\cI$, $\Psi_3^0=0$, leads to two constraints on the Schouten tensor.
First let us write third Weyl scalar as
\be
\Psi_3^0 = K_{\ell  n n \overline m} = K_{n \overline m \ell  n} = \ft1{\sqrt{2}} (K_{n 2 \ell n} - \rmi K_{n3\ell n})
\ee
Using the Bianchi identity \eqref{eq:ScriFieldEq}, together with  \eqref{eq:connectionCoeff} and \eqref{eq:divFree}, we can rewrite this expression as
\be
K_{n M \ell n} = - n^\mu e_M^\nu \nabla_{[\mu} S_{\nu] \rho} \ell^\rho = - (\pd_n S_{M\ell} - \pd_M S_{n\ell}),
\ee
and therefore
\be
\Psi_3^0 =- \ft1{\sqrt{2}} (\pd_{[n} S_{2]\ell}  - \rmi \pd_{[n} S_{3]\ell} ).
\ee
In the absence  out out going radiation on $\cI$ the previous expression vanishes, or equivalently 
\be
\pd_{[n} S_{M]\ell} =0.
\label{eq:appSc2}
\ee
The second constraint for the Schouten tensor can be found writing the third Weyl scalar as follows
\be
\Psi_3^0= \ft1{\sqrt{2}} (K_{n 2 \ell n} - \rmi K_{n3\ell n}) = -  \ft1{\sqrt{2}} (K_{323n} - \rmi K_{232n}) =  \ft1{\sqrt{2}} (D_{[3}S_{2]3}  - \rmi D_{[2}S_{3]2} ).
\label{eq:WeylWeyl3}
\ee
The second equality is a consequence of the Weyl tensor being traceless,  $g^{\mu\nu} K_{\mu 2 \nu n}=0$. In particular, using  equation \eqref{eq:inverseG} for the inverse metric, with $q_{MN}=\delta_{MN}$, and the symmetries of the Weyl tensor we find
\be
g^{\mu\nu} K_{\mu 2 \nu n} = K_{\ell 2 n n} + K_{n 2 \ell n} + K_{2 2 2 n} + K_{3 2 3 n} =K_{n 2 \ell n}  + K_{3 2 3 n} =0.
\ee
The last equality in \eqref{eq:WeylWeyl3}  is obtained after using the Bianchi identity \eqref{eq:ScriFieldEq}, together with  \eqref{eq:connectionCoeff} and \eqref{eq:divFree}. Thus, the vanishing of the third Weyl scalar also implies  the tensor equation 
\be
D_{[M}S_{N]P} =0.
\label{eq:appSc3}
\ee
\paragraph{Condition on the fourth Weyl scalar.} The last constraint on $S_{\mu\nu}$ is obtained from the vanishing of $\Psi_4^0$, which reads
\be
\Psi_4^0 = K_{\overline mn \overline m n} = \ft12 (K_{2n2n} - K_{3n3n}) - \rmi K_{2n3n}
\ee
The Bianchi identities  \eqref{eq:ScriFieldEq},   \eqref{eq:connectionCoeff} and \eqref{eq:divFree} imply $K_{MnNn} = \pd_n S_{MN}$, and 
thus 
\be
\Psi_4^0 = \ft12 (\pd_n S_{22} - \pd_n S_{33}) - \rmi \pd_n S_{23}=0.
\ee
In addition, the Schouten tensor satisfies $S_M^M = \cR$, which together with  $q_{MN}= \delta_{MN}$, implies  $ \pd_n\cR = \pd_n S_{22} +\pd_n S_{33} = 0$. Therefore, we can summarise the constraints which follow from $\Psi_4^0=0$ in the tensor equation
\be
\pd_n S_{MN}=0.
\label{eq:appSc4}
\ee
This completes our proof of equations (\ref{eq:sch1}-\ref{eq:sch3}).

\subsection{Constraint equations at null infinity and radiative vacua}

In this appendix we prove various results needed in section \ref{sec:Scri}  to derive the constraint equations of null infinity, and to find their solutions  in the absence of outgoing radiation.  

\subsubsection*{Consistency check for the  constraint equations}

In this section we check that  the dependence of $\Xi_{MN}$ on the null coordinate $\xi^1$  implied by the constraint equations \eqref{eq:XiScri}   is consistent with the identity \eqref{eq:XiScri2}. 
\begin{proposition}
Let $\Xi_{MN}$ be a solution to the constraint equations of null infinity \eqref{eq:XiScri}. Then, if the relation \eqref{eq:XiScri2} is satisfied at a particular value of the null coordinate  $\xi^1_0$, then it will be satisfied for all values of $\xi^1$. 
\end{proposition}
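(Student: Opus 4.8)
The plan is to treat the statement as a standard constraint-propagation result: \eqref{eq:XiScri2} is a ``spatial'' constraint on a section $\cS_{\xi^1}$, \eqref{eq:XiScri} is the ``evolution'' equation along the null generators of $\cI$, and I want to show that the evolution preserves the constraint surface. I would therefore introduce the constraint-violation tensor
\be
E_{MNP}\equiv D_{[M}\Xi_{N]P}-\ft12\, q_{P[M}S_{N]\ell},
\ee
so that \eqref{eq:XiScri2} reads $E_{MNP}\eqscr 0$, and aim to prove that $E_{MNP}$ is independent of the null coordinate $\xi^1$ whenever $\Xi_{MN}$ solves \eqref{eq:XiScri}. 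Since $E_{MNP}$ vanishes on $\cS_{\xi^1_0}$ by hypothesis, this yields $E_{MNP}=0$ for all $\xi^1$, which is the claim.

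First I would differentiate $E_{MNP}$ along $n$. In the conformal gauge \eqref{eq:divFree} the spatial metric $q_{MN}$ --- hence its Levi-Civita connection $D_M$, whose Christoffel symbols $\overline\Gamma^L_{MN}$ are built from $q_{MN}$ alone --- is independent of $\xi^1$, so $\pd_n$ commutes with $D_M$ on $\cS_{\xi^1}$-tensors and $\pd_n q_{MN}=0$; therefore
\be
\pd_n E_{MNP}=D_{[M}\big(\pd_n\Xi_{N]P}\big)-\ft12\, q_{P[M}\,\pd_n S_{N]\ell}.
\ee
The next step is to eliminate both null derivatives on the right. For $\pd_n\Xi_{N]P}$ I substitute the evolution equation \eqref{eq:XiScri}; acting on it with $D_{[M}(\,\cdot\,)_{N]P}$ produces a term proportional to $D_{[M}S_{N]P}$ together with trace-type terms $q_{N]P}D_{[M}(\cdots)$ involving gradients of $S_{n\ell}$ and $\cR$. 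For $\pd_n S_{N]\ell}$ I use the Bianchi identity of the unphysical spacetime \eqref{eq:ScriFieldEq}, $\nabla_{[\mu}S_{\nu]\sigma}=-K_{\mu\nu\sigma\rho}n^\rho$: projecting it along $n^\mu e_{N}^\nu\ell^\sigma$ and simplifying with \eqref{eq:connectionCoeff}, the gauge conditions \eqref{eq:divFree}, and the vanishing $S_{na}=0$ of \eqref{eq:GaugeSchouten}, gives $\pd_n S_{N\ell}=D_N S_{n\ell}-K_{nN\ell n}$, exactly as in appendix \ref{app:scriVacua}. Likewise, projecting \eqref{eq:ScriFieldEq} along $e_M^\mu e_N^\nu e_P^\sigma$ trades $D_{[M}S_{N]P}$ for $-K_{MNPn}$.

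Substituting these relations, the plan is to verify that everything cancels: the $K_{MNPn}$ terms coming from $D_{[M}S_{N]P}$ should combine with $\ft12 q_{P[M}K_{nN]\ell n}$ once one uses the algebraic symmetries and the tracelessness $g^{\mu\nu}K_{\mu N\nu n}=0$ of the leading-order Weyl tensor to write $K_{nN\ell n}=-q^{AB}K_{ANBn}$ (the same manipulation used in appendix \ref{app:scriVacua} to evaluate $\Psi_3^0$), while the residual trace-type gradients of $S_{n\ell}$ and $\cR$ should be reconciled with the $-\ft12 q_{P[M}D_{N]}S_{n\ell}$ piece. After this algebra I expect $\pd_n E_{MNP}=0$, so that $E_{MNP}$ is constant along each null generator of $\cI$ and hence vanishes identically. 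An equivalent way to organise the same computation, with less index bookkeeping, is to read off directly from \eqref{eq:ScriFieldEq} that $\pd_n$ of the difference of the two sides of \eqref{eq:XiScri2} equals a spatial derivative of the difference of the two sides of \eqref{eq:XiScri}, which vanishes by hypothesis. The main obstacle is precisely this cancellation: one must handle the conformal/trace ambiguity $\Xi_{MN}\sim\Xi_{MN}+\lambda q_{MN}$ of \eqref{eq:equiv} carefully --- checking that both sides of \eqref{eq:XiScri2} transform the same way under \eqref{eq:scriConf} and \eqref{eq:schGauge}, so that the constraint descends to the equivalence classes --- and then assemble the projections of \eqref{eq:ScriFieldEq}, all of which are already collected in appendix \ref{app:scriVacua}, in the right order so that the curvature-, $S_{n\ell}$- and $\cR$-dependent terms drop out.
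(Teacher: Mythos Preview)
Your proposal is correct and follows essentially the same route as the paper: differentiate the constraint-violation tensor $E_{MNP}$ along $n$, substitute the evolution equation for $\pd_n\Xi_{MN}$, convert the resulting Schouten-tensor derivatives into components of $K_{\mu\nu\rho\sigma}$ via the Bianchi identity \eqref{eq:ScriFieldEq}, and then use the tracelessness $g^{\mu\nu}K_{\mu N\nu n}=0$ to conclude $\pd_n E_{MNP}=0$. The only difference is organisational: the paper sidesteps the trace/conformal ambiguity you flag by working with the exact (pre-equivalence) equation $\pd_n\Xi_{MN}=-\ft12(S_{MN}+S_{n\ell}q_{MN})$ and exploits the two-dimensionality of the sections by setting $P=M$ explicitly, whereas you keep the general index structure and treat the trace ambiguity separately --- but the cancellation mechanism is identical.
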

\begin{proof}
The constraint equation for $\Xi_{MN}$ at null infinity can be obtained from \eqref{eq:Xi} substituting the gauge fixing conditions  \eqref{eq:divFree}, and  the form of the source term \eqref{eq:sourceTermsScri}. The result is
 \be
\pd_n \Xi_{MN} = -\ft12 ( S_{MN} + S_{n\ell}q_{MN} ), 
\label{eq:appScriVac}
 \ee
Note that this equation reduces to  \eqref{eq:XiScri} when we express it in terms of the equivalence relation \eqref{eq:equiv}. The relation \eqref{eq:XiScri2}  is satisfied at a given value of the null coordinate $\xi^1=\xi^1_0$, they will be satisfied for all values of $\xi^1$ provided the following expression vanishes  on $\cI$
\be
\pd_n (D_{[M} \Xi_{N]P} - \ft12 q_{P[M} S_{N] \ell}) = D_{[M} \pd_n \Xi_{N]P} - \ft12 q_{P[M} \pd_n S_{N] \ell},
\ee
where the equality is obtained using that  the metric $q_{MN}$ and its Levi-Civita connection are independent of $\xi^1$. Thus, we need to prove that the previous expression is zero. Substituting the constraint equation \eqref{eq:appScriVac}   we find 
\be
D_{[M} \pd_n \Xi_{N]P}- \ft12 q_{P[M} \pd_n S_{N] \ell} = -\ft12 D_{[M}  S_{N]P}- \ft12 \pd_{[M} S_{n \ell} \, q_{N]P}- \ft12 q_{P[M} \pd_n S_{N] \ell}.
\ee
As the indices $M,N,P=\{1,2\}$ and $N \neq M$, then $P$ must be either equal to $M$ or $N$. Without loss of generality we choose $P=M$. Moreover, in the following we will also assume that we have chosen the coordinates so that $q_{MN} =\delta_{MN}$ locally.  We obtain
\be
 -\ft12 (D_{[M}  S_{N]M}   +   \pd_{[n} S_{N]\ell}) = \ft12( K_{MNMn} + K_{nN\ell n}),
\ee 
where we have also used the Bianchi identity  \eqref{eq:ScriFieldEq} (contracted with  elements of the basis $\cB=\{n,\ell, e_M\}$) in the second equality. Using  the symmetries of the Weyl tensor, and that $q_{MN} = \delta_{MN}$, we find
\be
\ft12 (K_{MNMn} +K_{nN\ell n})=\ft12(q^{AB} K_{ANBn} + K_{nN\ell n}),
\ee
where $A,B$ run over $\{1,2\}$. The previous expression 
can be written in the form
\be
\ft12 (q^{AB} K_{ANBn} +K_{nN\ell n} )=\ft12g^{\mu \nu} K_{\mu N \nu n},
\ee
where we used the formula \eqref{eq:inverseG} for the inverse metric. Summarising, we have found the relation
\be
\pd_n (D_{[M} \Xi_{N]M} - \ft12 q_{M[M} S_{N] \ell})  = \ft12 g^{\mu \nu} K_{\mu N \nu n}
\ee
which can be easily checked to be always zero, 
 since  the contraction of any two indices of the Weyl tensor is always zero.
\end{proof}
\finn

\subsubsection*{Schouten tensor at null infinity with no outgoing radiation}

We will begin proving the relation \eqref{eq:vacSchouten} which is satisfied by the Schouten tensor in  the absence of outgoing radiation through $\cI$. 
According to our discussion above, the boundary  conditions \eqref{eq:noScriRad}  for no out going radiation imply the  constraints \eqref{eq:appSc3}, \eqref{eq:appSc4} for $S_{\mu\nu}$. Moreover,  as we proved in section \ref{sec:Scri}, in the divergence free conformal gauge \eqref{eq:divFree} the Schouten tensor  also satisfies   $S^M_M = \cR$.  
\begin{proposition} The tensor $S_{MN} = \ft12\, \cR q_{MN}$ is the unique solution of the constraints  \eqref{eq:appSc3}, \eqref{eq:appSc4} whose trace is given by  $S^M_M = \cR$.
\label{thm:Sch}
\end{proposition}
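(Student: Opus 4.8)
\begin{proof}
The plan is to reduce the statement to a purely two--dimensional problem on a spatial section, and then to recognise the constraint \eqref{eq:appSc3} on the trace--free part of $S_{MN}$ as the Cauchy--Riemann equation for a holomorphic quadratic differential, which must vanish because the spatial sections of $\cI$ are topologically $\mathbb{S}^2$.

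First I would use \eqref{eq:appSc4}, $\pd_n S_{MN}=0$, together with the fact that in the conformal gauge \eqref{eq:divFree} of section \ref{sec:Scri} both $q_{MN}$ and hence $\cR$ are independent of the null coordinate $\xi^1$, and $q_{MN}$ has constant scalar curvature. Thus the problem lives on a single spatial section $\cS_{\xi^1}\cong\mathbb{S}^2$ with metric $q_{MN}$, and \eqref{eq:appSc3} only involves the intrinsic connection $D_M$ of $q_{MN}$. Existence is immediate: for $S_{MN}=\ft12\cR q_{MN}$ one has $S^M_M=\cR$, and $D_{[M}(\ft12\cR q_{N]P})=\ft12(D_{[M}\cR)\,q_{N]P}=0$ since $Dq=0$ and $\cR$ is constant. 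For uniqueness, let $S_{MN}$ be any solution and set $h_{MN}\equiv S_{MN}-\ft12\cR q_{MN}$; then $h_{MN}$ is symmetric, trace--free, $q^{MN}h_{MN}=\cR-\cR=0$, and $D_{[M}h_{N]P}=D_{[M}S_{N]P}-\ft12(D_{[M}\cR)q_{N]P}=0$, so $h_{MN}$ is a symmetric trace--free Codazzi tensor on $\mathbb{S}^2$, and it remains to prove that any such tensor vanishes.

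To this end I would work in a local isothermal coordinate $z=x+\rmi y$ with $q_{MN}\,dx^M dx^N=\rme^{2\lambda}\,dz\,d\bar z$, in which the only non--vanishing Christoffel symbols are $\Gamma^z_{zz}=2\pd_z\lambda$ and its conjugate. Tracelessness of $h$ forces $h_{z\bar z}=0$, leaving $h_{zz}$ (with $h_{\bar z\bar z}=\overline{h_{zz}}$) as the single independent component. Evaluating \eqref{eq:appSc3} on the index set $(z,\bar z,\bar z)$ gives $D_z h_{\bar z\bar z}=\pd_z h_{\bar z\bar z}=0$, i.e. $h_{zz}$ is holomorphic, so $h$ defines a holomorphic quadratic differential on the Riemann surface $\mathbb{S}^2\cong\mathbb{C}\mathbb{P}^1$. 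Covering $\mathbb{C}\mathbb{P}^1$ by the charts $z$ and $w=1/z$, smoothness of $h$ at $z=\infty$ requires $h_{ww}=w^{-4}h_{zz}|_{z=1/w}$ to be regular at $w=0$, hence $h_{zz}(z)=O(z^{-4})$ as $z\to\infty$; being entire and vanishing at infinity, Liouville's theorem forces $h_{zz}\equiv 0$, so $h_{MN}=0$ and $S_{MN}=\ft12\cR q_{MN}$.

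The main obstacle is precisely this last step: the global topology of the spatial sections of $\cI$ enters in an essential way, through the non--existence of non--zero holomorphic quadratic differentials on a genus--zero surface (equivalently, the absence of transverse--traceless symmetric tensors on $\mathbb{S}^2$). Without it, \eqref{eq:appSc3} alone would leave a function's worth of freedom in the trace--free part of $S_{MN}$; it is the compactness and simple--connectedness of $\mathbb{S}^2$ that collapse this to the unique answer, and the same topological input reappears in the derivation of \eqref{eq:vacuaScriSol} from \eqref{eq:vacuumScri3}.
\end{proof}
\finn
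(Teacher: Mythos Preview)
Your proof is correct and takes a genuinely different route from the paper. Both arguments first reduce to showing that a symmetric, trace--free tensor $\sigma_{MN}$ (your $h_{MN}$) satisfying $\pd_n\sigma_{MN}=0$ and $D_{[M}\sigma_{N]P}=0$ must vanish on $\mathbb{S}^2$. The paper, following Geroch, exploits the Killing vectors of the constant--curvature sphere: for any Killing field $k^M$ one checks $D_{[M}(\sigma_{N]P}k^P)=0$, so on the simply connected sphere $\sigma_{NP}k^P=\pd_N\alpha$; a short computation shows $\alpha$ is harmonic, hence constant on the compact sphere, giving $\sigma_{NP}k^P=0$; since the Killing vectors span the tangent space at every point, $\sigma_{MN}=0$. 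You instead pass to isothermal coordinates, identify $h_{zz}$ as a holomorphic quadratic differential, and invoke the absence of non--zero such differentials on a genus--zero surface via Liouville's theorem. Your route is shorter and makes the conformal structure do the work (the Codazzi condition on a trace--free symmetric two--tensor \emph{is} the Cauchy--Riemann equation), while the paper's route is self--contained for a general relativity audience and avoids complex coordinates. Both arguments use the constant--curvature gauge in an essential way: the paper needs it to guarantee enough Killing vectors, and you need it to transfer the Codazzi equation from $S_{MN}$ to $h_{MN}$ via $D_{[M}\cR=0$.
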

\begin{proof}
The tensor  $S_{MN}$ can be decomposed in its trace and traceless parts as follows
\be
S_{MN} = \sigma_{MN} + \ft12 \cR q_{MN},
\ee
where $\sigma^M_M=0$. Then $S_{MN}$ is a solution to  the constraints  \eqref{eq:appSc3}, \eqref{eq:appSc4}  is and only if the traceless part $\sigma_{MN}$ satisfies
\be
\pd_n \sigma_{MN}=0, \quad D_{[M} \sigma_{N] P} =0, \quad \sigma_{MN} q^{MN}=0. 
\label{eq:sigmaEqs}
\ee
To complete our proof we just need to use a result by  Geroch \cite{Geroch1977},  who showed that the unique solution to these equations is $\sigma_{MN}=0$. 
\end{proof}
\finn

Since our setting is slightly different to that of \cite{Geroch1977}, we will reproduce here the proof of the last statement:
\begin{lemma}
The only solution to the system of equations \eqref{eq:sigmaEqs} is the trivial one, that is  $\sigma_{MN}=0$.
\label{lem:geroch}
\end{lemma}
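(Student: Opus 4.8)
The plan is to show that the two spatial constraints in \eqref{eq:sigmaEqs} --- the first--order relation $D_{[M}\sigma_{N]P}=0$ together with the trace--free condition $\sigma^M{}_M=0$ --- already force $\sigma_{MN}=0$; the evolution equation $\pd_n\sigma_{MN}=0$ then merely records that this vanishing tensor is common to all sections $\cS_{\xi^1}$ and plays no further role. Everything takes place on a single spatial section $\cS_{\xi^1}\cong\mathbb{S}^2$ carrying the Riemannian metric $q_{MN}$, which by the conformal gauge adopted for $\cI$ in section \ref{sec:Scri} has constant positive Gaussian curvature $K=\ft12\cR>0$, and which is compact without boundary.

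First I would extract a transversality condition: contracting $D_{[M}\sigma_{N]P}=0$ with $q^{MP}$ and using $\sigma^M{}_M=0$ gives directly
\be
D^M\sigma_{MN}=0 .
\ee
Next I would compute the rough Laplacian of $\sigma$. Rewriting $D_{[M}\sigma_{N]P}=0$ as $D_M\sigma_{NP}=D_N\sigma_{MP}$, applying $D^M$ and contracting, and commuting the covariant derivatives, the leading term becomes $D_N\big(D^M\sigma_{MP}\big)=0$ by the previous step, so only a curvature term survives,
\be
D^M D_M\sigma_{NP}=q^{LM}[D_L,D_N]\sigma_{MP}.
\ee
Evaluating the right--hand side in two dimensions, where $R_{MNPQ}=K\,(q_{MP}q_{NQ}-q_{MQ}q_{NP})$, and using $\sigma^M{}_M=0$ once more to collapse the two curvature contributions into one, I obtain the key identity
\be
D^M D_M\sigma_{NP}=2K\,\sigma_{NP}=\cR\,\sigma_{NP}.
\ee

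Finally I would run an energy estimate. Contracting the last identity with $\sigma^{NP}$ and integrating over $\cS_{\xi^1}$, the total--derivative term $D^M\big(\sigma^{NP}D_M\sigma_{NP}\big)$ integrates to zero because the section is closed, leaving
\be
-\int_{\cS_{\xi^1}} D_L\sigma_{NP}\,D^L\sigma^{NP}=2\int_{\cS_{\xi^1}} K\,\sigma_{NP}\sigma^{NP}.
\ee
The left--hand side is non--positive, while the right--hand side is non--negative since $K$ is a positive constant; hence both sides vanish and $\sigma_{NP}\equiv 0$, which is the claim. Equivalently, a symmetric, traceless, divergence--free two--tensor on $\mathbb{S}^2$ is the real part of a holomorphic quadratic differential, of which there are none on a genus--zero surface --- the integral computation above being just the elementary incarnation of this fact.

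The one genuinely delicate step is the two--dimensional curvature bookkeeping leading to $D^M D_M\sigma_{NP}=2K\,\sigma_{NP}$: one must track the index ordering in $[D_L,D_N]\sigma_{MP}=-R^Q{}_{MLN}\sigma_{QP}-R^Q{}_{PLN}\sigma_{MQ}$, contract correctly with $q^{LM}$, and use the trace--free condition to combine the two resulting pieces into a single positive multiple of $\sigma_{NP}$; a wrong sign or factor there would spoil the energy estimate, and indeed the identity with the opposite sign would be compatible with a nonzero $\sigma$. The two structural inputs that must be in place are that $\cS_{\xi^1}$ is closed --- so that the boundary term drops --- and that $q_{MN}$ has strictly positive curvature, both of which are guaranteed by the conformal completion conventions of section \ref{sec:Scri}; note that simple connectedness of the sections, although true, is not needed for this particular lemma.
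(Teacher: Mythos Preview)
Your proof is correct and takes a genuinely different route from the paper's (which follows Geroch). The paper contracts $\sigma_{MN}$ with an arbitrary Killing vector $k^M$ of the round sphere, shows that the resulting one--form $\sigma_{NP}k^P$ is closed, uses simple connectedness of $\mathbb{S}^2$ to write it as $\pd_N\alpha$, proves $\alpha$ is harmonic and hence constant, and finally concludes $\sigma_{MN}=0$ because the Killing vectors span the tangent space at every point. Your argument instead derives the Bochner--type identity $D^MD_M\sigma_{NP}=\cR\,\sigma_{NP}$ directly from $D_{[M}\sigma_{N]P}=0$ and tracelessness, and finishes with an $L^2$ energy estimate using only compactness and positivity of the curvature. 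Your approach is more elementary --- it avoids Killing vectors and the de~Rham step altogether --- and, as you correctly note, it does not require simple connectedness (whereas the paper's proof explicitly invokes it to pass from closed to exact). The paper's argument, on the other hand, makes the role of the sphere's maximal symmetry explicit, which is closer to Geroch's original presentation; your remark linking the result to the absence of holomorphic quadratic differentials in genus zero is a nice complement that the paper does not mention.
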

\begin{proof}
Let $\sigma_{MN}$ be a tensor satisfying \eqref{eq:sigmaEqs} and $k^M$ a killing vector\footnote{Recall that,  in our conformal gauge, $q_{MN}$ represents the geometry of a two dimensional sphere with constant curvature} of $q_{MN}$. For convenience we work in a coordinate system such that  $q_{MN}=\delta_{MN}$ locally. We will begin proving that the spatial tensor  $\lambda_{MN} \equiv  D_{[M} (\sigma_{N]P} k^P)=0$ is vanishing. Due to the  antisymmetry of $\lambda_{MN}$ we already have $\lambda_{11}=\lambda_{22}=0$ and $\lambda_{12} =- \lambda_{21}$, and thus, it only remains to show that $\lambda_{12}=0$.  Using \eqref{eq:sigmaEqs} the it an be checked that  the components of $\lambda_{MN}$ satisfy
\be
\lambda_{MN} =  k^P D_{[M} \sigma_{N]P} + \sigma_{[NP} D_{M]} k^P =\sigma_{[NP} D_{M]} k^P, 
\ee
and therefore we also have
\bea
\lambda_{12} &=&  \sigma_{1}^1 D_{2} k_1+ \sigma_{1}^2 D_{2} k_2-  \sigma_{2}^2 D_{1} k_1 -  \sigma_{2}^2 D_{1} k_2\nonumber\\
&=&  \sigma_{1}^1 D_{2} k_1 -  \sigma_{2}^2 D_{1} k_2 = ( \sigma_{1}^1 + \sigma_{2}^2 )D_{2} k^1 =0.
\eea
Here we have used $\sigma_M^M=0$, and also the following properties of the killing vector  $k^M$ 
\be
D_{(M} k_{N)}=0\quad  \Longrightarrow \quad D_1 k_1 = D_2 k_2 =0, \quad  D_1 k_2 = - D_2 k_1,
\ee
which are a direct consequence of the killing equation. 
Therefore we have that  $D_{[M} (\sigma_{N]P} k^P)=0$ and, since the spatial sections of $\cI$ are topologically equivalent to  $\mathbb{S}^2$,    this implies that  $\sigma_{NP} k^P = \pd_N \alpha$ for some smooth function $\alpha= \alpha(\xi^M)$. Actually,  it is straight forward to check that $\alpha$ must be harmonic 
\be
\Delta \alpha = D^M(\sigma_{MP} k^P) =  k^P  q^{MN} D_N \sigma_{MP}  +    \sigma_{MP} D^N k^P=0. 
\ee
The last term vanishes because  $D_M k_N$ is antisymmetric and $\sigma_{MN}$ symmetric, and the first one can be shown to be zero using  the second of the  equations \eqref{eq:sigmaEqs} together with $\sigma_{MN} = \sigma_{NM}$ and $\sigma_M^M=0$
\be
k^P  q^{MN} D_N \sigma_{MP}=-k^P D_P \sigma^M_M=0.
\ee
The only harmonic functions on the spatial sections of $\cI$ (which are by assumption compact and simply connected) are constants, and thus  we have $\sigma_{MN} k^M=\pd_M \alpha=0$. Finally, since the killing vectors $k^M$ of the sphere span all of the tangent space, we can conclude that  $\sigma_{MN}=0$. 
\end{proof}
\finn
\subsubsection*{Constraint equations in the absence of outgoing radiation at null infinity}

We will now prove that, in the absence of outgoing radiation thought $\cI$, the general solution to the constraint equations  \eqref{eq:Raychad}, \eqref{eq:NS} and \eqref{eq:Xi} at null infinity is given by \eqref{eq:vacuaScriSol}. 

As discussed in section \ref{sec:Scri}, in the conformal gauge \eqref{eq:divFree} the only non-trivial constraint equation is the one of the transverse connection $\Xi_{MN}$ \eqref{eq:Xi}, together with   \eqref{eq:XiScri2}. The final form of these constraint equations  can be found substituting in them the expression for the gauge conditions \eqref{eq:divFree},  the form of the source term \eqref{eq:sourceTermsScri}, and using that in the absence of radiation the Schouten tensor satisfies \eqref{eq:XiScri2} and $S_{a\ell} = \pd_a S_\ell$ (see section \ref{sec:Scri}). The result is
 \be
\pd_n \Xi_{MN} = -\ft14 ( \cR +  2\pd_n S_{\ell} )q_{MN}, \qquad 
D_{[M} \Xi_{N]P} = \ft12 q_{P[M} \pd_{N]} S_{\ell}.
\label{eq:appScriVac}
 \ee
 In order to eliminate the redundancy associated  with supertranslations we specify at fiducial vacuum connection  $\Xi_{MN}^0$, and then 
characterise a generic vacuum connection $\Xi_{MN}$ by the difference  $\Sigma_{MN} \equiv \Xi_{MN}- \Xi_{MN}^0$. It is straightforward to check that this quantity is invariant under supertranslations \eqref{eq:scriST}. Thus, given a fixed fiducial vacuum $\Xi_{MN}^0$, we can characterise  the full set  of radiative vacua at null infinity finding the most general form of $\Sigma_{MN}$ which is consistent with the constraint equations \eqref{eq:appScriVac}.
\begin{theorem}
Let $\Sigma_{MN}\equiv\Xi_{MN} -\Xi_{MN}^0$  be the difference between two transverse connections, $\Xi_{MN}$ and $\Xi_{MN}^0$, which solve the constraint equations \eqref{eq:appScriVac} in the absence of outgoing radiation through $\cI$.
Then $\Sigma_{MN}$   has the general form
\be
\Sigma_{MN} =  D_M f_N + \ft12 \cR f q_{MN} -\ft12 (S_{\ell} - S^0_\ell) q_{MN},
\ee
where $f(\xi)$ is  smooth function on $\cI$ satisfying $\pd_n f=0$, and the potentials $S_{\ell}$ and $S_{\ell}^0$  are defined by $S_{a\ell} = \pd_a S_\ell$ and $S_{a\ell}^0 = \pd_a S^0_\ell$, in terms of the Schouten tensor associated to the connections $\Xi_{MN}$ and $\Xi_{MN}^0$  respectively.
\end{theorem}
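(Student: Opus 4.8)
The plan is to subtract the two copies of the constraint equations~\eqref{eq:appScriVac}, absorb the gauge potentials $S_\ell$ and $S^0_\ell$ into a pure-trace shift, and reduce the statement to the uniqueness result of Lemma~\ref{lem:geroch}. Writing $\delta S \equiv S_\ell - S^0_\ell$ and subtracting the equations for $\Xi_{MN}$ and $\Xi^0_{MN}$ — using that $\cR$, $q_{MN}$ and its Levi-Civita connection $D_M$ are common to both connections and $\xi^1$-independent — one obtains $\pd_n \Sigma_{MN} = -\ft12 (\pd_n\delta S)\,q_{MN}$ and $D_{[M}\Sigma_{N]P} = \ft12 q_{P[M}\pd_{N]}\delta S$. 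I would then introduce $T_{MN} \equiv \Sigma_{MN} + \ft12\delta S\,q_{MN}$: the first equation gives $\pd_n T_{MN}=0$, while substituting $\Sigma_{MN}=T_{MN}-\ft12\delta S\,q_{MN}$ into the second makes the $\delta S$ terms cancel, leaving $D_{[M}T_{N]P}=0$; contracting the latter with $q^{NP}$ yields $D^N T_{MN} = \pd_M t$ with $t\equiv q^{MN}T_{MN}$ (also $\xi^1$-independent). It then suffices to show that any $\xi^1$-independent symmetric $T_{MN}$ on the spatial sections with $D_{[M}T_{N]P}=0$ has the form $T_{MN} = D_M D_N f + \ft12\cR f\,q_{MN}$ for a smooth, $\xi^1$-independent $f$, since the theorem follows at once with $f_N\equiv\pd_N f$.

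The function $f$ would be produced by solving the elliptic equation $\Delta f + \cR f = t$ on a spatial section, with $\Delta\equiv D^M D_M$. The operator $\Delta+\cR$ has a kernel — the $\ell=1$ harmonics $\lambda$, characterised by $D_M D_N\lambda + \ft12\cR\lambda\,q_{MN}=0$, which generate the BMS translations — so solvability requires $t$ to be $L^2$-orthogonal to this kernel. I would establish this orthogonality by integrating the divergence of the vector field $V^M\equiv T^{MN}D_N\lambda$ over the closed spatial section: using $D_M T^{MN}=\pd^N t$ and $D_M D_N\lambda=-\ft12\cR\lambda\,q_{MN}$, and then one further integration by parts together with $\Delta\lambda=-\cR\lambda$, one finds $0=\oint D_M V^M = \ft12\cR\oint\lambda\,t$, hence $\oint\lambda\,t=0$ for all such $\lambda$. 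Elliptic regularity then makes $f$ smooth, and since both $t$ and $\Delta+\cR$ are $\xi^1$-independent one may choose $f$ (for instance the representative orthogonal to the kernel) to satisfy $\pd_n f=0$.

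Finally, with $f$ fixed, I would set $R_{MN}\equiv T_{MN}-\big(D_M D_N f + \ft12\cR f\,q_{MN}\big)$, so that $\pd_n R_{MN}=0$ and $R^M_M = t-(\Delta f + \cR f)=0$. A short commutator computation on the two-dimensional sections, using $R_{QNPM}=\ft12\cR(q_{QP}q_{NM}-q_{QM}q_{NP})$, shows that $D_M D_N f + \ft12\cR f\,q_{MN}$ is itself \emph{curl-free}, so $D_{[M}R_{N]P}=D_{[M}T_{N]P}=0$ (and similarly $D^N R_{MN}=\pd_M(R^L_L)=0$). Thus $R_{MN}$ satisfies precisely the hypotheses~\eqref{eq:sigmaEqs} of Lemma~\ref{lem:geroch}, which applies because the spatial sections of $\cI$ are the round, compact, simply connected sphere; hence $R_{MN}=0$ and $\Sigma_{MN}=T_{MN}-\ft12\delta S\,q_{MN}=D_M f_N + \ft12\cR f\,q_{MN} -\ft12(S_\ell - S^0_\ell)\,q_{MN}$. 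I expect the main obstacle to be this solvability step — one must recognise that the right auxiliary object is $T_{MN}=\Sigma_{MN}+\ft12\delta S\,q_{MN}$ and that its trace is automatically orthogonal to $\ker(\Delta+\cR)$; once that is in hand, the reduction to Lemma~\ref{lem:geroch} and the curvature identities on a constant-curvature surface are routine.
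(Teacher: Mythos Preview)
Your proof is correct and reaches the same conclusion as the paper, but by a genuinely different and more direct route. The paper introduces the same shifted tensor $\hat\Sigma_{MN}=\Sigma_{MN}+\ft12(S_\ell-S^0_\ell)q_{MN}$ (your $T_{MN}$) and derives the same pair $\pd_n\hat\Sigma_{MN}=0$, $D_{[M}\hat\Sigma_{N]P}=0$; however, rather than solving a single elliptic equation, it invokes the Stewart--York decomposition $\hat\Sigma_{MN}=D_{MN}\chi+D_{(M}A_{N)}+W_{MN}+t\,q_{MN}$, solves auxiliary equations for $A_M=\pd_M\phi$ and $\chi$, and only then assembles $f=\chi+2\phi$. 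The transverse--traceless piece $W_{MN}$ is what survives and is killed by Lemma~\ref{lem:geroch}. Your approach bypasses this decomposition entirely: you write down the target form, solve $(\Delta+\cR)f=t$ directly, and apply the lemma to the remainder $R_{MN}$. The orthogonality argument you give for the solvability condition --- integrating $D_M(T^{MN}D_N\lambda)$ and using that the $\ell=1$ harmonics on the round sphere satisfy $D_MD_N\lambda=-\ft12\cR\lambda\,q_{MN}$ --- is clean and is not made explicit in the paper, which instead relies on the general existence results underlying the Stewart decomposition. Both arguments use the constant-curvature assumption on $q_{MN}$ in the same essential way (for the curl-freeness of $D_MD_Nf+\ft12\cR f\,q_{MN}$ and for identifying $\ker(\Delta+\cR)$ with the BMS translations), and both ultimately reduce to Lemma~\ref{lem:geroch}. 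Your version is more self-contained and makes the obstruction analysis visible; the paper's version has the advantage of citing a standard decomposition, at the cost of extra auxiliary fields.
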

\begin{proof}
Due to the linearity of the equations \eqref{eq:appScriVac},  $\Sigma_{MN}$ should satisfy
 \be
\pd_n \Sigma_{MN} =- \ft12 \pd_n (S_{\ell} - S_{\ell}^0 )q_{MN}, \qquad 
D_{[M} \Sigma_{N]P} = - \ft12 D_{[M} (  q_{N]P} (S_{\ell} -S_{\ell}^0 ) ).
\label{eq:appScriVac2}
 \ee
With the  change of variables $\hat \Sigma_{MN} \equiv \Sigma_{MN} + \ft12 (S_{\ell}- S^0_{\ell})q_{MN}$, the previous equations take the simpler form
\be
\pd_n \hat \Sigma_{MN} =0, \qquad 
D_{[M} \hat \Sigma_{N]P} = 0.
\label{eq:hatSigma}
\ee
Contracting the second equation with $q^{MP}$ we also find
\be
q^{MP}D_{[M} \hat \Sigma_{N]P} =0\qquad \Longrightarrow \qquad D^N \hat \Sigma_{MN} = D_M \hat \Sigma_N^N.
\label{eq:tranverse}
\ee
In order to solve  \eqref{eq:hatSigma}  and \eqref{eq:tranverse} consider the following decomposition of $\hat \Sigma_{MN}$ \cite{Stewart:1990fm,ehlers1979isolated}
\be
\hat \Sigma_{MN} = D_{MN} \chi + D_{(M} A_{N)} + W_{MN} + t q_{MN},
\label{eq:SigmaDecomp}
\ee
where $\chi$ and $t$ are two scalar fields on $\cI$ satisfying $\pd_n \chi = \pd_n t=0$, and $t= \hat \Sigma_M^M$. The vector  $A_M$  and the tensor $W_{MN}$ are both independent of the null coordinate $\pd_n A_M = \pd_n W_{MN}=0$, and moreover $W_{MN}$  is also transverse and traceless
\be
D^N W_{MN}=0, \quad \text{and}\quad W^M_M=0.
\ee 
The operator $D_{MN}\chi$ is defined by
\be
D_{MN} \chi \equiv D_M D_N \chi - (\Delta + \ft12 \cR) \chi, 
\ee
and has the property that $D_{MN}\chi$ is transverse for all scalar fields $\chi$. 
As was proven in \cite{Stewart:1990fm}, in the previous decomposition  the  scalar $\chi$ can only be determined up  to $\chi \to \chi + \lambda$ where $\lambda$ is a solution to $(\Delta + \ft12 \cR)\lambda=0$, and  the vector $A_M$ is fixed up to $A_M \to A_M + k_M$, where  $k_M$ is a killing vector of $q_{MN}$. The tensor $W_{MN}$ and $t$ are both uniquely determined  by $\hat \Sigma_{MN}$.
 Inserting this decomposition \eqref{eq:SigmaDecomp} in the equation \eqref{eq:tranverse}  we find
\be
D^N D_{(M} A_{N)} = \pd_M t,
\ee
which can be solved by $A_M = \pd_M \phi$,  where $\phi$ is a scalar satisfying $\Delta \phi + \ft12 \cR \phi = \ft12 t$.
The condition $\hat \Sigma_M^M =t$ leads to the equation
\be
q^{MN} D_{MN} \chi - 2 D^M A_M=0 \qquad \Longrightarrow \qquad  \Delta \chi +\cR \chi =2 \Delta \phi.
\ee
The decomposition \eqref{eq:SigmaDecomp}   is more conveniently expressed in terms of the combination   $f \equiv \chi + 2 \phi$, which satisfies $\Delta f + \cR f = 2 t$.  We obtain the expression 
\be
\hat \Sigma_{MN} = D_M D_N f + \ft12 \cR f q_{MN}+ W_{MN}.
\ee
Substituting this result into \eqref{eq:hatSigma} we find the following constraint for $W_{MN}$ 
\be
\qquad D_{[M} W_{N] P} =0.
\ee
Thus we can see the lemma \ref{lem:geroch} applies to $W_{MN}$, since this tensor is constant along the null direction and traceless, and therefore we can conclude that $W_{MN}=0$.
Collecting these results, the final form of the solution to the equations \eqref{eq:appScriVac} is 
\be
\Sigma_{MN} =  D_M f_N + \ft12 \cR f q_{MN} -\ft12 (S_{\ell} -  S^0_\ell)q_{MN},
\label{eq:generalSolScri}
\ee
which is the expression we were looking for. Note that the freedom  to shift $A_M$ by a killing vector leaves this expression invariant, while the ambiguity to shift $\chi\to \chi + \lambda$ amounts to shifting $f \to f + \lambda$. 
\end{proof}
\finn

It is immediate to check that  \eqref{eq:generalSolScri} reduces to the form of the solution we presented in the main text, eq. \eqref{eq:vacuaScriSol},  when we express it in terms of the equivalence relation \eqref{eq:equiv}.

\bibliographystyle{JHEP}
\bibliography{HypersurfaceSymmetries}

\providecommand{\href}[2]{#2}\begingroup\raggedright\begin{thebibliography}{10}

\bibitem{Brown:1986nw}
J.~D. Brown and M.~Henneaux, \emph{{Central Charges in the Canonical
  Realization of Asymptotic Symmetries: An Example from Three-Dimensional
  Gravity}}, \href{http://dx.doi.org/10.1007/BF01211590}{\emph{Commun. Math.
  Phys.} {\bf 104} (1986) 207--226}.

\bibitem{Henneaux:1985tv}
M.~Henneaux and C.~Teitelboim, \emph{{Asymptotically anti-De Sitter Spaces}},
  \href{http://dx.doi.org/10.1007/BF01205790}{\emph{Commun. Math. Phys.} {\bf
  98} (1985) 391--424}.

\bibitem{Banados:1992wn}
M.~Banados, C.~Teitelboim and J.~Zanelli, \emph{{The Black hole in
  three-dimensional space-time}},
  \href{http://dx.doi.org/10.1103/PhysRevLett.69.1849}{\emph{Phys. Rev. Lett.}
  {\bf 69} (1992) 1849--1851}, [\href{http://arxiv.org/abs/hep-th/9204099}{{\tt
  hep-th/9204099}}].

\bibitem{Banados:1998wy}
M.~Banados, \emph{{Embeddings of the Virasoro algebra and black hole entropy}},
  \href{http://dx.doi.org/10.1103/PhysRevLett.82.2030}{\emph{Phys. Rev. Lett.}
  {\bf 82} (1999) 2030--2033}, [\href{http://arxiv.org/abs/hep-th/9811162}{{\tt
  hep-th/9811162}}].

\bibitem{Strominger:1997eq}
A.~Strominger, \emph{{Black hole entropy from near horizon microstates}},
  \href{http://dx.doi.org/10.1088/1126-6708/1998/02/009}{\emph{JHEP} {\bf 02}
  (1998) 009}, [\href{http://arxiv.org/abs/hep-th/9712251}{{\tt
  hep-th/9712251}}].

\bibitem{Guica:2008mu}
M.~Guica, T.~Hartman, W.~Song and A.~Strominger, \emph{{The Kerr/CFT
  Correspondence}},
  \href{http://dx.doi.org/10.1103/PhysRevD.80.124008}{\emph{Phys. Rev.} {\bf
  D80} (2009) 124008}, [\href{http://arxiv.org/abs/0809.4266}{{\tt
  0809.4266}}].

\bibitem{Barnich:2010eb}
G.~Barnich and C.~Troessaert, \emph{{Aspects of the BMS/CFT correspondence}},
  \href{http://dx.doi.org/10.1007/JHEP05(2010)062}{\emph{JHEP} {\bf 05} (2010)
  062}, [\href{http://arxiv.org/abs/1001.1541}{{\tt 1001.1541}}].

\bibitem{Barnich:2011ct}
G.~Barnich and C.~Troessaert, \emph{{Supertranslations call for
  superrotations}}, {\emph{PoS} (2010) 010},
  [\href{http://arxiv.org/abs/1102.4632}{{\tt 1102.4632}}].

\bibitem{Barnich:2011mi}
G.~Barnich and C.~Troessaert, \emph{{BMS charge algebra}},
  \href{http://dx.doi.org/10.1007/JHEP12(2011)105}{\emph{JHEP} {\bf 12} (2011)
  105}, [\href{http://arxiv.org/abs/1106.0213}{{\tt 1106.0213}}].

\bibitem{Barnich:2011ty}
G.~Barnich and P.-H. Lambert, \emph{{A note on the Newman-Unti group}},
  {\emph{Adv. Math. Phys.} {\bf 2012} (2012) 197385},
  [\href{http://arxiv.org/abs/1102.0589}{{\tt 1102.0589}}].

\bibitem{Koga:2001vq}
J.-i. Koga, \emph{{Asymptotic symmetries on Killing horizons}},
  \href{http://dx.doi.org/10.1103/PhysRevD.64.124012}{\emph{Phys. Rev.} {\bf
  D64} (2001) 124012}, [\href{http://arxiv.org/abs/gr-qc/0107096}{{\tt
  gr-qc/0107096}}].

\bibitem{Chung:2010ge}
H.~Chung, \emph{{Asymptotic Symmetries of Rindler Space at the Horizon and Null
  Infinity}}, \href{http://dx.doi.org/10.1103/PhysRevD.82.044019}{\emph{Phys.
  Rev.} {\bf D82} (2010) 044019}, [\href{http://arxiv.org/abs/1005.0820}{{\tt
  1005.0820}}].

\bibitem{Majhi:2012tf}
B.~R. Majhi and T.~Padmanabhan, \emph{{Noether current from the surface term of
  gravitational action, Virasoro algebra and horizon entropy}},
  \href{http://dx.doi.org/10.1103/PhysRevD.86.101501}{\emph{Phys. Rev.} {\bf
  D86} (2012) 101501}, [\href{http://arxiv.org/abs/1204.1422}{{\tt
  1204.1422}}].

\bibitem{Donnay:2015abr}
L.~Donnay, G.~Giribet, H.~A. Gonzalez and M.~Pino, \emph{{Supertranslations and
  Superrotations at the Black Hole Horizon}},
  \href{http://dx.doi.org/10.1103/PhysRevLett.116.091101}{\emph{Phys. Rev.
  Lett.} {\bf 116} (2016) 091101}, [\href{http://arxiv.org/abs/1511.08687}{{\tt
  1511.08687}}].

\bibitem{Donnay:2016ejv}
L.~Donnay, G.~Giribet, H.~A. Gonz‡lez and M.~Pino, \emph{{Extended Symmetries
  at the Black Hole Horizon}},
  \href{http://dx.doi.org/10.1007/JHEP09(2016)100}{\emph{JHEP} {\bf 09} (2016)
  100}, [\href{http://arxiv.org/abs/1607.05703}{{\tt 1607.05703}}].

\bibitem{Bondi:1962px}
H.~Bondi, M.~G.~J. van~der Burg and A.~W.~K. Metzner, \emph{{Gravitational
  waves in general relativity. 7. Waves from axisymmetric isolated systems}},
  \href{http://dx.doi.org/10.1098/rspa.1962.0161}{\emph{Proc. Roy. Soc. Lond.}
  {\bf A269} (1962) 21--52}.

\bibitem{Sachs:1962wk}
R.~K. Sachs, \emph{{Gravitational waves in general relativity. 8. Waves in
  asymptotically flat space-times}},
  \href{http://dx.doi.org/10.1098/rspa.1962.0206}{\emph{Proc. Roy. Soc. Lond.}
  {\bf A270} (1962) 103--126}.

\bibitem{Sachs:1962zza}
R.~Sachs, \emph{{Asymptotic symmetries in gravitational theory}},
  \href{http://dx.doi.org/10.1103/PhysRev.128.2851}{\emph{Phys. Rev.} {\bf 128}
  (1962) 2851--2864}.

\bibitem{Ashtekar:1981sf}
A.~Ashtekar, \emph{{Asymptotic Quantization of the Gravitational Field}},
  \href{http://dx.doi.org/10.1103/PhysRevLett.46.573}{\emph{Phys. Rev. Lett.}
  {\bf 46} (1981) 573--576}.

\bibitem{Ashtekar:1981bq}
A.~Ashtekar and M.~Streubel, \emph{{Symplectic Geometry of Radiative Modes and
  Conserved Quantities at Null Infinity}},
  \href{http://dx.doi.org/10.1098/rspa.1981.0109}{\emph{Proc. Roy. Soc. Lond.}
  {\bf A376} (1981) 585--607}.

\bibitem{Ashtekar:1987tt}
A.~Ashtekar, \emph{Asymptotic Quantization: based on 1984 naples lectures}.
\newblock Naples, Italy: Bibliopolis 107 p. (monographs and textbooks in
  physical science, 2), 1987.

\bibitem{PROP:PROP19780260902}
V.~P. Frolov, \emph{Null surface quantization and quantum field theory in
  asymptotically flat space-time},
  \href{http://dx.doi.org/10.1002/prop.19780260902}{\emph{Fortschritte der
  Physik} {\bf 26} (1978) 455--500}.

\bibitem{Frolov1979}
V.~P. Frolov, \emph{{Null surface quantization and quantum theory of massless
  fields in asymptotically flat space-time}},
  \href{http://dx.doi.org/10.1007/BF00756663}{\emph{General Relativity and
  Gravitation} {\bf 10} (1979) 833--852}.

\bibitem{weinbergSFT}
S.~Weinberg, \emph{{Infrared photons and gravitons}},
  \href{http://dx.doi.org/10.1103/PhysRev.140.B516}{\emph{Physical Review} {\bf
  140} (1965) }.

\bibitem{Strominger:2013jfa}
A.~Strominger, \emph{{On BMS Invariance of Gravitational Scattering}},
  \href{http://dx.doi.org/10.1007/JHEP07(2014)152}{\emph{JHEP} {\bf 07} (2014)
  152}, [\href{http://arxiv.org/abs/1312.2229}{{\tt 1312.2229}}].

\bibitem{He:2014laa}
T.~He, V.~Lysov, P.~Mitra and A.~Strominger, \emph{{BMS supertranslations and
  WeinbergÕs soft graviton theorem}},
  \href{http://dx.doi.org/10.1007/JHEP05(2015)151}{\emph{JHEP} {\bf 05} (2015)
  151}, [\href{http://arxiv.org/abs/1401.7026}{{\tt 1401.7026}}].

\bibitem{Pasterski:2015tva}
S.~Pasterski, A.~Strominger and A.~Zhiboedov, \emph{{New Gravitational
  Memories}}, \href{http://dx.doi.org/10.1007/JHEP12(2016)053}{\emph{JHEP} {\bf
  12} (2016) 053}, [\href{http://arxiv.org/abs/1502.06120}{{\tt 1502.06120}}].

\bibitem{Hawking:2015qqa}
S.~W. Hawking, \emph{{The Information Paradox for Black Holes}},
  \href{http://arxiv.org/abs/1509.01147}{{\tt 1509.01147}}.

\bibitem{Hawking:2016sgy}
S.~W. Hawking, M.~J. Perry and A.~Strominger, \emph{{Superrotation Charge and
  Supertranslation Hair on Black Holes}},
  \href{http://arxiv.org/abs/1611.09175}{{\tt 1611.09175}}.

\bibitem{Averin:2016hhm}
A.~Averin, G.~Dvali, C.~Gomez and D.~Lust, \emph{{Goldstone origin of black
  hole hair from supertranslations and criticality}},
  \href{http://dx.doi.org/10.1142/S0217732316300457}{\emph{Mod. Phys. Lett.}
  {\bf A31} (2016) 1630045}, [\href{http://arxiv.org/abs/1606.06260}{{\tt
  1606.06260}}].

\bibitem{Averin:2016ybl}
A.~Averin, G.~Dvali, C.~Gomez and D.~Lust, \emph{{Gravitational Black Hole Hair
  from Event Horizon Supertranslations}},
  \href{http://dx.doi.org/10.1007/JHEP06(2016)088}{\emph{JHEP} {\bf 06} (2016)
  088}, [\href{http://arxiv.org/abs/1601.03725}{{\tt 1601.03725}}].

\bibitem{Hawking:2016msc}
S.~W. Hawking, M.~J. Perry and A.~Strominger, \emph{{Soft Hair on Black
  Holes}}, \href{http://dx.doi.org/10.1103/PhysRevLett.116.231301}{\emph{Phys.
  Rev. Lett.} {\bf 116} (2016) 231301},
  [\href{http://arxiv.org/abs/1601.00921}{{\tt 1601.00921}}].

\bibitem{Mirbabayi:2016axw}
M.~Mirbabayi and M.~Porrati, \emph{{Dressed Hard States and Black Hole Soft
  Hair}}, \href{http://dx.doi.org/10.1103/PhysRevLett.117.211301}{\emph{Phys.
  Rev. Lett.} {\bf 117} (2016) 211301},
  [\href{http://arxiv.org/abs/1607.03120}{{\tt 1607.03120}}].

\bibitem{Bousso:2017dny}
R.~Bousso and M.~Porrati, \emph{{Soft Hair as a Soft Wig}},
  \href{http://arxiv.org/abs/1706.00436}{{\tt 1706.00436}}.

\bibitem{Bousso:2017rsx}
R.~Bousso and M.~Porrati, \emph{{Observable Supertranslations}},
  \href{http://arxiv.org/abs/1706.09280}{{\tt 1706.09280}}.

\bibitem{Eling:2016xlx}
C.~Eling and Y.~Oz, \emph{{On the Membrane Paradigm and Spontaneous Breaking of
  Horizon BMS Symmetries}},
  \href{http://dx.doi.org/10.1007/JHEP07(2016)065}{\emph{JHEP} {\bf 07} (2016)
  065}, [\href{http://arxiv.org/abs/1605.00183}{{\tt 1605.00183}}].

\bibitem{Setare:2016msj}
M.~R. Setare and H.~Adami, \emph{{BMS type symmetries at null-infinity and near
  horizon of non-extremal black holes}},
  \href{http://dx.doi.org/10.1140/epjc/s10052-016-4548-0}{\emph{Eur. Phys. J.}
  {\bf C76} (2016) 687}, [\href{http://arxiv.org/abs/1609.05736}{{\tt
  1609.05736}}].

\bibitem{Ashtekar:1998sp}
A.~Ashtekar, C.~Beetle and S.~Fairhurst, \emph{{Isolated horizons: A
  Generalization of black hole mechanics}},
  \href{http://dx.doi.org/10.1088/0264-9381/16/2/027}{\emph{Class. Quant.
  Grav.} {\bf 16} (1999) L1--L7},
  [\href{http://arxiv.org/abs/gr-qc/9812065}{{\tt gr-qc/9812065}}].

\bibitem{Ashtekar:1999wa}
A.~Ashtekar, A.~Corichi and K.~Krasnov, \emph{{Isolated horizons: The Classical
  phase space}}, {\emph{Adv. Theor. Math. Phys.} {\bf 3} (1999) 419--478},
  [\href{http://arxiv.org/abs/gr-qc/9905089}{{\tt gr-qc/9905089}}].

\bibitem{Ashtekar:2001is}
A.~Ashtekar, C.~Beetle and J.~Lewandowski, \emph{{Mechanics of rotating
  isolated horizons}},
  \href{http://dx.doi.org/10.1103/PhysRevD.64.044016}{\emph{Phys. Rev.} {\bf
  D64} (2001) 044016}, [\href{http://arxiv.org/abs/gr-qc/0103026}{{\tt
  gr-qc/0103026}}].

\bibitem{Ashtekar:2001jb}
A.~Ashtekar, C.~Beetle and J.~Lewandowski, \emph{{Geometry of generic isolated
  horizons}}, \href{http://dx.doi.org/10.1088/0264-9381/19/6/311}{\emph{Class.
  Quant. Grav.} {\bf 19} (2002) 1195--1225},
  [\href{http://arxiv.org/abs/gr-qc/0111067}{{\tt gr-qc/0111067}}].

\bibitem{Ashtekar:2000hw}
A.~Ashtekar, S.~Fairhurst and B.~Krishnan, \emph{{Isolated horizons:
  Hamiltonian evolution and the first law}},
  \href{http://dx.doi.org/10.1103/PhysRevD.62.104025}{\emph{Phys. Rev.} {\bf
  D62} (2000) 104025}, [\href{http://arxiv.org/abs/gr-qc/0005083}{{\tt
  gr-qc/0005083}}].

\bibitem{Geroch1977}
R.~Geroch, \emph{Asymptotic Structure of Space-Time}.
\newblock Springer US, Boston, MA, 1977.

\bibitem{Ashtekar:1981hw}
A.~Ashtekar, \emph{{Radiative Degrees of Freedom of the Gravitational Field in
  Exact General Relativity}},
  \href{http://dx.doi.org/10.1063/1.525169}{\emph{J. Math. Phys.} {\bf 22}
  (1981) 2885--2895}.

\bibitem{Ashtekar:2014zsa}
A.~Ashtekar, \emph{{Geometry and Physics of Null Infinity}},
  \href{http://arxiv.org/abs/1409.1800}{{\tt 1409.1800}}.

\bibitem{Benguria:1976in}
R.~Benguria, P.~Cordero and C.~Teitelboim, \emph{{Aspects of the Hamiltonian
  Dynamics of Interacting Gravitational Gauge and Higgs Fields with
  Applications to Spherical Symmetry}},
  \href{http://dx.doi.org/10.1016/0550-3213(77)90426-6}{\emph{Nucl. Phys.} {\bf
  B122} (1977) 61--99}.

\bibitem{10230751794}
A.~D. Rendall, \emph{{Reduction of the Characteristic Initial Value Problem to
  the Cauchy Problem and Its Applications to the Einstein Equations}},
  {\emph{Proc. Roy. Soc. Lond.} {\bf 427} (1990) 221--239}.

\bibitem{Mars:2013qaa}
M.~Mars, \emph{{Constraint equations for general hypersurfaces and applications
  to shells}}, \href{http://dx.doi.org/10.1007/s10714-013-1579-9}{\emph{Gen.
  Rel. Grav.} {\bf 45} (2013) 2175--2221},
  [\href{http://arxiv.org/abs/1303.4575}{{\tt 1303.4575}}].

\bibitem{Gourgoulhon:2005ng}
E.~Gourgoulhon and J.~L. Jaramillo, \emph{{A 3+1 perspective on null
  hypersurfaces and isolated horizons}},
  \href{http://dx.doi.org/10.1016/j.physrep.2005.10.005}{\emph{Phys. Rept.}
  {\bf 423} (2006) 159--294}, [\href{http://arxiv.org/abs/gr-qc/0503113}{{\tt
  gr-qc/0503113}}].

\bibitem{BlauGR}
M.~Blau, \emph{Lecture Notes in General Relativity}.
\newblock 2016.

\bibitem{damour}
T.~Damour, \emph{Surface effects in black hole physics}, {\emph{North-Holand
  Publishing Company} (1979) }.

\bibitem{Price:1986yy}
R.~H. Price and K.~S. Thorne, \emph{{Membrane Viewpoint on Black Holes:
  Properties and Evolution of the Stretched Horizon}},
  \href{http://dx.doi.org/10.1103/PhysRevD.33.915}{\emph{Phys. Rev.} {\bf D33}
  (1986) 915--941}.

\bibitem{Thorne:1986iy}
K.~S. Thorne, R.~H. Price and D.~A. Macdonald, eds., \emph{{Black Holes: The
  Membrane Paradigm}}.
\newblock New Haven, USA: Yale Univ. Pr., 1986.

\bibitem{Mars:1993mj}
M.~Mars and J.~M.~M. Senovilla, \emph{{Geometry of general hypersurfaces in
  space-time: Junction conditions}},
  \href{http://dx.doi.org/10.1088/0264-9381/10/9/026}{\emph{Class. Quant.
  Grav.} {\bf 10} (1993) 1865--1897},
  [\href{http://arxiv.org/abs/gr-qc/0201054}{{\tt gr-qc/0201054}}].

\bibitem{Newman:1981fn}
E.~T. Newman and K.~P. Tod, \emph{{Asymptotically Flat Space-times}}, .

\bibitem{Chandrasekhar:1985kt}
S.~Chandrasekhar, \emph{{The mathematical theory of black holes}},  in
  \emph{{Oxford, UK: Clarendon (1992) 646 p., Oxford, UK: Clarendon}}, p.~646p,
  1985.
\newblock \href{http://dx.doi.org/10.1007/978-94-009-6469-3_2}{DOI}.

\bibitem{Szekeres:1965ux}
P.~Szekeres, \emph{{The Gravitational compass}},
  \href{http://dx.doi.org/10.1063/1.1704788}{\emph{J. Math. Phys.} {\bf 6}
  (1965) 1387--1391}.

\bibitem{Mars:2005ca}
M.~Mars, \emph{{First and second order perturbations of hypersurfaces}},
  \href{http://dx.doi.org/10.1088/0264-9381/22/16/013}{\emph{Class. Quant.
  Grav.} {\bf 22} (2005) 3325--3348},
  [\href{http://arxiv.org/abs/gr-qc/0507005}{{\tt gr-qc/0507005}}].

\bibitem{Nakahara:2003nw}
M.~Nakahara, \emph{Geometry, topology and physics}.
\newblock Boca Raton, USA: Taylor and Francis, 2003.

\bibitem{schouten1954ricci}
J.~Schouten, \emph{Ricci-calculus, an introduction to tensor analysis and its
  geometrical applications.}
\newblock Springer-Verlag, 1954.

\bibitem{Ashtekar:1999yj}
A.~Ashtekar, C.~Beetle and S.~Fairhurst, \emph{{Mechanics of isolated
  horizons}}, \href{http://dx.doi.org/10.1088/0264-9381/17/2/301}{\emph{Class.
  Quant. Grav.} {\bf 17} (2000) 253--298},
  [\href{http://arxiv.org/abs/gr-qc/9907068}{{\tt gr-qc/9907068}}].

\bibitem{Wald:1984rg}
R.~M. Wald, \emph{General Relativity}.
\newblock Chicago, Usa: Univ. Pr., 1984.
\newblock 10.7208/chicago/9780226870373.001.0001.

\bibitem{shockwave}
G.~Milans~del Bosch, B.~Reina and K.~Sousa, \emph{No-go theorem for
  supertranslation hair implants}, .

\bibitem{Stewart:1990fm}
J.~M. Stewart, \emph{{Perturbations of Friedmann-Robertson-Walker cosmological
  models}}, \href{http://dx.doi.org/10.1088/0264-9381/7/7/013}{\emph{Class.
  Quant. Grav.} {\bf 7} (1990) 1169--1180}.

\bibitem{ehlers1979isolated}
P.~Babington, \emph{Isolated gravitating systems in general relativity: Varenna
  on Lake Como, Villa Monastero, 28th June-10th July 1976}.
\newblock Elsevier Science and Technology Books, 1979.

\end{thebibliography}\endgroup

\end{document}